\documentclass[pra,amsmath,amssymb,amsfonts,nofootinbib]{revtex4}

\setlength{\textwidth}{13cm} \setlength{\textheight}{21cm}
\addtolength{\headsep}{2.5cm} \addtolength{\hoffset}{2cm}
\usepackage{bm,graphicx,mathrsfs,color}

\usepackage{amsmath,bbm}
\usepackage{amsfonts,amssymb}
\usepackage{graphicx}
\usepackage{epsfig}
\usepackage{times}
\usepackage{verbatim}%\textbf{}
\usepackage{enumerate}

%begin: commands
\newcommand{\be}{\begin{equation}}
\newcommand{\ee}{\end{equation}}
\newcommand{\bea}{\begin{eqnarray}}
\newcommand{\eea}{\end{eqnarray}}
\newcommand{\ba}{\begin{align}}
\newcommand{\ea}{\end{align}}

\newcommand{\1}{\mathbbm{1}}
\newcommand{\id}{{\rm id}}

\newcommand{\ket}[1]{\left | \, #1 \right\rangle}
\newcommand{\bra}[1]{\left \langle #1 \, \right |}

\newcommand{\avr}[1]{\left \langle#1 \right \rangle}

\newcommand{\tr}[1]{{\rm tr}\left[{#1}\right]}
\newcommand{\Tr}{\rm tr}

\newcommand{\raw}{\rightarrow}

\newcommand{\bR}{\mathbbm{R}}
\newcommand{\bN}{\mathbbm{N}}

\newcommand{\bL}{\mathbbm{L}}
\newcommand{\bE}{\mathbbm{E}}

\newcommand{\bZ}{\mathbbm{Z}}

\newcommand{\cH}{\mathcal{H}}
\newcommand{\cB}{\mathcal{B}}
\newcommand{\cR}{\mathcal{R}}

\newcommand{\cD}{\mathcal{D}}
\newcommand{\cQ}{\mathcal{Q}}
\newcommand{\cL}{\mathcal{L}}
\newcommand{\cK}{\mathcal{K}}
\newcommand{\cA}{\mathcal{A}}
\newcommand{\cS}{\mathcal{S}}
\newcommand{\cW}{\mathcal W}

\newcommand{\half}{\frac{1}{2}}

\newcommand{\Ent}{{\rm Ent}}
\newcommand{\Var}{{\rm Var}}
\newcommand{\Cov}{{\rm Cov}}

\newtheorem{theorem}{Theorem}
\newtheorem{lemma}[theorem]{Lemma}

\newtheorem{corollary}[theorem]{Corollary}
\newtheorem{proposition}[theorem]{Proposition}
\newtheorem{definition}[theorem]{Definition}

\def\Proof{\noindent\textsc{Proof:}}
\def\proof{\Proof}
\def\qed{\leavevmode\unskip\penalty9999 \hbox{}\nobreak\hfill
     \quad\hbox{\leavevmode  \hbox to.77778em{%
               \hfil\vrule   \vbox to.675em%
               {\hrule width.6em\vfil\hrule}\vrule\hfil}}
     \par\vskip3pt}
    {\hspace*{\fill}$\Box$\vspace{1.5ex}\par}

% end: commands
\newcommand{\mjk}[1]{\textcolor{black}{#1}}

\begin{document}

\title{\sc{\Large Quantum Gibbs samplers: the commuting case}}

\author{Michael J. Kastoryano$^{1}$  and Fernando G. S. L. Brand\~ao$^{2, 3}$}
\address{
$^1$ Dahlem Center for Complex Quantum Systems, Freie Universit\"at Berlin, 14195 Berlin, Germany\\
$^2$ Quantum Architectures and Computation Group, Microsoft Research, Redmond, US  \\
$^3$ Department of Computer Science, University College London, London UK 
}

\date{\today}

\begin{abstract} 
We analyze the problem of preparing quantum Gibbs states of lattice spin Hamiltonians with local and commuting terms on a quantum computer and in nature. Our central result is an equivalence between the behavior of correlations in the Gibbs state and the mixing time of the semigroup which drives the system to thermal equilibrium (the Gibbs sampler). We introduce a framework for analyzing the correlation and mixing properties of quantum Gibbs states and quantum Gibbs samplers, which is rooted in the theory of non-commutative $\bL_p$ spaces. We consider two distinct classes of Gibbs samplers, one of them being the well-studied Davies generator modelling the dynamics of a system due to weak-coupling with a large Markovian environment. We show that their spectral gap is independent of system size if, and only if, a certain strong form of clustering of correlations holds in the Gibbs state. Therefore every Gibbs state of a commuting Hamiltonian that satisfies clustering of correlations in this strong sense can be prepared efficiently on a quantum computer. As concrete applications of our formalism, we show that for every one-dimensional lattice system, or for systems in lattices of any dimension at temperatures above a certain threshold, the Gibbs samplers of commuting Hamiltonians are always gapped, giving an efficient way of preparing the associated Gibbs states on a quantum computer.

\end{abstract}
\maketitle

\tableofcontents
%%%%%%%%%%%%%%%%%%%%%%%%%%%%%%%%%%%%%%%%%%%%%%%%%%%%%%%%%%%%%%%%%%%%%
\section{Introduction}

Physical systems very often are in thermal equilibrium. Statistical mechanics provides a microscopic theory justifying the relevance of thermal states of matter. However, fully understanding the ubiquity of this class of states from the laws of quantum theory remains an important topic in theoretical physics. Indeed the problem of thermalization in quantum systems has recently generated a lot of interest, in part because of the new set of tools available from the field of quantum information theory \cite{thermalization1,thermalization2,thermalization3, thermalization4}. The problem can be broken up into two sets of questions: (i) under what conditions does a system thermalize in the long time limit, and (ii) assuming a system does eventually thermalize, how much time does one have to wait before thermalization is reached? Our work is concerned with the latter question in the setting of quantum lattice spin systems.  

The problem of the time of thermalization is also of practical relevance in the context of quantum simulators, where one wants to emulate the behavior of a real physical system by simulating an idealization of it on a classical or quantum computer. Given that many of the systems which one would want to simulate are thermal, it is an important task to develop simulation and sampling algorithms that can prepare large classes of thermal states of physically relevant Hamiltonians. A large body of work has already been done on the classical problem, starting with the development and analysis of Gibbs sampling algorithms called Glauber dynamics, which include the Metropolis and Heat-bath algorithms as spacial cases. Nowadays, there are dozens of variants of classical Gibbs samplers which find applications in a variety of fields of physics, computer science, and theoretical chemistry \cite{MonteCarloBook}.  A common feature of many of these algorithms is that they provide reliable results in practice, but a systematic certification of their accuracy and efficiency is often elusive. Although a very hard problem in general, estimating the convergence time of classical Gibbs samplers has seen a number of breakthroughs in the past few decades. These methods, which are rooted in the analysis of Markov Chain mixing times, are also closely related to problems in combinatorial optimization, with applications in numerous fields; see \cite{Diaconis,Peres} for recent surveys.

The rigorous analysis of classical Gibbs sampling algorithms was tied to the development of a rigorous theory of thermodynamic quantities, which consists in analyzing the behavior of finite systems of larger and larger size in order to infer the physics in the thermodynamic limit. This is made possible because the influence of the infinite system on a subregion of the lattice can be encoded in the boundary of the region in a one-to-one fashion. This identity is often called \textit{Dobrushin-Lanford-Ruelle} (DLR) theory \cite{Dobrushin,Ruelle}, and is the technical cornerstone of a lot of the early rigorous results in lattice Gibbs states. Using DLR theory, it has been possible to properly characterize phase transitions in classical lattice spin models. A series of seminal results have shown that the existence of a unique Gibbs state in the thermodynamic limit is related both to exponentially decaying correlations between local observables and rapidly mixing dynamics to thermal equilibrium \cite{MartinelliBook,MartinelliReview,WeakMixingStrongMixing,Guionnet,Yoshida1,Yoshida2,Cesi,Bertini}.

The purpose of this work is to introduce a framework for analyzing Gibbs samplers in the setting of quantum systems, and to explore to what extent the classical results (equivalences) generalize to quantum lattice spin systems. We build upon a growing body of work on the theory of mixing times of quantum channels \cite{chi2,LogSobolev}. 

Throughout this work we will restrict ourselves to commuting Hamiltonians. It is worth noting that the case of commuting Hamiltonians does not effectively reduce to the classical problem, as highly non-classical phenomenon such as topological quantum order can occur. In particular, this setting encompasses all stabilizer Hamiltonians, which have been a useful playground for exploring the unique quantum features of many-body systems.

The relevance of the problem we analyse is two-fold. First, we consider a class of Gibbs samplers, called Davies generators \cite{Dav79}, which can be derived from the weak coupling of a finite quantum system to a large thermal bath. Hence our analysis pertains to the time it takes to reach thermal equilibrium in naturally occurring systems. Secondly, all Gibbs samplers which we consider are local and bounded maps, and therefore can be prepared by dissipative engineering or digital simulation on quantum computers, or quantum simulators \cite{KBGKE11}. Thus, our results can also be understood as an analysis of the efficiency of quantum Gibbs sampling.

Previous efforts at proposing quantum Gibbs samplers have typically fallen into two categories: those that have a certified runtime, but that is exponential in the system size \cite{GibbsSampler1,GibbsSampler2,GibbsSampler3}, and those that can be implemented on a quantum computer efficiently, but where no bounds on their convergence are currently known \cite{Q_Metropolis, YAG12, ORR12}. Our Gibbs samplers have the benefit that they are very simple, and hence amenable to analytic study. We reiterate however that they have the drawback of only being properly defined for Hamiltonians with local commuting terms.

\subsection{Summary of results}

In order to present our main results, we need to spend some time defining the framework and the quantities involved, which are rooted in the theory of non-commutative $\bL_p$ spaces. This has value in its own right, as a systematic study of thermal states of quantum lattice systems in the spirit of DLR theory has not yet been undertaken. However we only achieve it partially, and comment along the way on the limitations of a full generalization of the classical theory.

After a brief recollection of the formal framework and of the setting of dissipative quantum systems, we introduce a class of maps called \text{conditional expectations} which serve as local quasi-projectors onto the Gibbs state of the system. These maps play a central role in our analysis. We identify two special classes of conditional expectations: the first is purely dynamical and inherits many of the properties of the underlying dissipative generator; the second is purely static, and only depends on the reference (Gibbs) state of the system. We prove that both are local maps when the underlying Hamiltonian is commuting (see Propositions \ref{thm:liouvilleprojector} and \ref{thm:localprojector}).

We go on to define quantum lattice Gibbs states, and introduce two classes of Gibbs samplers: \textit{Davies Generator} and \textit{Heat-Bath Generator}. We will also call them Davies Gibbs sampler and Heat-Bath Gibbs sampler, respectively. The Davies generator is obtained from a canonical weak-coupling between a system and a large thermal bath, whereas the heat-bath generator is constructed in a manner reminiscent of the classical heat-bath Monte-Carlo algorithm. The basic properties of these maps are summarized and collected in Propositions \ref{lem:DaviesGen} and \ref{lem:HBGen}. Both have the important property that the associated generators are local (see Section \ref{sec:gibbsSS}). As a result, they can be implemented efficiently on a quantum computer using the method of Ref. \cite{KBGKE11}.

The main purpose of the paper is to show an equivalence between the convergence time of the Gibbs sampler and the correlation behaviour of the Gibbs state. The analogous classical equivalence builds very heavily on the DLR theory of boundary conditions.  As a naive extension of the DLR theory does not hold for quantum systems \cite{WernerBC}, we are lead to define a different notion of clustering (which we call \textit{strong clustering}), that incorporates the \textit{strong mixing}  (or complete analyticity) condition for classical systems. This condition relies on a conditional covariance, which restricts attention to a subset of the lattice. The conditional covariance depends on a specific choice of conditional expectation. We show that the strong clustering condition implies the standard clustering of correlation (which we call \textit{weak clustering}) condition that is usually considered in quantum lattice systems (see Corollary \ref{cor:weakcluster}). We also flesh out the connection between our notions of clustering and the local indistinguishability of two Gibbs states differing only by a distant perturbation (\ref{StrongCLusLTQO}). 

Having introduced the framework of Gibbs samplers and defined what we mean by clustering of correlations, we set out to prove the main theorem of our paper (see Theorems \ref{thm:main} and \ref{thm:converse} for a precise formulation):

\begin{theorem}[informal]  \label{thminformal1}
Both the Davies Gibbs sampler and Heat-Bath Gibbs sampler of commuting local Hamiltonians have a gap independent of the system size if, and only if, the Gibbs state satisfies strong clustering. 
\end{theorem}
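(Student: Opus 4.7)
The plan is to prove the equivalence in two separate directions, with the forward implication (gap implies clustering) being the shorter one. Assuming a uniform spectral gap $\lambda > 0$ of the Gibbs sampler, I would translate this into exponential decay of the semigroup $e^{t\cL}$ towards the Gibbs state in the appropriate non-commutative $\bL_2$ (KMS) norm. Because the Liouvillians considered here are \emph{local}, a Lieb--Robinson-type bound for bounded-range dissipative dynamics implies that $e^{t\cL}(O_A)$ remains quasi-localized inside a ball around $A$ of radius proportional to $t$. Combining this quasi-locality with the exponential decay from the gap, one can express the conditional covariance $\Cov(O_A,O_B)$ as an integral of semigroup actions, split the integral into a short-time part (controlled by locality of the dynamics, producing no contribution when the regions are well separated) and a long-time part (controlled by the gap), and obtain strong clustering. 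The choice of conditional expectation here is the dynamical one of Proposition \ref{thm:liouvilleprojector}, which plays well with the semigroup.

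For the harder direction, strong clustering $\Rightarrow$ gap, the plan is a recursive block-decomposition argument in the spirit of Martinelli--Olivieri, adapted to the non-commutative setting using the locality of the conditional expectations established in Propositions \ref{thm:liouvilleprojector} and \ref{thm:localprojector}. I would set up a Poincar\'e-style recursion as follows: given a region $R = A \cup B$ with a fixed overlap width $\ell$, write the variance with respect to the conditional expectation $\cE_R$ as a sum of variances with respect to $\cE_A$ and $\cE_B$, plus a residual term that, after quantum analogues of standard manipulations, involves covariances between observables in $A\setminus B$ and $B\setminus A$ conditioned on the complement of the overlap. Strong clustering bounds this residual term by $\varepsilon(\ell)$ with $\varepsilon(\ell) \to 0$ exponentially. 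Iterating this decomposition over $O(\log |\Lambda|)$ scales of doubling block sizes, and using a direct diagonalization for a constant-size base block, would yield a gap uniform in system size, provided the geometric series $\sum_k \varepsilon(\ell_k)$ converges — which is ensured by the exponential decay from strong clustering.

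The main obstacle is executing the variance decomposition step in the non-commutative setting. In the classical case, the argument hinges on the factorization $\Var_R = \Var_R\cE_A + \cE_A\Var_R(\cdot \mid \cA)$ and exchangeability of nested conditional expectations, both consequences of the Markov property for Gibbs measures. Quantumly, the corresponding conditional expectations onto overlapping subalgebras do not commute, and one only obtains an \emph{approximate} variance-decomposition identity whose error must be absorbed into the conditional-covariance term controlled by strong clustering. Making this work requires careful operator-algebraic bookkeeping in the KMS or GNS inner product, exploiting the fact that for commuting Hamiltonians the Gibbs state has a product-like structure inside conditioned regions. A secondary technical point is that the arguments for the Davies generator and for the Heat-Bath generator must be handled in parallel but with different Dirichlet forms; since both satisfy detailed balance with respect to the same Gibbs state and have the same local conditional expectations, the block-decomposition recursion applies to both once one has verified comparability of their local Dirichlet forms to the conditional variance, which I would do by an explicit small-block comparison.
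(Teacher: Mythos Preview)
Your plan for the direction \emph{strong clustering $\Rightarrow$ gap} is essentially the paper's own argument: a Martinelli--Olivieri/Bertini block recursion driven by a variance-decomposition inequality (Proposition~\ref{Variances}) and the averaging-over-decompositions trick (Lemma~\ref{lem:iteration}). One remark: the ``main obstacle'' you flag---that nested conditional expectations do not commute and one only gets an approximate factorization---is handled in the paper more cheaply than you anticipate. No Markov property or exchangeability is needed; the inequality $\Var_{A\cup B}(f)\leq (1-2\epsilon)^{-1}(\Var_A(f)+\Var_B(f))$ follows from expanding the nonnegative quantity $\|(\id-\bE_{A\cup B})(\id-\bE_A-\bE_B)(f)\|_{2,\rho}^2$ and using only contractivity of $(\id-\bE_{A\cup B})$ in $\bL_2$. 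The cross term that appears is precisely $\Cov_{A\cup B}(\bE_A(f),\bE_B(f))$, which is exactly what strong clustering controls. So the non-commutativity never bites.

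For the direction \emph{gap $\Rightarrow$ strong clustering} your route differs genuinely from the paper's. You propose a Lieb--Robinson bound for the dissipative semigroup combined with the spectral-gap decay of $e^{t\cL}$. The paper instead maps the reversible Liouvillian to a local \emph{frustration-free Hamiltonian} $-\hat\cL$ on the doubled Hilbert space (Table~1) and applies the \emph{detectability lemma} of Aharonov--Arad--Landau--Vazirani: the layered product $\hat\Pi^l$ of local projectors approximates $\hat\bE_{A\cup B}$ exponentially well in $l$, and the light-cone structure of $\hat\Pi^l$ lets one split it as $\hat\bE_{in}\hat\bE_{out}$ acting on disjoint regions once $l\lesssim d(A\setminus B,B\setminus A)$. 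Both strategies are viable, but note a subtlety in yours: strong clustering is a statement about the \emph{conditional} covariance $\Cov_{A\cup B}$ for arbitrary subregions $A\cup B\subsetneq\Lambda$, so the relevant input is the gap of the \emph{restricted} generator $\cL_{A\cup B}$ and Lieb--Robinson for $e^{t\cL_{A\cup B}}$, not the global $e^{t\cL_\Lambda}$. Your write-up uses the global semigroup and the global covariance, which as written yields only weak clustering (cf.\ the references in the remark after Corollary~\ref{cor:weakcluster}). The detectability-lemma route makes this restriction to $A\cup B$ automatic, which is what it buys; your approach would need to be rephrased with $\cL_{A\cup B}$ throughout.
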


The gap is defined in Def. \ref{def:gap} and is related to the rate of convergence of the Gibbs sampler to the Gibbs state. Since both Gibbs sampler we consider can be implemented efficiently on a quantum computer, we find:
\begin{corollary}
Any Gibbs state satisfying strong clustering can be prepared on a quantum computer in polynomial time (in the number of sites of the Hamiltonian)
\end{corollary}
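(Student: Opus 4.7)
The plan is to chain together three ingredients: Theorem \ref{thminformal1}, a standard spectral-gap-to-mixing-time conversion, and the efficient digital simulation of local Liouvillians from Ref.~\cite{KBGKE11}.

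First, I would apply Theorem \ref{thminformal1} to deduce that strong clustering of the target Gibbs state $\sigma$ implies a spectral gap $\lambda(\cL) \geq \lambda_0 > 0$, with $\lambda_0$ independent of the number of sites $N$, for either the Davies or the Heat-Bath generator $\cL$. Second, I would convert this gap estimate into a trace-distance mixing-time bound via the standard primitive reversible-semigroup estimate
\be
\|e^{t\cL}(\rho_0) - \sigma\|_1 \;\leq\; \sqrt{\|\sigma^{-1}\|_\infty}\, e^{-\lambda_0 t},
\ee
valid for every initial state $\rho_0$. Because $\sigma = e^{-\beta H}/Z$ for a $k$-local commuting Hamiltonian $H$ on $N$ sites, one has $\log\|\sigma^{-1}\|_\infty = O(\beta N)$, so choosing $t_{\mathrm{mix}} = O(N/\lambda_0) = O(N)$ already suffices to bring the evolved state within inverse-polynomial trace distance of $\sigma$. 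Third, I would invoke the locality and boundedness of the Davies/Heat-Bath generators established in Propositions \ref{lem:DaviesGen} and \ref{lem:HBGen}, together with the digital simulation scheme of Ref.~\cite{KBGKE11}, which implements $e^{t\cL}$ to precision $\epsilon$ using $\mathrm{poly}(N, t, 1/\epsilon)$ elementary quantum gates. Substituting $t = O(N)$ and $\epsilon = 1/\mathrm{poly}(N)$ yields a quantum circuit of polynomial size that prepares a state within inverse-polynomial trace distance of $\sigma$ starting from any easily preparable initial state, for instance the maximally mixed state.

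The main delicate step is the spectral-gap-to-$\bL_1$ conversion, since in the quantum setting the gap directly controls only the $\chi^2$ (equivalently $\bL_2(\sigma)$) distance to equilibrium, and passing to trace distance introduces the factor $\sqrt{\|\sigma^{-1}\|_\infty}$. One has to check that this factor grows only as $\exp(O(\beta N))$, so that its logarithm contributes just an additive $O(N)$ term to $t_{\mathrm{mix}}$ and the final mixing time remains polynomial. Once this is verified, the remaining reasoning is a routine composition of Theorem \ref{thminformal1} with the digital simulation black box, and no further obstacles arise.
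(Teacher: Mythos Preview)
Your proposal is correct and is essentially the argument the paper has in mind: the corollary is stated without an explicit proof, as an immediate consequence of Theorem \ref{thminformal1} together with the spectral-gap-to-trace-distance bound (stated in the preliminaries as $\Vert e^{t\cL^*}(\sigma)-\rho\Vert_1 \leq \lambda_{\min}(\rho)^{-1/2}e^{-\lambda t}$ with $\lambda_{\min}(\rho)^{-1} = e^{O(|\Lambda|)}$) and the $O(K^3 t^2 \varepsilon^{-1})$ simulation cost from Ref.~\cite{KBGKE11}. Your three-step chaining reproduces exactly these ingredients, and the ``delicate step'' you flag is precisely the one the paper handles by noting that $\log \lambda_{\min}(\rho)^{-1} = O(|\Lambda|)$ for Gibbs states of bounded local Hamiltonians.
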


We prove the necessity and sufficient parts of Theorem \ref{thminformal1} separately, as they require distinct proof techniques. On one hand, the \textit{only if} statement is proved via methods reminiscent of the analogous classical result \cite{Bertini}. The main idea of the proof is to consider the variational characterization of the spectral gap, and show, by a careful manipulation of conditional variances, that the gap of the Gibbs sampler restricted to a subsystem of minimum side length $L$ is roughly the same as the gap restricted to  a subsystem of side length $2L$. Then using the same argument iteratively shows that the gap of the dynamics is approximately scale invariant. The \textit{if} part of the statement, on the other hand, exploits methods from quantum information theory and quantum many-body theory. In particular, we find a mapping of our problem to properties of frustration-free gapped local Hamiltonians, and apply the so-called detectability lemma of \cite{DetectLemma2}. 

Our main theorem  becomes especially compelling for one dimensional lattice systems, where it was shown by Araki \cite{Araki} that Gibbs states always cluster. We prove that weak clustering and strong clustering are equivalent for one dimensional systems, getting that all Gibbs samplers in this case are gapped. Exploring our mapping between Gibbs samplers and local Hamiltonians, we also prove that at high enough temperature (independent of the size of the system) the Gibbs samplers are gapped. We then obtain:

\begin{theorem}[informal] 
Both the Davies and the Heat-Bath Gibbs samplers give polynomial-time quantum algorithms for preparing the Gibbs state of every 1D commuting Hamiltonian at any constant temperature and every commuting Hamiltonian at temperatures above a critical temperature (that is independent of the system size).
\end{theorem}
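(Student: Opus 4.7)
My plan is to reduce both halves of the statement to the existence of a spectral gap for the respective generators that is bounded below independently of the system size; once such a gap is in hand, the bounded-strength locality of the Davies and Heat-Bath generators, together with the efficient digital Lindbladian simulation of Ref.~\cite{KBGKE11}, automatically converts it into a polynomial-time quantum algorithm for preparing the Gibbs state to within any desired trace-norm error (with runtime polynomial in $N$ and the inverse gap and polylogarithmic in the inverse error).

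For the one-dimensional statement, I would first invoke Araki's theorem \cite{Araki}, which asserts that the Gibbs state of any finite-range one-dimensional quantum Hamiltonian at positive temperature satisfies exponential decay of two-point correlations, i.e.\ weak clustering. The nontrivial step is to upgrade weak to strong clustering in 1D: this is possible because the complement of a contiguous interval on a line consists of two disjoint half-lines, and the conditional expectation onto that complement effectively factorizes through the two boundary sites. That factorization lets me bound the conditional covariance between observables straddling an interval by ordinary two-point correlators across the boundary, which decay exponentially by Araki. With strong clustering in hand, Theorem~\ref{thm:main} delivers a system-size-independent gap for both the Davies and Heat-Bath generators.

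For the high-temperature part I would bypass strong clustering and instead exploit the proof of the \emph{if} direction of Theorem~\ref{thminformal1}, in which the gap of the Gibbs sampler is controlled by the spectral gap of an associated frustration-free local Hamiltonian whose ground space is spanned by purifications of the Gibbs state. At high temperature each local Gibbs weight $\e^{-\beta h_i}$ is close to the identity, so the local conditional expectations that make up this parent Hamiltonian differ from mutually commuting partial traces onto maximally mixed factors by $O(\beta)$ in norm. Running the detectability lemma of \cite{DetectLemma2} on this ``nearly commuting'' family of projectors then yields a spectral gap bounded below by a constant depending only on the interaction range, the local Hilbert space dimension, and the ratio $\beta/\beta_c$, valid whenever $\beta<\beta_c$ for some $\beta_c$ independent of system size.

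The main obstacle is the high-temperature step. Although the underlying heuristic (all local terms become nearly commuting as $\beta\to 0$) is transparent, turning it into a quantitative detectability-lemma bound requires controlling the commutator of conditional expectations associated with overlapping regions uniformly in $N$ and relating the resulting operator norms to $\beta$. In particular, one must pass from the Heisenberg-picture conditional expectations (which are CP but non-unital) to the normalized orthogonal projectors entering the detectability lemma, and check that the extracted $\beta_c$ depends only on local interaction data. The 1D portion, by contrast, is comparatively clean once the weak-to-strong upgrade is in hand, and in both regimes the passage from a uniform gap to a polynomial-time quantum algorithm is immediate from the locality of the generators together with \cite{KBGKE11}.
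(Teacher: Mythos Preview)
Your overall architecture---establish a system-size-independent gap, then convert to a polynomial-time algorithm via locality plus \cite{KBGKE11}---is correct and matches the paper. However, both halves of your gap argument contain genuine obstacles that the paper handles differently.

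\medskip
\textbf{1D part.} You invoke Araki \cite{Araki} to obtain weak clustering and then upgrade to strong clustering. The issue is one of norms: Araki's theorem bounds the covariance by $c\|f\|_\infty\|g\|_\infty e^{-d/\xi}$, whereas both the paper's definition of weak clustering (Def.~\ref{def:weakclustering}) and the hypothesis of Theorem~\ref{thm:main} require the sharper $\bL_2$ bound $c\|f\|_{2,\rho}\|g\|_{2,\rho}e^{-d/\xi}$. Since $\|\cdot\|_\infty\geq\|\cdot\|_{2,\rho}$, Araki's estimate does \emph{not} imply $\bL_2$ weak clustering, and your weak-to-strong upgrade (which is otherwise correct in spirit, via Schmidt decomposition across the zero-dimensional boundary) inherits this deficiency. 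The paper closes this gap by a detour: it observes that $\ket{\rho^{1/2}}$ is an MPS of bond dimension $\leq 2^r$, uses Araki to conclude its transfer operator is gapped (hence the MPS is injective), obtains a gapped frustration-free parent Hamiltonian, and then applies the detectability lemma to that parent Hamiltonian to get $\bL_2$ clustering directly. Only then does the weak-to-strong equivalence (Theorem~\ref{thm:1Dequiv}) and Theorem~\ref{thm:main} apply.

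\medskip
\textbf{High-temperature part.} You propose to use the detectability lemma on a nearly-commuting family of projectors to extract a gap. This has the logic reversed: the detectability lemma (Lemma~\ref{lem:Detect}) takes a spectral gap as \emph{input} and outputs exponential decay of $\|\hat\Pi^n-\hat\bE_\Lambda\|$; it does not produce a gap from small commutators. The paper instead uses Knabe's local-to-global bound \cite{Knabe}: there exist $N(r,d)$ and $\lambda(r,d)<1$ such that $\Delta(-\hat\cL_\Lambda)\geq\Omega\bigl(\min_{|S|=N}\Delta(-\hat\cL_S)-\lambda\bigr)$. Since for each fixed finite region $S$ the restricted Hamiltonian $-\hat\cL_S$ converges (as $T\to\infty$) to a non-interacting Hamiltonian with gap $1$, one can choose $T_c$ so that the finite-volume gaps exceed $\lambda$ for all $T\geq T_c$. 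This gives the uniform gap without any commutator estimates on overlapping conditional expectations.
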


We note that since Gibbs states of 1D commuting Hamiltonians are matrix-product operators, one can prepare them efficiently on a quantum computer using e.g. \cite{SSVCW05} (in fact this is also true for general non-commuting 1D Gibbs states \cite{Has05a}); here we only show another way of preparing them, which might be more resilient to noise in some circumstances (as the preparation can be achieved for example simply by connecting the system to a large heat bath).

Finally, we discuss extensions and further implications of our results. We conclude with a few important questions and conjectures, and connect some of them with the problem of self-correcting quantum memories in two dimensional systems.

\section{Preliminaries}

\subsection{Notation}

This paper concerns quantum spin lattice systems. Although the results presented here can be extended to more general graphs, we will restrict our attention to spins living on a $d$-dimensional finite square lattice $\Lambda \subseteq \bZ^d$, which can be identified with $(\bZ \backslash l)^d$ for an integer $l$ (we will call $l$ the lattice side length). Lattice subsets will be denoted by upper case Latin letters, e.g. $A,B\subseteq\Lambda$. The complement of a set $A\in\Lambda$ will be written $A^c$. The cardinality of a set $A$ will be denoted by $|A|$. 

The global Hilbert space associated to $\Lambda$ is $\cH_\Lambda=\bigotimes_{x\in \Lambda}\cH_x$. We assume the local Hilbert spaces are finite dimensional, i.e. ${\rm dim}(\cH_x)<\infty$. We denote the set of bounded operators acting on $\cH_\Lambda$ by $\cB_\Lambda\equiv \cB$, and its Hermitian subset by $\cA_\Lambda\subseteq\cB_\Lambda$. The set of positive semi-definite operators is denoted $\cS_\Lambda=\{X \in \cA_\Lambda,X\geq0,\tr{X}=1\}$, and its full rank subset is denoted by $\cS_\Lambda^+$. The elements of $\cA_\Lambda$ will be called observables, and will always be denoted with lower case Latin letters ($f,g\in\cA_\Lambda$). The elements of $\cS_\Lambda$ will be called states, and will be denoted with lower case Greek letters ($\rho,\sigma\in\cS_\Lambda$). 

We will say, in a a slight abuse of notation, that an operator $f \in \cB_\Lambda$ has support on $A \subset \Lambda$ if it can be written as $f_{A} \otimes \id_{A^c}$, for an operator $f_A \in \cB_A$. We will denote the support of a local observable $f\in\cA_\Lambda$ by $\Sigma_f$. 

We will make use of the modified partial trace $\Tr_A: \cB_\Lambda \rightarrow \cB_\Lambda$, which would read the following in the more traditional quantum information theory notation: $\Tr_A:f\mapsto\Tr_A(f)\otimes \1_A$. 

For $i,j\in\Lambda$, we denote by $d(i,j)$ the Euclidean distance in $\bZ^d$. The distance between two sets $A,B\subseteq\Lambda$ is $d(A,B) :=\min\{d(i,j),i\in A,j\in B\}$.

\subsection{$\bL_p$ spaces}

Many of the tools used in this work are rooted in the theory of non-commutative $\bL_p$ spaces \cite{LpSpaces1,LpSpaces2}. The central property of the non-commutative $\bL_p$ spaces summarized below, is that the norm as well as the scalar product is weighted with respect to some 
full rank reference state $\rho \in \cS_\Lambda^+$. The non-commutative $\bL_p$ spaces are equipped with a weighted $\bL_p$-norm which, for any $f,g\in\cA_\Lambda$ and some fixed $\rho\in\cS_\Lambda^+$, is defined as
\be 
\| f \|_{p,\rho} :=\tr{\;|\; \rho^{\frac{1}{2p}} f \rho^{\frac{1}{2p}}\;|^p\;}^{\frac{1}{p}}, ~~~~~~~~~ 1\leq p\leq \infty.
\ee
Similarly, the $\rho$-weighted non-commutative $\bL_p$ inner product is given by 
\be
\avr{f,g}_\rho := \tr{\sqrt{\rho} f \sqrt{\rho} g}.
\ee

We summarize the basic properties of non-commutative $\bL_p$ spaces in the following lemma:

\begin{lemma}{\cite{LpSpaces1,LpSpaces2}}\label{Lem:Lp-norm}
Let $\rho\in\cS_\Lambda^+$. Then
\begin{enumerate}
\item  (Natural ordering) Let $f\in \cA_\Lambda$. Then for any $p,q \in [1,\infty)$ satisfying $p\leq q$, we get $||f||_{p,\rho}\leq||f||_{q,\rho}$. 

\item (H\" older-type inequality) Let $f,g\in \cA_\Lambda$. Then for any $p,q\in[0,\infty)$ satisfying $1/p+1/q=1$, 
\be |\avr{f,g}_\rho|\leq ||f||_{p,\rho}||g||_{q,\rho}\ee

\item (Duality) Let $f\in \cA_\Lambda$. Then for any $p,q\in[0,\infty)$ satisfying $1/p+1/q=1$, 
\be ||f||_{p,\rho}=\sup\{\avr{g,f}_\rho, g\in\cA_\Lambda, ||g||_{q,\rho}\leq 1\}.\ee

\end{enumerate}
\end{lemma}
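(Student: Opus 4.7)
The plan is to reduce all three statements to corresponding standard facts about Schatten $p$-norms via the natural isometry $\iota_p : \cA_\Lambda \to \cA_\Lambda$ defined by $\iota_p(f) = F_p := \rho^{1/(2p)} f \rho^{1/(2p)}$, which is well-defined and invertible because $\rho$ is full-rank. By construction $\|f\|_{p,\rho} = \|F_p\|_p$, where $\|\cdot\|_p$ is the unweighted Schatten $p$-norm. The cornerstone of all three proofs is the identity, valid whenever $1/p+1/q=1$,
$$\avr{f,g}_\rho = \Tr(F_p^{*} G_q),$$
which follows from the cyclicity of the trace together with the numerical relation $1/(2p)+1/(2q)=1/2$ and Hermiticity of $f,g,\rho$.

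Given this identity, statement (2) is immediate: I would apply the standard Schatten H\"older inequality $|\Tr(A^{*}B)| \leq \|A\|_p \|B\|_q$ to $A=F_p$ and $B=G_q$ and translate back to weighted norms. Statement (3) then splits into two halves; ``$\leq$'' is just the normalized form of (2), while for ``$\geq$'' I would construct an explicit maximizer: diagonalize $F_p = \sum_i \lambda_i \proj{\phi_i}$, set $G_q := \|F_p\|_p^{\,1-p}\, |F_p|^{p-1} \mathrm{sgn}(F_p)$, and take $g := \iota_q^{-1}(G_q) = \rho^{-1/(2q)} G_q \rho^{-1/(2q)}$, which lies in $\cA_\Lambda$ since $\rho$ is invertible and the spectral construction is Hermitian. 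A short calculation exploiting $p/q = p-1$ confirms both $\|g\|_{q,\rho}=1$ and $\avr{g,f}_\rho = \|f\|_{p,\rho}$.

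For statement (1), I would leverage the fact that all powers of $\rho$ commute, writing
$$F_p = \rho^{\alpha}\, F_q\, \rho^{\alpha}, \qquad \alpha := \tfrac{1}{2p} - \tfrac{1}{2q} \geq 0,$$
and applying the three-factor Schatten H\"older inequality $\|ABC\|_p \leq \|A\|_{2s}\|B\|_q\|C\|_{2s}$ with $1/s = 1/p - 1/q$. A direct computation then gives $\|\rho^{\alpha}\|_{2s}^{\,2s} = \Tr(\rho^{(1/p-1/q)s}) = \Tr(\rho) = 1$, so that $\|f\|_{p,\rho} = \|F_p\|_p \leq \|F_q\|_q = \|f\|_{q,\rho}$, as desired.

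I do not anticipate any serious obstacle: all three arguments are algebraic, relying on cyclicity of the trace and the unweighted Schatten versions of the same inequalities. The only mildly delicate points are verifying Hermiticity of the constructed maximizer in (3) (automatic from the spectral construction) and the limiting cases $p=1$, $q=\infty$, which pose no difficulty in this finite-dimensional setting with $\rho \in \cS_\Lambda^+$. The principal conceptual contribution of the proof is thus the observation that the $\iota_p$ embedding intertwines the weighted $\bL_p$ structure with the ambient Schatten $p$-structure precisely when the weights are split as $\rho^{1/(2p)} \cdot \rho^{1/(2p)}$, so that $1/(2p)+1/(2q)=1/2$ is exactly the pairing condition making the duality identity work.
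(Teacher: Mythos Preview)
Your proposal is correct. The reduction to ordinary Schatten norms via $\iota_p(f)=\rho^{1/(2p)}f\rho^{1/(2p)}$ works exactly as you outline: the pairing identity $\avr{f,g}_\rho=\Tr(F_p G_q)$ follows from cyclicity and $1/(2p)+1/(2q)=1/2$, the H\"older bound and duality then transfer directly from the unweighted case, and the ordering follows from the three-factor H\"older inequality with the observation that $2s\alpha=1$ forces $\|\rho^\alpha\|_{2s}=1$. The only thing worth double-checking is that the inner product is real on Hermitian inputs (so that $\avr{g,f}_\rho=\avr{f,g}_\rho$ in your duality step), but this is immediate from $\overline{\tr{\sqrt{\rho}f\sqrt{\rho}g}}=\tr{g\sqrt{\rho}f\sqrt{\rho}}$ and cyclicity.

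As for comparison with the paper: there is nothing to compare. The paper does not prove this lemma; it simply quotes it from the non-commutative $\bL_p$ literature (Haagerup, Terp) and moves on. Your argument is therefore strictly more than what the paper provides, and in the finite-dimensional setting it is the natural elementary route --- it avoids the von Neumann algebra machinery of the cited references by exploiting that $\rho\in\cS_\Lambda^+$ is an honest invertible density matrix.
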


In the remainder of the paper, unless specified otherwise, we will always be working with $\bL_p$ norms and inner products. The reference state should always be clear from the context, and will almost always be the Gibbs state of a local commuting Hamiltonian (see Sec. \ref{sec:gibbsSS}).

Finally, we will also make extensive use of the $\bL_p$ covariance of a state $\rho\in\cS_\Lambda$, which is defined for any $f,g\in\cA_\Lambda$ as 
\be
\Cov_\rho(f,g) = \left|\avr{f,g}_\rho - \tr{\rho f }\tr{\rho g }\right|. 
\ee 
Similarly, the variance is given as $\Var_\rho(f)=\Cov_\rho(f,f)$. 
The covariance and the variance are always positive, and  they are invariant under the transformation $g\raw g + c\1$, for any $c\in\bR$.

\subsection{Dynamics}

The time evolution of an observable ($f_t\in\cA_\Lambda$) will be described by one-parameter semigroups of completely positive unital maps $T_t := e^{t \cL}$, whose generator $\cL : \cB_\Lambda \rightarrow \cB_\Lambda$ (also called the Liouvillian) can always be written in standard \textit{Lindblad form} (see Chapter 7 of \cite{Wolf12} for a derivation): 
\be
\partial_t f =  \cL(f_t) \equiv i[H,f] + \cD(f),
\ee
with
\be
\cD(f)=\sum_i L^\dag_i f L_i - \frac{1}{2}\{L^\dagger_i L_i , f\}_+,
\ee
where $\{L_i\} \in  \cB_\Lambda$ are Lindblad operators, $H \in \cA_\Lambda$ is a Hamiltonian operator and $\{A, B \}_+ := AB + BA$ is the anti-commutator. This evolution corresponds to the Heisenberg picture, which specifies the dynamics on observables rather than on states. We denote the dual generator, with respect to the Hilbert-Schmidt inner product $\langle X, Y \rangle := \text{tr}(X Y^{\cal y})$, by $\cL^*$; it amounts to the evolution of states, i.e. the Schr\"odinger picture. The fact that $T_t$ is unital for every $t$ ensures that $\cL(\1)=0$. 

A Liouvillian $\cL:\cB_\Lambda\rightarrow\cB_\Lambda$  is said to be \textit{primitive} if it has a unique full-rank stationary state (i.e. a unique full rank sate $\rho$ s.t. $\cL^*(\rho)=0$). A Liouvillian is said to be \textit{reversible} (or satisfy \textit{detailed balance}) with respect to a state $\rho\in\cS_\Lambda$ if for every $f,g\in\cA_\Lambda$, 
\be \avr{f,\cL(g)}_\rho=\avr{\cL(f),g}_\rho.\ee
If $\cL$ is reversible with respect to $\rho$ then $\rho$ is a stationary state of $\cL$ \cite{LogSobolev}. Note that for a classical Hamiltonian, the condition above reduces to the usual detailed balance condition for the generator of a Markov chain.

For an integer $r$ and lattice $\Lambda$, $\cL = \cL_{\Lambda}$ is said to be \textit{r-local} if it can be written as
\be \cL(f)=\sum_{Z \subset \Lambda} i[H_Z,f] +\cD_Z(f),\label{eqn:liouvgen}\ee
where $H_Z$ and $\cD_Z$ only have support on subsets $Z\subseteq\Lambda$ such that $|Z| \leq  r$. The integer $r$ will be referred to as the \textit{range} of the Liouvillian.  

Let us recall the main result of \cite{KBGKE11}, which shows that for a $r$-local Liovillian $\cL$ acting on $\cB( {\cal H}_{\Lambda} )$, with ${\cal H}_{\Lambda} = \bigotimes_{k \in \Lambda} {\cal H}_{k}$, $d := \max_{k} \text{dim}({\cal H}_k)$, and $K$ local terms ${\cal D}_Z$, $e^{t \cL}$ can be implemented with error $\varepsilon$ on a quantum computer by a circuit of size $O(K^3 t^2 \varepsilon^{-1})$ (with the error measured in terms of the diamond norm).

We will also consider restricted Liouvillians of $\cL$ defined only by the terms acting on a subset $A\subseteq\Lambda$:
\be  \cL_A(f)=\sum_{Z\cap A\neq0} i[H_Z,f] +\cD_Z(f)\nonumber\ee
Note that $\cL_A$ acts on $A$ plus a neighbourhood of $A$ whose radius is given by the range of the local terms (the parameter $r$) and the geometry of the lattice.  

We now define a few concepts for local Liouvillians that will show useful later.

\begin{definition}
Let $\Lambda$ be a finite subset of $\bZ^d$. Let $\cL:\cB_\Lambda\rightarrow\cB_\Lambda$ be a local Liouvillian. We say that $\cL$ is locally primitive if for any subset $A\subseteq\Lambda$, $\cL_A(f)=0$ implies that $f$ acts non-trivially only on $A^c$. Similarly, $\cL_\Lambda$ is locally reversible with respect to $\rho$ if for any $A\subseteq\Lambda$ and every $f,g\in\cA_\Lambda$, 

\be \avr{f,\cL_A(g)}_\rho=\avr{\cL_A(f),g}_\rho\nonumber\ee
\end{definition}

\begin{definition}
Let $\Lambda$ be a finite subset of $\bZ^d$. Let $\cL :\cB_\Lambda\rightarrow\cB_\Lambda$ be a local Liouvillian. We say that $\cL$ is frustration free if for any $A\subseteq\Lambda$, $\rho$ is a stationary state of $\cL_A$ whenever $\rho$ is a stationary state of $\cL$. 
\end{definition}

To conclude this subsection, we recall the definition of the spectral gap of a Liouvillian. 

\begin{definition}
Let $\cL :\cB_\Lambda\rightarrow\cB_\Lambda$ be a primitive reversible Liouvillian with stationary state $\rho$. The spectral gap of $\cL$ is given by 
\be \lambda_\Lambda(\cL):=\inf_{f\in\cA_\Lambda} \frac{-\avr{f,\cL_\Lambda(f)}_\rho}{\Var_\Lambda(f)}\label{eqn:vargap}\ee
\end{definition}

The significance of $\lambda_{\Lambda}$ follows from Theorem 2.2 of \cite{LogSobolev}, where it was shown that for every state $\sigma$,
\begin{equation}
\Vert e^{ t \cL_{\Lambda}^*}(\sigma) - \rho  \Vert_1 \leq \lambda_{\min}(\rho)^{-1/2} e^{- \lambda_{\Lambda} t},
\end{equation}
with $\rho$ the fixed point of $\cL_{\Lambda}^*$, $\lambda_{\min}(\rho)$ its minimum eigenvalue, and $\Vert X \Vert_1$ the trace-norm of $X$. In this paper $\rho$ will always be the thermal state of a local Hamiltonian, in which case $\lambda_{\min}(\rho) = e^{O(|\Lambda|)}$, with $|\Lambda|$ the number of sites of the lattice $\Lambda$. Thus $e^{ t \cL_{\Lambda}^*}(\sigma)$ to a good approximation of its fixed point in time of order $|\Lambda| / \lambda$. 

In terms of spectral theory, the spectral gap of a primitive reversible Liouvillian is given by the smallest non-zero eigenvalue of $-\cL$ (in the $\bL_2$ space associated to $\rho$). In Section \ref{sec:main} we will introduce a generalization of the spectral gap to subsets of the lattice.

%%%%%%%%%%%%%%%%%%%%%%%%%%%%%%%%%%%%%%%%%%%%%%%%%%%%%%%%%%%%%%%%%%%%%%%%%%%%%%%%%%%%%%%%%%%%%%

\section{Conditional expectations}\label{sec:condexp}

In this section we introduce a set of maps called \textit{conditional expectations}, which we denote suggestively by $\bE$.  These maps will later on play the role of local quasi-projectors onto the Gibbs state.  In Refs. \cite{ZiggyMaj1,ZiggyMaj2,ZiggyMaj3} one variant of conditional expectations was studied in very much the same context as we do here, where ergodic properties of Gibbs samplers were the main focus.  Also, Petz considered a similar set of maps is the context of \textit{corse graining} operations \cite{Petz}.   

\begin{definition}[conditional expectations]\label{propertiesCE}
Let $\Lambda$ be a finite subset of $\bZ^d$ and $\rho\in\cS^+_\Lambda$ be a full rank state. We call $\bE : \cB_\Lambda \rightarrow \cB_\Lambda$ a conditional expectation of $\rho$ if it satisfies the following properties:
\begin{enumerate}
\item (complete positivity) $\bE$ is completely positive and unital. 
\item (consistency) For any $f\in \cA_\Lambda$, $\tr{\rho \bE(f)}=\tr{\rho f }$.
\item (reversibilty) For any $f,g \in \cA_\Lambda$, $\avr{\bE(f),g}_\rho=\avr{f,\bE(g)}_\rho$.
\item (monotonicity) For any $f\in \cA_\Lambda$ and $n\in\bN$, $\avr{\bE^n(f),f}_\rho\geq\avr{\bE^{n+1}(f),f}_\rho$.

\end{enumerate}
\end{definition}

The consistency condition is reminiscent of the classical conditional expectation, while the reversibility condition can be understood as a form of detailed balance with respect to the state $\rho$. The role of monotonicity is not a priori clear, but will turn out to be necessary for the applications which we have in mind. Sometimes, but not always, we will consider conditional expectations that are also projectors. In that case, the monotonicity condition holds with equality. 

We will describe two examples of conditional expectations which are especially useful in the context of Gibbs samplers. As we will see below and in  Section \ref{sec:gibbsSS}, in addition to satisfying properties $1-4$ above, they inherit locality properties from the lattice Hamiltonian and Liouvillian used to define them. 

\subsection{Local Liouvillian projectors}

Let $\Lambda\subseteq\bZ^d$ and consider a local primitive Liouvillian $\cL _\Lambda=\sum_{Z\cap\Lambda\neq0} \cL_Z$ with stationary state $\rho\in\cS^+_\Lambda$. The \textit{local Liouvillian projector} associated with $\cL$ on $A$ is given by
\be \bE^{\cL}_A(f):=\lim_{t\rightarrow \infty} e^{t \cL_A}\label{LLproj}\ee

Note that if $\cL$ is locally primitive then $\bE^{\cL}_A(f)$ acts non-trivially only on $A^c$. If $\cL$ is frustration free, then $\bE^{\cL}_A$ is a conditional expectation with respect to the stationary state of $\cL$. Indeed:

 \begin{proposition}\label{thm:liouvilleprojector}
Let $\Lambda$ be a finite subset of $\bZ^d$. Let $\cL_\Lambda:\cB_\Lambda\rightarrow\cB_\Lambda$ be frustration free and locally reversible with respect to $\rho\in\cS_\Lambda$. Then for any $A\subseteq\Lambda$, $\bE_A^\cL$ is a conditional expectation with respect to $\rho$. 
\end{proposition}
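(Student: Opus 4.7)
The plan is to verify the four defining properties of a conditional expectation in turn, with the key preliminary observation that $\bE_A^{\cL}$ is well-defined and is in fact the $\bL_2(\rho)$-orthogonal projector onto $\ker(\cL_A)$. First I would establish existence of the limit in \eqref{LLproj}. Local reversibility of $\cL_\Lambda$ implies that $\cL_A$ is self-adjoint with respect to the $\bL_2(\rho)$ inner product, so its spectrum in that space is real; since $e^{t\cL_A}$ is a semigroup of completely positive unital contractions, this spectrum is non-positive. Spectral decomposition $-\cL_A=\sum_\lambda \lambda P_\lambda$ with $\lambda\geq 0$ then yields $e^{t\cL_A}=\sum_\lambda e^{-t\lambda}P_\lambda \to P_0$ as $t\to\infty$, where $P_0$ is the $\bL_2(\rho)$-orthogonal projector onto $\ker(\cL_A)$. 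In particular $\bE_A^{\cL}$ exists and satisfies $(\bE_A^{\cL})^2=\bE_A^{\cL}$.

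With this in hand, properties 1 and 3 are immediate. For complete positivity and unitality, each $e^{t\cL_A}$ is CP (being the exponential of a Lindbladian on $\cB_\Lambda$) and satisfies $e^{t\cL_A}(\1)=\1$ because $\cL_A(\1)=0$; both properties are preserved under the pointwise limit, giving property 1. For reversibility, local reversibility passes to the semigroup, $\avr{e^{t\cL_A}(f),g}_\rho=\avr{f,e^{t\cL_A}(g)}_\rho$, and sending $t\to\infty$ yields property 3.

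For consistency (property 2), I would use frustration-freeness: by hypothesis $\rho$ is a stationary state of $\cL_\Lambda$, hence of every $\cL_A$, so $e^{t\cL_A^*}(\rho)=\rho$ for all $t\geq 0$. By duality with respect to the Hilbert--Schmidt inner product, $\tr{\rho\, e^{t\cL_A}(f)}=\tr{e^{t\cL_A^*}(\rho)\,f}=\tr{\rho f}$, and the limit $t\to\infty$ gives $\tr{\rho\,\bE_A^{\cL}(f)}=\tr{\rho f}$. For monotonicity (property 4), since $\bE_A^{\cL}$ is a projector we have $(\bE_A^{\cL})^n=\bE_A^{\cL}$ for every $n\geq 1$, so the inequality holds with equality for $n\geq 1$. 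The remaining case $n=0$ reads $\avr{f,f}_\rho\geq\avr{\bE_A^{\cL}(f),f}_\rho$, which follows by combining reversibility and idempotence: $\avr{\bE_A^{\cL}(f),f}_\rho=\avr{(\bE_A^{\cL})^2(f),f}_\rho=\avr{\bE_A^{\cL}(f),\bE_A^{\cL}(f)}_\rho\leq \avr{f,f}_\rho$, where the last step uses the orthogonal decomposition $f=\bE_A^{\cL}(f)+(\id-\bE_A^{\cL})(f)$ in $\bL_2(\rho)$.

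The only nontrivial step is the very first one: ensuring the limit $\lim_{t\to\infty}e^{t\cL_A}$ actually exists rather than oscillating. This is exactly the place where local reversibility is indispensable, since without self-adjointness in $\bL_2(\rho)$ the generator could have purely imaginary eigenvalues (e.g.\ from the Hamiltonian part of the Lindbladian) preventing convergence. Once existence of the limit is secured, the rest of the argument is essentially bookkeeping.
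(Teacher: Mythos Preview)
Your proof is correct and follows essentially the same route as the paper's: verify properties 1--4 in turn, deducing consistency from frustration-freeness via $\tr{\rho\,e^{t\cL_A}(f)}=\tr{e^{t\cL_A^*}(\rho)\,f}=\tr{\rho f}$, reversibility from local reversibility of $\cL_A$, and monotonicity (with equality) from the projector identity $(\bE_A^\cL)^2=\bE_A^\cL$. Your argument is in fact more careful than the paper's, since you explicitly establish the existence of the limit $\lim_{t\to\infty}e^{t\cL_A}$ via the spectral decomposition of the $\bL_2(\rho)$-self-adjoint generator (a point the paper simply takes for granted), and you also cover the $n=0$ case of monotonicity.
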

\proof{
Complete positivity follows by construction, since for any $t\geq0$, $e^{t\cL}$ is a completely positive unital map. Consistency follows from frustration freeness. Indeed, assume $\rho$ is a stationary state of $\cL$, then by frustration freeness, for any $A\subseteq\Lambda$ and $f\in\cA_\Lambda$, 
\be \tr{\rho e^{t\cL_A}(f)}=\tr{e^{t \cL_A^*}(\rho)f}=\tr{\rho f}\nonumber\ee
Reversibility of $\bE_A^\cL$ follows directly from local reversibility of $\cL_\Lambda$. Finally, monotonicity can be seen to hold universally with equality from the projector property. For any $A\subseteq\Lambda$, note that 
\bea (\bE_A^\cL)^2(f)&=&\lim_{t\rightarrow \infty} e^{t\cL_A}(\lim_{t'\rightarrow\infty}e^{t'\cL_A}(f))\nonumber\\
&=& \lim_{t\rightarrow \infty} \lim_{t'\rightarrow\infty}e^{(t+t')\cL_A}(f)\nonumber\\
&=& \bE_A^\cL(f)\nonumber\eea
It immediately follows that $\avr{(\bE^\cL)^n(f),f}_\rho=\avr{(\bE^\cL)^{m}(f),f}_\rho$ for all $m,n\in\bN^+$.\qed}

It is clear from Eqn. (\ref{LLproj}) that if $\cL$ is local, then $\bE_A^\cL$ acts only on $A$ plus a finite neighbourhood around $A$ whose radius is upper bounded by the range of $\cL$. When $\cL_\Lambda$ is primitive, the expectation value with respect to the full system, namely $\rho: f \rightarrow \tr{\rho f}$, is equivalent to the local Liouvillian projector onto the whole system $\Lambda$.

\subsection{Minimal conditional expectations} \label{mincondexp}

The minimal conditional expectation $\bE_A^{\rho}$ is meant to minimally affect the observables outside of $A$ while still satisfying all four conditions of Definition \ref{propertiesCE}. This map has been considered previously, under the name \textit{corse graining map} in Ref. \cite{Petz} and \textit{block spin flip map} in Ref. \cite{ZiggyMaj1}. 

Let $\rho\in\cS^+_\Lambda$ be a full rank state on the lattice $\Lambda$ and let $A\subseteq\Lambda$. The \textit{minimal conditional expectation} of $\rho$ on $A$ is given by 

\be \bE^\rho_A(f) :=\Tr_{A}[\eta_A^\rho f \eta_A^{\rho \dag}],\label{Cond_Exp}\ee
where $\eta_A^\rho :=(\Tr_A[\rho])^{-1/2}\rho^{1/2}$. Recall that $\Tr_A$ is not the usual partial trace, but acts as a map from $\cB_\Lambda$ to itself given by the composition of the partial trace and tensoring with the identity matrix. A moment of thought shows that $\bE^\rho_A(f)$ is a hermitian operator  on the full system, which acts as the identity on subsystem $A$, and non-trivially on the rest of the system. We note that $\bE^\rho_A$ reduces to the classical conditional expectation of $\rho$ when the input observable is taken diagonal in the eigenbasis of $\rho$ \cite{ZiggyMaj1}. 

 \begin{proposition}\label{thm:localprojector}
Let $\Lambda$ be a finite subset of $\bZ^d$, $\rho\in\cS^+_\Lambda$, and $A\subseteq\Lambda$. Then $\bE^\rho_A$ is a conditional expectation with respect to $\rho$.
\end{proposition}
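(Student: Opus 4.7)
The plan is to verify the four conditions of Definition \ref{propertiesCE} in turn. The algebraic bedrock throughout is that $(\Tr_A[\rho])^{\pm 1/2} = \rho_{A^c}^{\pm 1/2}\otimes \1_A$ acts as the identity on $A$ (so it factors through the modified partial trace $\Tr_A$), together with the duality identity $\tr{X\,\Tr_A[Y]}=\tr{\Tr_A[X]\,Y}$, which is immediate from $\Tr_A[Y]=\operatorname{tr}_A[Y]\otimes\1_A$ and trace cyclicity.

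Conditions 1--3 reduce to short calculations. Complete positivity is automatic since $\bE^\rho_A$ is the composition of the two CP maps $f\mapsto \eta_A^\rho f \eta_A^{\rho \dag}$ and $\Tr_A$; unitality amounts to $\Tr_A[\eta_A^\rho\eta_A^{\rho \dag}]=\rho_{A^c}^{-1/2}\rho_{A^c}\rho_{A^c}^{-1/2}\otimes\1_A=\1$. For consistency the duality gives $\tr{\rho\bE^\rho_A(f)}=\tr{\Tr_A[\rho]\,\eta_A^\rho f\eta_A^{\rho \dag}}$, and the $\rho_{A^c}^{\pm 1/2}$ factors collapse against $\Tr_A[\rho]=\rho_{A^c}\otimes\1_A$ by cyclicity to leave $\tr{\rho f}$. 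For reversibility I rewrite $\avr{\bE^\rho_A(f),g}_\rho=\tr{\bE^\rho_A(f)\sqrt\rho g\sqrt\rho}$, apply the duality to move $\Tr_A$ onto $\sqrt\rho g\sqrt\rho$, and then re-collect the $\rho_{A^c}^{-1/2}$ factors on the $g$-side to recognize $\avr{f,\bE^\rho_A(g)}_\rho$.

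Monotonicity is the main obstacle, and I would prove it via a spectral argument. Define an auxiliary linear map $\Psi:\cA_\Lambda\to\cA_{A^c}$ by $\Psi(f):=\rho_{A^c}^{-1/4}\,\operatorname{tr}_A[\sqrt\rho f\sqrt\rho]\,\rho_{A^c}^{-1/4}$. A short computation---essentially the same manipulation used in the reversibility step---shows that, viewed as a map from $(\cA_\Lambda,\avr{\cdot,\cdot}_\rho)$ to the Hilbert--Schmidt inner product space on $\cA_{A^c}$, its adjoint is $\Psi^\dag(Y)=\rho_{A^c}^{-1/4}Y\rho_{A^c}^{-1/4}\otimes\1_A$ and that $\Psi^\dag\Psi=\bE^\rho_A$. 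This factorization immediately shows $\bE^\rho_A$ is positive semidefinite on the $\bL_2(\rho)$ space, with $\avr{f,\bE^\rho_A(f)}_\rho=\|\Psi(f)\|_{\rm HS}^2\geq0$, so all its eigenvalues are non-negative. For the upper bound, I use that since $\bE^\rho_A$ is unital and positive it is a contraction in the operator norm ($\|\bE^\rho_A(f)\|_\infty\leq\|f\|_\infty$), which forces every eigenvalue $\lambda$ of $\bE^\rho_A$ to satisfy $|\lambda|\leq 1$. Hence all eigenvalues lie in $[0,1]$. Since $\lambda^n(1-\lambda)\geq0$ on $[0,1]$, functional calculus then gives $(\bE^\rho_A)^n-(\bE^\rho_A)^{n+1}\geq0$ as a self-adjoint operator on $\bL_2(\rho)$; pairing with $f$ and using reversibility yields the stated monotonicity $\avr{(\bE^\rho_A)^n(f),f}_\rho\geq\avr{(\bE^\rho_A)^{n+1}(f),f}_\rho$.
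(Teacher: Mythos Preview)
Your proof is correct. For conditions 1--3 you follow essentially the same route as the paper (composition of CP maps, collapsing $\rho_{A^c}^{\pm 1/2}$ against $\Tr_A[\rho]$, and the partial-trace duality), so there is nothing to distinguish there.

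For monotonicity you take a genuinely different path. The paper passes to the similarity transform $\Gamma_\rho^{1/2}\bE^\rho_A\Gamma_\rho^{-1/2}$, argues it is Hermitian with spectral radius $1$ because $\bE^\rho_A$ is CP unital, and then proves the single-step inequality $\avr{\bE^\rho_A(f),\bE^\rho_A(f)}_\rho\leq\avr{\bE^\rho_A(f),f}_\rho$ via an auxiliary map $\Phi=\tilde\bE_A^{1/2}\circ\Gamma_\rho^{1/2}$; this amounts to $E-E^2\geq 0$, which together with $|\lambda|\leq 1$ forces the spectrum into $[0,1]$. Your argument instead obtains positivity of $\bE^\rho_A$ on $\bL_2(\rho)$ directly from the factorization $\bE^\rho_A=\Psi^\dag\Psi$, and bounds the spectrum above via the operator-norm contractivity of positive unital maps. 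Both arguments arrive at ``all eigenvalues lie in $[0,1]$'' and then invoke $\lambda^n(1-\lambda)\geq 0$, but your factorization is cleaner: it makes positive semidefiniteness transparent in one line rather than extracting it from a ratio computation, and it avoids the somewhat delicate square-root $\tilde\bE_A^{1/2}$ the paper uses.
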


\proof{
Complete positivity follows directly from the explicit form of Eqn. (\ref{Cond_Exp}), as a composition of two completely positive maps. 
In order to show the other properties, we note the following useful fact about the partial trace. Denote $\rho_{A^c}\equiv \Tr_A[\rho]$, then 
\be\Tr_A[\rho_{A^c}^{-1/2}\rho^{1/2}f\rho^{1/2}\rho_{A^c}^{-1/2}]=\rho_{A^c}^{-1/2}\Tr_A[\rho^{1/2}f\rho^{1/2}]\rho_{A^c}^{-1/2}.\label{eqn:basicErho}\ee

 In particular, this implies that 
 \be\bE^\rho_A(\1)=\rho_{A^c}^{-1/2}\Tr_A[\rho]\rho_{A^c}^{-1/2}=\1,\ee 
 showing unitality of $\bE^\rho_A$. 
 
Consistency follows simply from Eqn. (\ref{eqn:basicErho}). Let $\Gamma_\rho(f):=\rho^{1/2}f\rho^{1/2}$. Then
\bea \tr{\rho\bE^\rho_A(f)}&=&\tr{\rho ~\Gamma_{\rho_{A^c}}^{-1}(\Tr_A[\Gamma_\rho(f)])}\nonumber \\
&=& \tr{\Tr_A[\Gamma_{\rho_{A^c}}^{-1}(\rho)]\Tr_A[\Gamma_\rho(f)]}\nonumber \\
&=& \tr{\Gamma_{\rho_{A^c}}^{-1}(\Tr_A[\rho] )\Tr_A[\Gamma_\rho(f)]}\nonumber \\
&=& \tr{\Tr_A[\Gamma_\rho(f)]}=\tr{\rho f},
 \eea

Reversibility follows by a similar argument:
\bea 
\avr{\bE^\rho_A(f),g}_\rho &=& \tr{\Tr_A[\Gamma_{\rho_{A^c}}^{-1}(\Gamma_\rho(f))]\Gamma_\rho(g)}\nonumber\\
&=& \tr{\Tr_A[\Gamma_\rho(f)]\Tr_A[\Gamma_{\rho_{A^c}}^{-1}(\Gamma_\rho(g))]}\nonumber\\
&=& \tr{\rho^{1/2}f\rho^{1/2}\Tr_A[\rho_{A^c}^{-1/2}\rho^{1/2}g\rho^{1/2}\rho_{A^c}^{-1/2}]} \nonumber \\ &=& \avr{f,\bE^\rho_A(g)}_\rho.
\eea

We now show monotonicity of $\bE^\rho$. Note that $\bE^\rho_A(\1)=\1$, hence the map is unital. Since it is also completely positive, this implies that its spectral radius is $1$. Furthermore, by reversibility, $\Gamma^{1/2}_\rho\bE^\rho_A(\cdot)\Gamma^{-1/2}_\rho$ is Hermitian (using the Hilbert-Schmidt inner product), so its spectrum is real and its left and right eigenvectors are the same. But since the spectrum of a matrix is unchanged by similarity transformations, we can write the spectral radius of $\bE^\rho_A$ as

\bea 1 &=& \sup_{f=f^\dag}\frac{\tr{\Gamma_\rho^{1/2}(f)\bE^\rho_A(\Gamma_\rho^{-1/2}(f))}}{\tr{f^2}}\\
&=& \sup_{g=g^\dag, f:=\Gamma_\rho^{1/2}(g)}\frac{\tr{\Gamma_\rho(g)\bE^\rho_A(g)}}{\tr{\Gamma_\rho(g)g}}\\
&=& \| \bE^\rho_A \|^2_{2\rightarrow 2,\rho},\eea
where the second line follows because  $\Gamma_\rho$ is hermicity preserving. In particular, this also implies that $\avr{f-\bE^\rho_A(f),f}_\rho\geq 0$ 

Now let $\tilde{\bE}_A(f)=\Gamma_\rho^{1/2}(\bE^\rho_A(\Gamma_\rho^{-1/2}(f)))$ and define $\Phi(f):= \tilde{\bE}_A^{1/2}(\Gamma_\rho^{1/2}(f))$. Then

\bea \sup_{f=f^\dag}\frac{\avr{\bE^\rho_A(f),\bE^\rho_A(f)}_{\rho}}{\avr{\bE^\rho_A(f),f}_{\rho}} &=& \sup_{f=f^\dag}\frac{\tr{\Gamma_\rho^{1/2}(f)\tilde{\bE}_A^2(\Gamma_\rho^{1/2}(f))}}{\tr{\Gamma_\rho^{1/2}(f)\tilde{\bE}_A(\Gamma_\rho^{1/2}(f))}}\\
&=& \sup_{f=f^\dag}\frac{\tr{\Phi(f) \tilde{\bE}_A(\Phi(f))}}{\tr{\Phi(f) \Phi(f)}}\\
&\leq& \sup_{g=g^\dag} \frac{\tr{g\tilde{\bE}_A(g)}}{\tr{f^2}}=1\eea

Thus, for all $f=f^\dag$, 

\be \avr{\bE^\rho_A(f),\bE^\rho_A(f)}_{\rho}\leq\avr{\bE^\rho_A(f),f}_{\rho}\ee

By iteration, this then shows monotonicity of $\bE^\rho$.

\qed}

\noindent \textit{Remark:}  $\bE^\rho_A$ is not a projector. However, if we take the limit of infinite iterations of the minimal conditional expectation of $\rho$ on $A\subseteq\Lambda$: $\lim_{n\rightarrow \infty}(\bE_A^\rho)^n$ then we recover a local projector satisfying the monotonicity condition with equality. The minimal conditional expectation has the benefit that it is uniquely defined for any full-rank state $\rho$. In other words, it does not invoke any dynamical description of the state $\rho$ (via a Liouvillian) as is the case for the local Liouville projector. On the other hand, it has the disadvantage that the map $\bE_A^\rho$ can potentially not exhibit any locality properties. We will see in the next section that in the special case when $\rho$ is the Gibbs state of a commuting Hamiltonian, then $\bE^\rho_A$ also acts on $A$ plus a neighbourhood around $A$, in the same way as $\bE_A^\cL$. 

%%%%%%%%%%%%%%%%%%%%%%%%%%%%%%%%%%%%%%%%%%%%%%%%%%%%%%%%%%%%%%%%%%%%%

\section{Gibbs states and Gibbs samplers}\label{sec:gibbsSS}

The primary purpose of this paper is to analyze the efficient preparation of Gibbs states of commuting Hamiltonians on finite dimensional lattices. In this section we introduce the notion of lattice Gibbs states in the quantum setting, and we describe two classes of Gibbs Samplers (Liouvillians) which prepare the Gibbs states of local commuting Hamiltonians. \\

Given a finite lattice $\Lambda\in\bZ^d$, let $\Phi :\Lambda\rightarrow \cA_\Lambda$ be an $r$-local bounded potential: i.e. for any $j\in\Lambda$, $\Phi(j)$ is a Hermitian matrix supported on a ball of radius $r$ around site $j$, and $||\Phi(j)||< K$ for some constant $K<\infty$. For any subset $A\subseteq\Lambda$, the Hamiltonian $H_A$ is given by 
\be H_A=\sum_{j\in A} \Phi(j).\ee
We say that $\Phi$ is a \textit{commuting potential} if for all $i,j\in\Lambda$, $[\Phi(i),\Phi(j)]=0$. 

Let $A\subseteq\Lambda$ and let $\Phi$ be a bounded local potential. Then we define the (outer) boundary of $A$ as:
\be \partial A :=\{j\in\Lambda| {\rm supp}(\Phi(j))\cap A\neq 0, j\notin A\}\ee
We denote by $A_\partial$ the union of $A$ and its boundary, i.e.  $A_\partial = A\cup \partial A$. Note that $H_A$ is supported on $A_\partial$.

The Gibbs (thermal) state of the full lattice $\Lambda$ is

\be \rho_\Lambda= e^{-\beta H_\Lambda}/\tr{e^{-\beta H_\Lambda}}\ee
Restricted Gibbs states will similarly be given by 
\be \rho_A=e^{-\beta H_A}/\tr{e^{-\beta H_A}},\label{eqn:gibbsstateA}\ee
for any $A\subseteq\Lambda$. Note that $\rho_A\in\cS_{A_\partial}^+$. 
Unless otherwise specified, $\rho$ will always be the Gibbs state of the full system. 

For classical spin systems, Gibbs states of local hamiltonians restricted to finite subsets of the square lattice can be unambiguously related to the Gibbs state in the thermodynamic limit, by parametrizing the effect of the ambient infinite system in the form of boundary conditions on the finite system. This procedure is often referred to as the DLR theory of boundary conditions. DLR theory shows that the contribution of the infinite ambient environment constitutes a convex set of boundary conditions \cite{israel}. Thus optimization over the set of boundary conditions can be restricted to particular pure configurations. This simple fact allows for remarkable simplifications when comparing properties of systems with varying lattice sizes, as is beautifully illustrated in Refs. \cite{MartinelliBook,MartinelliReview,Guionnet}. It turns out that these equivalences break down in the case of quantum systems (see Ref. \cite{WernerBC} for a detailed discussion and counter-examples). In this work, we circumvent DLR theory by working with conditional expectations of the Gibbs state. The price we pay is that our results are weaker than the analogous classical ones in many respects. 

We now turn to the description of Gibbs samplers of commuting Hamiltonians.

\subsection{Davies generators}\label{DaviesGenerator}

The dissipative dynamics resulting from the weak (or singular) coupling limit of a system to a large heat bath is often called \textit{Davies generator} \cite{Davies} or thermal Liouvillian \cite{LogSobolevCC}. 
We will not provide a full derivation here, but refer the reader to Ref.~\cite{LendlAlicki,Kristan,Davies} for a clear presentation and discussion of when this canonical form can be assumed.  

Given a finite subset $\Lambda$ of $\bZ^d$ and a local bounded potential $\Phi$, 
consider the system (self) Hamiltonian $H=\sum_{j\in\Lambda} \Phi(j)$, the bath (self) Hamiltonian $H_B$, and a set of system-bath interactions $\{S_{\alpha(k)}\otimes B_{\alpha(k)}\}$, where $\alpha(k)$ labels all the operators $S_{\alpha(k)}, B_{\alpha(k)}$ associated to site $k$. In this paper we will assume that the $\{ S_{\alpha(k)} \}_{\alpha(k)=1}^{ \text{dim}({\cal H}_k)^2 }$ are Hermitian and spans the algebra of observables in  ${\cal A}_{k}$. One concrete choice are the generalized Pauli matrices acting on site $k$.  

The total system-bath Hamiltonian is given by 
\be H_{tot}=H_\Lambda+H_B+ \sum_{\alpha(k), k \in\Lambda} S_{\alpha(k)}\otimes B_{\alpha(k)}\ee

Assuming the bath is in a Gibbs state, taking the coupling terms to zero, applying the Born-Markov approximation, and tracing out the bath yield the so-called Davies generators \cite{Davies}:

\be
	\cL_\Lambda^D(f) := i[H_\Lambda,f] +\sum_{k\in\Lambda}\cL^D_k(f),\label{eqn:davies1}\ee
where the local dissipative elements are given by

\be \cL^D_k(f) :=\sum_{\omega,\alpha(k)}\chi_{\alpha(k)}(\omega)\left(S^\dag_{\alpha(k)}(\omega)fS_{\alpha(k)}(\omega)-\half \{S^\dag_{\alpha(k)}(\omega)S_{\alpha(k)}(\omega),f\}_+\right),\nonumber
\ee
where $\omega$ are the so-called Bohr frequencies, and $\chi_{\alpha(k)}(\omega)$
are the Fourier coefficients of the two point correlation functions of the environment. The operators $S_{\alpha(k)}(\omega)$ are the Fourier coefficients of the system couplings $S_{\alpha(k)}$:
\be e^{-i t H_\Lambda}S_{\alpha(k)} e^{i t H_\Lambda}=\sum_\omega e^{i t \omega} S_{\alpha(k)}(\omega)\label{eqn:davies2}\ee
The $S_{\alpha(k)}(w)$
operators can be understood as mapping eigenvectors of $H_\Lambda$ with energy $\omega$ to eigenvectors of $H_\Lambda$ with energy $E+\omega$, and hence act in the Liouvillian picture as quantum jumps which transfer energy $\omega$ from the system to the bath and back. Reversibility of the map can be interpreted as the fact that the jumps to and from the system at a given energy are equally likely. 
The following useful relations hold for any $k\in\Lambda$, $\alpha(k)$ and $\omega$. Let $\rho$ be the Gibbs state of $H_\Lambda$, then for any $s\in[0,1]$,
\bea 	\chi_{\alpha(k)}(-\omega)&=&e^{-\beta \omega}\chi_{\alpha(k)}(\omega)\label{DBDavies1},\\ 
\rho^s S_{\alpha(k)}(\omega)&=&e^{s\beta \omega}S_{\alpha(k)}(\omega)\rho^s \label{DBDavies2}.
\eea 

We can naturally define a Liovillian given by the restriction of the Davies generator to the neighborhood of a subset of the lattice. For$A\subseteq\Lambda$:
\be \cL_A^D(f) := i[H_A,f]+\sum_{k\in A}\cL_k^D\nonumber\ee
We collect the properties of the Davies generator in the following lemma.

\begin{lemma}[properties of the Davies generator]\label{lem:DaviesGen}
Let $\Lambda$ be a finite subset of $\bZ^d$. Let $\Phi :\Lambda\mapsto \cA_\Lambda$ be an $r$-local bounded and commuting potential. Then the Davies generator $\cL^D$, as defined in Eqns. (\ref{eqn:davies1}-\ref{DBDavies2}) satisfies the following properties: \begin{enumerate}
\item For any $A\subseteq \Lambda$, $\cL^D_A$  is the generator of a completely positive unital semigroup $e^{t\cL^D_A}$. 
\item $\cL^D$ 
is $r$-local, meaning that each individual Lindblad term $\cL_k^D=\sum_{\omega,\alpha(k)}\cL^D_{\omega,\alpha(k)}$ acts non-trivially only on neighborhood of constant radius $r$ around $k\in\Lambda$. 
\item $\cL^D$ 
 is locally primitive and locally reversible with respect to the global Gibbs state $\rho$.
\item  $\cL^D$ 
is frustration free.
\end{enumerate}
\end{lemma}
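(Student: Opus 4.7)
The plan is to verify the four properties in sequence, with commutativity of $\Phi$ entering decisively at each step. Property 1 is essentially immediate from the standard derivation of Davies generators: the coefficients $\chi_{\alpha(k)}(\omega)$ are Fourier transforms of KMS bath correlators and are therefore non-negative, so each summand is already in Lindblad form with jump operator $S_{\alpha(k)}(\omega)$ and rate $\chi_{\alpha(k)}(\omega)\ge 0$; hence $\cL^D_A$ generates a completely positive semigroup, and unitality $\cL^D_A(\1)=0$ follows by direct substitution (the commutator annihilates $\1$, and in each dissipator the anti-commutator exactly cancels the conjugation term). For Property 2, let $N(k):=\{j\in\Lambda : k\in\mathrm{supp}(\Phi(j))\}$, a set of diameter $O(r)$ around $k$. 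For $j\notin N(k)$ the supports of $\Phi(j)$ and $S_{\alpha(k)}$ are disjoint, so $[\Phi(j),S_{\alpha(k)}]=0$; combined with $[\Phi(i),\Phi(j)]=0$ from the commuting-potential hypothesis, the exponential factorises as $e^{-itH_\Lambda}=e^{-itH_{N(k)}}\,e^{-itH_{\Lambda\setminus N(k)}}$, with the second factor commuting with $S_{\alpha(k)}$. Thus $e^{-itH_\Lambda}S_{\alpha(k)}e^{itH_\Lambda}=e^{-itH_{N(k)}}S_{\alpha(k)}e^{itH_{N(k)}}$ is localised within the support of $H_{N(k)}$; every Fourier component $S_{\alpha(k)}(\omega)$ inherits this support, and $\cL^D_k$ has constant range.

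For Property 3, reversibility splits into a coherent and a dissipative piece. The coherent part $i[H_A,\cdot]$ preserves $\rho$ because $[H_A,H_\Lambda]=0$ (commutativity) implies $[H_A,\rho]=0$, so it acts as a unitary rotation that preserves the $\bL_2(\rho)$ geometry; the dissipative part is KMS-symmetric via the standard detailed-balance computation, in which one inserts $\rho^{1/2}S_{\alpha(k)}(\omega)=e^{\beta\omega/2}S_{\alpha(k)}(\omega)\rho^{1/2}$, uses the Hermiticity identity $S_{\alpha(k)}(\omega)^\dag=S_{\alpha(k)}(-\omega)$, and relabels $\omega\leftrightarrow-\omega$ through $\chi_{\alpha(k)}(-\omega)=e^{-\beta\omega}\chi_{\alpha(k)}(\omega)$ inside $\avr{f,\cL^D_k(g)}_\rho$ to recover $\avr{\cL^D_k(f),g}_\rho$. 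For local primitivity, suppose $\cL^D_A(f)=0$ and pair with $f$: the coherent contribution vanishes on the diagonal (being skew-adjoint in $\bL_2(\rho)$), leaving the Davies Dirichlet form---a sum with non-negative coefficients $\chi_{\alpha(k)}(\omega)$ of weighted norm-squares of $[S_{\alpha(k)}(\omega),f]$---equal to zero. Hence $[S_{\alpha(k)}(\omega),f]=0$ for every $k\in A$, $\alpha$, $\omega$; summing over $\omega$ gives $[S_{\alpha(k)},f]=0$, and since $\{S_{\alpha(k)}\}_\alpha$ spans $\cA_k$ by hypothesis, $f$ must act trivially on every site $k\in A$, i.e.\ $f$ is supported on $A^c$.

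Property 4 is then a corollary of the ingredients above: reversibility of each $\cL^D_k$ with respect to $\rho$ gives $(\cL^D_k)^*(\rho)=0$, and $[H_A,\rho]=0$ takes care of the coherent part, so $(\cL^D_A)^*(\rho)=0$ for every $A\subseteq\Lambda$, which is precisely frustration freeness. I expect the primitivity step in Property 3 to be the main obstacle: one has to turn the vanishing of the Dirichlet form into a strictly site-wise commutation statement, and for this both the algebra-spanning assumption on $\{S_{\alpha(k)}\}_\alpha$ and the clean decoupling of coherent from dissipative parts (which commutativity of $\Phi$ provides through $[H_A,\rho]=0$) are essential. The remaining steps amount to bookkeeping around well-known properties of the Davies generator.
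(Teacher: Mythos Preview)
Your argument tracks the paper's proof closely for Properties 1, 2, and 4: Lindblad form gives complete positivity and unitality; commutativity lets you replace $e^{-itH_\Lambda}$ by $e^{-itH_{k_\partial}}$ in the Heisenberg evolution of $S_{\alpha(k)}$, giving locality of $S_{\alpha(k)}(\omega)$; and frustration freeness is read off from local reversibility by plugging $f=\1$ into the detailed-balance identity. The one substantive difference is your treatment of local primitivity: the paper simply cites Alicki--Lendi for the fact that $\{S_{\alpha(k)}\}$ spanning $\cA_k$ implies primitivity, whereas you supply the mechanism, namely that vanishing of the Davies Dirichlet form forces $[S_{\alpha(k)}(\omega),f]=0$ for all $\omega$, hence $[S_{\alpha(k)},f]=0$, hence $f$ is trivial on each site of $A$. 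Your route is more self-contained and makes explicit why the algebra-spanning hypothesis is needed.

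One slip worth flagging: in your reversibility argument you say the coherent part $i[H_A,\cdot]$ ``preserves the $\bL_2(\rho)$ geometry'' and then proceed as if this yields self-adjointness. In fact, when $[H_A,\rho]=0$ the map $i[H_A,\cdot]$ is \emph{skew}-adjoint in $\bL_2(\rho)$, not self-adjoint, so the full $\cL^D_A$ with coherent part is not literally reversible in the sense of the definition. The paper's own proof is no more careful on this point. It does not damage anything downstream: skew-adjointness gives $\avr{f,i[H_A,f]}_\rho=0$ for Hermitian $f$, so the coherent part drops out of the Dirichlet form (hence of the gap and of your primitivity argument), and it also drops out of the frustration-freeness computation since $\tr{\rho\,i[H_A,g]}=i\tr{[\rho,H_A]g}=0$. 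You should just state this correctly rather than conflating ``isometry'' with ``self-adjoint''.
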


\proof{

\vspace{0.05 cm}

\noindent 1. Complete positivity and unitality of $e^{t\cL^D_A}$ follow directly from the fact that  Eqn. (\ref{eqn:davies1}) is in Lindblad form, and by directly computing its action on $\id$. 

\vspace{0.1 cm}

\noindent 2. Locality of the Liouvillian follows from locality of the Lindblad operators $S_{\alpha(k)}(\omega)$. 
Given that for all $i,j\in\Lambda$, $[\Phi(i),\Phi(j)]=0$, we get for any $\alpha(k)$, 
\be e^{-i tH}S_{\alpha(k)}e^{i t H}=e^{-i t H_{k_\partial}}S_{\alpha(k)}e^{i t H_{k_\partial}},\nonumber\ee
which is manifestly local.

\vspace{0.1 cm}

\noindent 3. Local primitivity was shown to hold \cite{LendlAlicki} if for each $k\in\Lambda$, $\{S_{\alpha(k)}\}$ generates the full matrix algebra of site $k$. {Local reversibility} follows directly by exploiting the relations in Eqns. (\ref{DBDavies1}) and (\ref{DBDavies2}) to show that
$\avr{f,\cL_A^D(g)}_\rho=\avr{\cL_A^D(f),g}_\rho$, for any $A\subseteq\Lambda$.

\vspace{0.1 cm}

\noindent 4. Frustration freeness of the Davies generators is also implied by the local reversibility condition. Indeed, let $A\subseteq\Lambda$, then by local reversibility, for every $f,g\in\cA_\Lambda$, we get

\be \avr{f,\cL_A^D(g)}_\rho=\avr{\cL_A^D(f),g}_\rho\nonumber\ee
In particular, frustration freeness can be made explicit by choosing $f = \1$, obtaining
\be \tr{\rho\cL_A^D(g)}=\avr{\1,\cL_A^D(g)}_\rho= \avr{\cL_A^D(\1),g}_\rho=0,\nonumber\ee by local primitivity.
This implies that if the Liouvillian $\cL^D_\Lambda$ satisfies detailed balance with respect to the state $\rho$, then $\rho$ is a stationary state of $\cL^D_\Lambda$.
 \qed}

The Davies generators are often considered a good model for exploring thermalization in quantum systems. In particular, it is the standard approach for considering environment couplings in a variety of physical scenarios (e.g. for atomic or optical systems in the quantum regime).

\subsection{Heat-Bath generators}

We now consider a second class of Gibbs samplers. It has the drawback that it is less physically motivated. But has the advantage of simplicity.  

Let $\Lambda$ be a finite subset of $\bZ^d$. Let $\Phi :\Lambda\mapsto \cA_\Lambda$ be a local bounded and commuting potential, and let $\rho\in\cS^+_\Lambda$ be the associated Gibbs state. For some $A\subseteq\Lambda$, let $\bE^\rho_A$ be the minimum expectation value of $\rho$ on $A$ (defined in Section \ref{mincondexp}). We define the \textit{Heat bath} generator as 
\be \cL^H_A(f) :=\sum_{k\in A} (\bE^\rho_k(f)-f).\label{eqn:HeatBath}\ee 
Note that the conditional expectations are taken over single sites. Given that for any set of completely positive maps $\{T_j\}$, $\sum_j (T_j-\id)$ is a legitimate Liouvillian, we could have defined the heat bath Liouvillian with respect to essentially any set of conditional expectations. However, this specific choice is easier to work with, and closely mirrors the locality properties of the Davies generator.

The lemma below collects the relevant properties of the Heat-Bath generator:

\begin{lemma}[properties of the Heat-Bath generator]\label{lem:HBGen}
Let $\Lambda$ be a finite subset of $\bZ^d$. Let $\Phi :\Lambda\mapsto \cA_\Lambda$ be a $r$-local bounded and commuting potential. Then the associated heat-bath generator $\cL^H$ satisfies the following properties:
\begin{enumerate}
\item For any $A\subseteq \Lambda$, $\cL^H_A$  is the generator of a completely positive unital semigroup $e^{t\cL^H_A}$. 
\item $\cL^H$ 
is $r$-local, meaning that each individual Lindblad term $\cL_k^D$ acts non-trivially only on a neighbourhood with constant radius composed of sites around $k\in\Lambda$. 
\item $\cL^H$ 
 is locally primitive and locally reversible with respect to the global Gibbs state $\rho$.
\item  $\cL^H$ 
is frustration free.
\end{enumerate}
\end{lemma}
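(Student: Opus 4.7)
\emph{Proof proposal.} The plan is to verify each of the four claims by reducing it either to an already established property of the single-site minimal conditional expectation $\bE^\rho_k$ (Proposition~\ref{thm:localprojector}) or to the commuting structure of the potential $\Phi$.

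For item (1), each summand $\bE^\rho_k-\id$ is a valid Lindblad generator because $\bE^\rho_k$ is unital and completely positive; a sum of Lindblad generators is again Lindblad, and unitality of $e^{t\cL^H_A}$ is immediate from $\cL^H_A(\1)=\sum_k(\bE^\rho_k(\1)-\1)=0$. For the $r$-locality of item (2), commutativity genuinely enters: decomposing $H = h_{k_\partial} + h_{\text{rest}}$, where $h_{k_\partial}$ collects the potential terms whose support contains $k$ and $h_{\text{rest}}$ is the rest, commutativity lets me factor $\rho^{1/2} = Z^{-1/2}e^{-\beta h_{k_\partial}/2}e^{-\beta h_{\text{rest}}/2}$ and analogously $(\Tr_k\,\rho)^{1/2} = Z^{-1/2}(\Tr_k\,e^{-\beta h_{k_\partial}})^{1/2}e^{-\beta h_{\text{rest}}/2}$. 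The $e^{-\beta h_{\text{rest}}/2}$ factors cancel in the Kraus operator $\eta_k^\rho = (\Tr_k\,\rho)^{-1/2}\rho^{1/2}$, leaving $\eta_k^\rho = (\Tr_k\,e^{-\beta h_{k_\partial}})^{-1/2}e^{-\beta h_{k_\partial}/2}$, which is supported on $k_\partial$; hence $\bE^\rho_k$, and therefore each term of $\cL^H$, modifies $f$ only within the $r$-neighbourhood of $k$.

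Item (3) splits into two parts. Local reversibility follows termwise from reversibility of each $\bE^\rho_k$ (Proposition~\ref{thm:localprojector}) together with the trivial reversibility of $\id$. Local primitivity is the main substantive step, which I would carry out in three stages. Starting from $\cL^H_A(f)=0$, local reversibility turns this into $\sum_{k\in A}\avr{f,(\id-\bE^\rho_k)(f)}_\rho=0$; each summand is non-negative because monotonicity at $n=0$ gives $\avr{f,f}_\rho\geq\avr{\bE^\rho_k(f),f}_\rho$, so every term vanishes individually. Next, to upgrade $\avr{f,\bE^\rho_k(f)}_\rho=\avr{f,f}_\rho$ to the identity $\bE^\rho_k(f)=f$, I would combine monotonicity at $n=1$, yielding $\avr{\bE^\rho_k(f),\bE^\rho_k(f)}_\rho\leq\avr{f,f}_\rho$, with the Cauchy--Schwarz inequality $\avr{f,f}_\rho^2\leq\avr{\bE^\rho_k(f),\bE^\rho_k(f)}_\rho\avr{f,f}_\rho$, to force the equality $\avr{\bE^\rho_k(f),\bE^\rho_k(f)}_\rho=\avr{f,f}_\rho$; expanding $\|f-\bE^\rho_k(f)\|_{2,\rho}^2$ then gives $\bE^\rho_k(f)=f$. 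Finally, the defining formula $\bE^\rho_k(f)=\Tr_k[\eta_k^\rho f\eta_k^{\rho\dag}]$ takes values in the image of the modified partial trace $\Tr_k$, so every fixed point of $\bE^\rho_k$ is necessarily of the form $\1_k\otimes X$; intersecting these constraints over $k\in A$ gives that $f$ is supported on $A^c$.

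Finally, item (4) drops out of local reversibility combined with unitality, exactly as in the Davies case: $\tr{\rho\,\cL^H_A(g)} = \avr{\1,\cL^H_A(g)}_\rho = \avr{\cL^H_A(\1),g}_\rho = 0$ for every $g$ and every $A$, so $\rho$ is stationary for each restricted generator. The step I expect to be the main obstacle is the local-primitivity argument in (3): it is the only place where several of the conditional-expectation axioms must be combined simultaneously and where one must exploit the concrete structure of $\bE^\rho_k$ (namely that its image lies in $\1_k\otimes\cB_{\{k\}^c}$) rather than just its abstract defining properties. Everything else is a direct transfer from Proposition~\ref{thm:localprojector} together with the commuting-potential computation in item (2).
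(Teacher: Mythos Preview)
Your proof is correct and, for items (1), (2) and (4), matches the paper's argument essentially verbatim. The one place where you take a genuinely different route is local primitivity in item (3).

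The paper establishes local primitivity by invoking Lemma~\ref{HBprimitivity} (the ``equivalence of blocks'' inequality of Majewski--Olkiewicz--Zegarlinski), which bounds $\avr{f,f-\bE^\rho_A(f)}_\rho$ above and below by $\sum_{k\in A}\avr{f,f-\bE^\rho_k(f)}_\rho$. From the upper bound it follows that if $\avr{g,-\cL^H_A(g)}_\rho=0$ then $\avr{g,g-\bE^\rho_A(g)}_\rho=0$, which forces $g$ to act trivially on $A$. Your argument avoids this external lemma entirely: you observe that each summand $\avr{f,(\id-\bE^\rho_k)(f)}_\rho$ is non-negative and hence vanishes individually, then use Cauchy--Schwarz together with monotonicity to promote $\avr{f,\bE^\rho_k(f)}_\rho=\avr{f,f}_\rho$ to the pointwise identity $\bE^\rho_k(f)=f$, and finally read off from the explicit form $\bE^\rho_k(f)=\Tr_k[\eta_k^\rho f\eta_k^{\rho\dag}]$ that any fixed point lies in $\1_k\otimes\cB_{\{k\}^c}$. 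This is more elementary and fully self-contained---it uses only the axioms of Definition~\ref{propertiesCE} plus the concrete range of $\bE^\rho_k$---whereas the paper's route has the advantage of importing a quantitative two-sided comparison (Lemma~\ref{HBprimitivity}) that is of independent interest, even though it is not used elsewhere in the paper. One small remark: in your first step you do not actually need local reversibility to pass from $\cL^H_A(f)=0$ to $\sum_k\avr{f,(\id-\bE^\rho_k)(f)}_\rho=0$; simply pairing both sides with $f$ suffices.
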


\proof{

\vspace{0.1 cm}

\noindent 1. Complete positivity and unitality of $e^{t\cL^H_A}$ follows directly from Eqn. (\ref{eqn:HeatBath}) and the fact that the minimal conditional expectation $\bE^\rho$ is completely positive and unital.

\vspace{0.1 cm}

\noindent 2. Locality can be seen by direct evaluation of one term in the generator. Let $k\in\Lambda$ and consider

\be \bE^\rho_k(f)=\Tr_{k}[\eta^\rho_k f \eta^{\rho\dag}_k]\nonumber\ee
where $\eta^\rho_k=(\Tr_{k}[e^{-\beta H_\Lambda}])^{-1/2}e^{-\beta H_\Lambda/2}$. 
But given that $\Phi$ is a commuting potential, we have
\bea (\Tr_{k}[e^{-\beta H_\Lambda}])^{-1/2}&=& (e^{-\beta H_{(k_\partial)^c}/2}\Tr_{k}[e^{-\beta H_{k_\partial}}]e^{-\beta H_{(k_\partial)^c}/2})^{-1/2}\nonumber\\
&=& e^{\beta H_{(k_\partial)^c}/4}(\Tr_{k}[e^{-\beta H_{k_\partial}}])^{-1/2}e^{\beta H_{(k_\partial)^c}/4}\nonumber\\
&=& (\Tr_{k}[e^{-\beta H_{k_\partial}}])^{-1/2}e^{\beta H_{(k_\partial)^c}/2},\nonumber\eea
with $H_{(k_\partial)^c}:=\sum_{j\neq k_\partial}\Phi(j)$. Here we have used that if two invertible hermitian operators $A,B$ commute $[A,B]=0$, then also $[A^{\alpha},B^{\beta}]=0$ for any scalars $\alpha,\beta\in\bR$, since two commuting Hermitian operators share the same orthonormal basis. 

Because the potential is commuting, $e^{-\beta H_\Lambda/2}=e^{-\beta H_{(k_\partial)^c}/2}e^{-\beta H_k/2}$. Thus we get

\be \eta^\rho_k = (\Tr_{k}[e^{-\beta H_{k_\partial}}])^{-1/2}e^{-\beta H_{k_\partial}/2}\nonumber\ee
Hence, if $\Phi$ is a commuting $r$-local potential, then $\cL_k^H$ is at most $r$-local, for any $k\in\Lambda$.

\vspace{0.1 cm}

\noindent 3. Local reversibility follows directly from reversibility of $\bE^\rho_A$ for any $A\subseteq\Lambda$. Local primitivity follows from Lemma \ref{HBprimitivity}. In order to prove that $\cL^H$ is locally primitive, we need to show that for any $A\subseteq\Lambda$,  $\cL^H_A(f)=0$ implies that $f$ acts non-trivially only on $A^c$. We show this by contradiction. Let $A\subseteq\Lambda$ and suppose that there exists an $g\in\cA_\Lambda$ with non-trivial support on $A$ such that  $\cL^H_A(g)=0$. Then it follows that $\avr{g,\cL^H_A(g)}_\rho=0$. From Lemma \ref{HBprimitivity} below, there exists a $C_A>0$ such that 
\bea \avr{g,-\cL^H_A(g)}_\rho &=& \sum_{k\in A}\avr{g,g-\bE^\rho_k(g)}_\rho\nonumber\\
&\geq& \frac{1}{C_A}\avr{g,g-\bE^\rho_A(g)}_\rho>0,\nonumber\eea
since $\bE_A^\rho(f)$ has support only on $A^c$, the support of $g$ has non-zero overlap with $A$.

\vspace{0.1 cm}

\noindent 4. Frustration freeness follows in much the same way as for the Davies generator, from the local reversibility of $\cL^H_\Lambda$ which is inherited from the reversibility of the minimal conditional expectation $\bE^\rho$.

\qed}

\noindent \textit{Remark:} The Heat-Bath generator has been considered previously in the context of lattice spin system in a series of papers \cite{ZiggyMaj1,ZiggyMaj2,ZiggyMaj3}. There the focus was on finding general local criteria for a quantum lattice system to be well defined in the thermodynamic limit. The results in Refs.   \cite{ZiggyMaj1,ZiggyMaj2,ZiggyMaj3} are hence  similar in spirit to ours, but quite different in scope and in terms of the methods used. Hence,  the two sets of results can be seen as being  complementary.

We conclude this section by pointing out that the whole framework of Heat-Bath Liouvillians also works if we replace the Gibbs state by some other full-rank (faithful) state of the lattice $\sigma$. It can be seen that Lemma  \ref{HBprimitivity} will still hold. However, it will typically be very difficult to obtain a bound on the locality of the individual terms. 

\begin{lemma}[equivalence of blocks  \cite{ZiggyMaj2}]\label{HBprimitivity}
Let $\Lambda$ be a finite lattice in $\bZ^d$. Let $\Phi :\Lambda\mapsto \cA_\Lambda$ be an r-local bounded and commuting potential, and let $\rho$ be the Gibbs state of $H_\Lambda=\sum_k \Phi(k)$. Let $A\subseteq\Lambda$, then there exist constants $c_A,C_A<\infty$ such that for any $f\in\cA_\Lambda$, the following inequality holds:
\be c_A \sum_{k\in A}\avr{f,f-\bE^\rho_k(f)}_\rho \leq\avr{f,f-\bE^\rho_A(f)}_\rho\leq C_A \sum_{k\in A}\avr{f,f-\bE^\rho_k(f)}_\rho.\ee
\end{lemma}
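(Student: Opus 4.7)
Both sides of the claim are quadratic forms of positive self-adjoint operators on the finite-dimensional Hilbert space $(\cA_\Lambda,\avr{\cdot,\cdot}_\rho)$. Setting $T:=\id-\bE^\rho_A$ and $\mathcal{N}_A:=\sum_{k\in A}(\id-\bE^\rho_k)$, self-adjointness follows from reversibility of each minimal conditional expectation (Definition~\ref{propertiesCE}, item~3), and positivity follows from the $\bL_2(\rho)$-contractivity $\|\bE^\rho_B\|_{2\to 2,\rho}=1$ established inside the proof of Proposition~\ref{thm:localprojector}. The claimed two-sided bound is exactly the assertion that these two quadratic forms are pointwise comparable, and the plan is to reduce this to showing $\ker T=\ker \mathcal{N}_A$ and then invoke a standard compactness argument.

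The reduction rests on the following general finite-dimensional fact: any two positive semidefinite bilinear forms $Q_1,Q_2$ on a finite-dimensional inner-product space with a common null space $K$ obey $c\,Q_1(f)\le Q_2(f)\le C\,Q_1(f)$ for all $f$, with $0<c\le C<\infty$ obtained by extremising the ratio on the compact unit sphere of $K^{\perp}$; the constants depend on $A$ and $\rho$ but not on $f$. For the kernel identification, the inclusion $\ker T\subseteq\ker \mathcal{N}_A$ follows from the intertwining identity $\bE^\rho_k\bE^\rho_A=\bE^\rho_A$ for every $k\in A$, which I would deduce from $\mathrm{Range}(\bE^\rho_A)\subseteq\cA_{A^c}$ together with the commuting factorisations of $\rho^{1/2}$ and $(\Tr_A\rho)^{-1/2}$ already used inside the proof of Lemma~\ref{lem:HBGen}; indeed $Tf=0$ then gives $\bE^\rho_k f=\bE^\rho_k\bE^\rho_A f=\bE^\rho_A f=f$ for every $k\in A$. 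The reverse inclusion $\ker \mathcal{N}_A\subseteq\ker T$ is harder: if $\bE^\rho_k f=f$ for every $k\in A$, one iterates any ordered product $\prod_{k\in A}\bE^\rho_k$ (which then fixes $f$), and uses the commuting structure to match the strong limit of its powers with $\Pi^\rho_A:=\lim_n(\bE^\rho_A)^n$, yielding $\Pi^\rho_A f=f$. The monotonicity axiom (Definition~\ref{propertiesCE}, item~4) then makes the non-increasing sequence $\avr{(\bE^\rho_A)^n f,f}_\rho$ start at $\avr{f,f}_\rho$ and converge to $\avr{\Pi^\rho_A f,f}_\rho=\avr{f,f}_\rho$, so it is already constant at $n=1$; positivity of $T$ forces $\bE^\rho_A f=f$, i.e.\ $Tf=0$.

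The hard part is the reverse inclusion above. Because $\bE^\rho_A$ is not itself a projection (cf.\ the remark after Proposition~\ref{thm:localprojector}) and because the sites $k\in A$ lying on $\partial A$ have neighbourhoods $k_\partial$ that spill into $A^c$, a naive support-based argument fails: one genuinely needs the commuting structure of $\Phi$ to align the strong limit of $(\bE^\rho_A)^n$ with that of iterated products of single-site conditional expectations. This is the block-spin analysis performed in~\cite{ZiggyMaj2}, whose strategy I would follow. The constants $c_A,C_A$ produced by the compactness argument are non-explicit and may depend unfavourably on $|A|$, but their positivity and finiteness are all that is used in the subsequent proof of local primitivity of $\cL^H$.
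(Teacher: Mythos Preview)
The paper does not prove this lemma; it is simply quoted from \cite{ZiggyMaj2} and used as a black box in the proof of local primitivity for the Heat-Bath generator. Your proof plan is therefore not competing with any argument in the paper itself.

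Your overall strategy---reduce to $\ker(\id-\bE^\rho_A)=\ker\sum_{k\in A}(\id-\bE^\rho_k)$ and then invoke a finite-dimensional compactness argument on the orthogonal complement of the common kernel---is the natural one and is essentially the route taken in the cited reference. The compactness step is unproblematic; the content is indeed entirely in the kernel identification, and your remark that the resulting constants $c_A,C_A$ are non-explicit but that only their positivity and finiteness are needed downstream is exactly right.

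One small comment: you present the forward inclusion $\ker T\subseteq\ker\mathcal{N}_A$ as the easy direction, arguing via $\bE^\rho_k\bE^\rho_A=\bE^\rho_A$, while flagging the boundary issue ($k_\partial$ spilling into $A^c$) only for the reverse inclusion. In fact the same boundary difficulty is already present in the forward direction: for $k$ on the inner boundary of $A$, the operator $\eta^\rho_k$ has support overlapping $A^c$, so a generic $g\in\cA_{A^c}$ need not commute with it, and $\bE^\rho_k(g)=g$ is not immediate from support considerations alone (recall that $\bE^\rho_k$ is not a projector onto $\cA_{k^c}$). You do gesture at ``the commuting factorisations of $\rho^{1/2}$ and $(\Tr_A\rho)^{-1/2}$'' to handle this, which is the right instinct---both inclusions genuinely require the commuting-potential structure, and neither is purely soft. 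With that caveat, your plan is sound, and your deferral to \cite{ZiggyMaj2} for the detailed block-spin analysis matches exactly what the paper does.
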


%%%%%%%%%%%%%%%%%%%%%%%%%%%%%%%%%%%%%%%%%%%%%%%%%%%%%%%%%%%%%%%%%%%%%

\section{Decay of Correlations}\label{sec:Clustering}

There are a number of ways of defining the correlations between observables in a quantum system. We will be interested in describing the situation when the correlations between local observables decay rapidly (exponentially) with the distance separating their supports. This behavior typically characterizes non-critical phases of many-body systems.

Let $A\subseteq\Lambda$ and \mjk{let $\bE\in\{\bE^\rho,\bE^\cL\}$ be one of the local conditional expectations of $\rho\in\cS^+_\Lambda$ introduced in Sec. \ref{sec:condexp}}. Then for any $f,g\in\cA_\Lambda$ we define the conditional covariance with respect to $\bE$ on $A$ by
\be \Cov_{A}(f,g) :=\left|\avr{f-\bE_A(f),g-\bE_A(g)}_\rho\right|,\ee 
and, similarly, the conditional variance $\Var_A(f) :=\Cov_A(f,f)$. We note that the conditional covariance with respect to the full lattice $\Lambda$ reduces to the usual covariance.

\begin{figure}
\centering
  \includegraphics[scale=0.30]{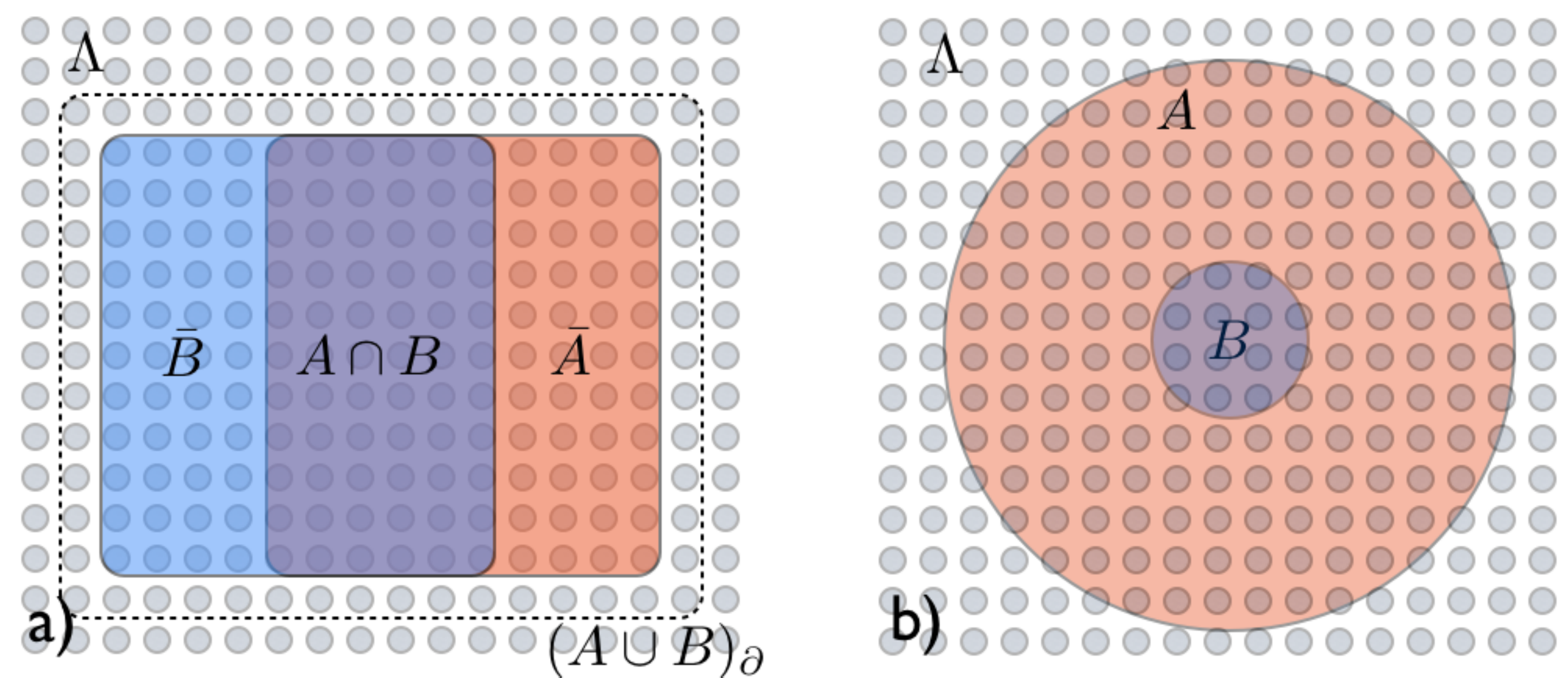}
    \caption{a) A subset $A\cup B\subseteq \Lambda$ of the full system, where $A\cap B\neq\emptyset$. We denote $\bar{A}\equiv (A\cup B)\setminus B$ and $\bar{B}\equiv (A\cup B)\setminus A$. The dotted lines around $A\cup B$ represent the boundary, which includes all terms of the Hamiltonian that overlap with $A\cup B$. The relevant distance is the width of the region $A\cup B$. b) $B\subseteq A \subseteq\Lambda$, where the relevant distance   is between the boundary of $B$ and the boundary of $A$. }
    \label{fig1}
\end{figure}

\begin{definition}[weak clustering]\label{def:weakclustering}
Let $\Lambda$ be a finite subset of $\bZ^d$ and let $\rho\in\cS^+_\Lambda$.  We say that $\rho$ satisfies {\rm weak clustering} if there exit constants $c,\xi>0$, such that for any observables $f,g\in\cA_\Lambda$, 
\be \Cov(f,g)\leq c ||f||_{2,\rho} ||g||_{2,\rho} e^{-d(\Sigma_f,\Sigma_g)/\xi},\label{weakclust}\ee
where $\Sigma_f$ ($\Sigma_g$) is the support of observable $f$ ($g$). 
\end{definition}

\begin{definition}[strong clustering]\label{def:strongclustering}
Let $\Lambda$ be a finite subset of $\bZ^d$ and let $\rho\in\cS^+_\Lambda$. \mjk{Let $\bE\in\{\bE^\rho,\bE^\cL\}$.} We say that $\rho$ satisfies {\rm strong clustering} with respect to $\bE$ if for any $A,B\subseteq\Lambda$ with $A\cap B\neq \emptyset$, there exist constants $c,\xi>0$ such that for any $f\in\cA_\Lambda$, 
\be \Cov_{A\cup B}(\bE_A(f),\bE_B(f))\leq c ||f||^2_{2,\rho} e^{-d(B\setminus A,A\setminus B)/\xi}.\label{strongclust}\ee
\end{definition}

It turns out that for the two conditional expectations considered in this paper it suffices to consider strong clustering for observables $f$ that act only on $A \cup B$ plus its boundary:

\begin{proposition}\label{prop:restrictionB}
Let $\Lambda$ be a finite subset of $\bZ^d$. Let $\Phi :\Lambda\mapsto \cA_\Lambda$ be an $r$-local bounded and commuting potential, and let $\rho$ be the Gibbs state of $H_\Lambda=\sum_k \Phi(k)$. \mjk{Let $\bE\in\{\bE^\rho,\bE^\cL\}$, then}
\begin{equation}\label{eqn:restB}
\sup_{f \in {\cal A}_{\Lambda}}  \frac{\Cov_{A\cup B}(\bE_A(f),\bE_B(f))}{||f||^2_{2,\rho}} = \sup_{f \in {\cal A}_{(A \cup B)_{\partial}}}  \frac{\Cov_{A\cup B}(\bE_A(f),\bE_B(f))}{||f||^2_{2,\rho_{A\cup B}}}.
\end{equation}
\end{proposition}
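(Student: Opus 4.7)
The plan is to prove the equality by a decomposition argument that reduces the supremum over $\cA_\Lambda$ to one over $\cA_{(A\cup B)_\partial}$. The key structural input is that, for a commuting potential, both $\bE_A^\cL$ and $\bE_A^\rho$ are local maps whose supports lie within a bounded neighborhood of $A$ contained in $(A\cup B)_\partial$ (this is Proposition \ref{thm:liouvilleprojector} in the Liouvillian-projector case, while in the minimal-conditional-expectation case the same manipulation as in the proof of Lemma \ref{lem:HBGen} generalized to an arbitrary region gives $\eta^\rho_A = (\Tr_A[e^{-\beta H_{A_\partial}}])^{-1/2} e^{-\beta H_{A_\partial}/2}$), and similarly for $\bE_B$ and $\bE_{A\cup B}$. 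Consequently, for any decomposition $f = \sum_i x_i \otimes y_i$ with $x_i \in \cA_{(A\cup B)_\partial}$ and $y_i \in \cA_{((A\cup B)_\partial)^c}$, the operators $(\id - \bE_{A\cup B})\bE_A(f)$ and $(\id - \bE_{A\cup B})\bE_B(f)$ act only on the $x_i$ factors, leaving the $y_i$ factors untouched.

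For the direction $\leq$, I would pick the basis $\{y_i\}$ orthonormal with respect to the inner product on $\cA_{((A\cup B)_\partial)^c}$ induced by tracing against the outer commuting factor of $\rho$. Using the commuting factorization $e^{-\beta H_\Lambda} = e^{-\beta H_{A\cup B}}\,e^{-\beta H_{(A\cup B)^c}}$ together with cyclicity of the trace, the $\rho$-inner product $\langle x\otimes y, x'\otimes y'\rangle_\rho$ can be rearranged so that the $y_i$-dependence factors out diagonally. The numerator $\Cov_{A\cup B}(\bE_A(f),\bE_B(f))$ and the denominator $||f||^2_{2,\rho}$ then split as diagonal sums $\sum_i |\alpha_i|^2 N(x_i)$ and $\sum_i |\alpha_i|^2 D(x_i)$, and since a weighted average cannot exceed its maximum, the overall ratio is bounded above by $\max_i N(x_i)/D(x_i)$, attained at some $x\in\cA_{(A\cup B)_\partial}$. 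For the reverse direction $\geq$, each $x \in \cA_{(A\cup B)_\partial}$ embeds into $\cA_\Lambda$ as $x\otimes\1$, and the same rearrangement shows that the outer factor contributes identical multiplicative scalars to the numerator and the denominator, which then cancel in the ratio, leaving exactly the RHS expression computed with reference state $\rho_{A\cup B}$.

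The main obstacle is this rearrangement step: because $e^{-\beta H_{A\cup B}}$ and $e^{-\beta H_{(A\cup B)^c}}$ share support on the boundary region of $(A\cup B)_\partial$, the state $\rho$ does not factor across $(A\cup B)_\partial$ and its complement as a tensor product, and the $\bL_2(\rho)$-inner product does not split naively between the local and outer parts. The argument must carefully combine the commutation $[e^{-\beta H_{A\cup B}}, e^{-\beta H_{(A\cup B)^c}}] = 0$ with cyclicity of the trace to expose the diagonal structure, in essentially the same spirit as the locality extraction for $\bE^\rho_k$ in the proof of Lemma \ref{lem:HBGen}.
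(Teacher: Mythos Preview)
Your overall strategy is sound and the locality inputs you cite are correct, but the ``diagonalisation'' step you flag as the main obstacle is in fact a genuine gap that cyclicity plus $[e^{-\beta H_{A\cup B}},e^{-\beta H_{(A\cup B)^c}}]=0$ alone will not close. The problem is that there is no choice of inner product on $\cA_{((A\cup B)_\partial)^c}$ for which $\langle x\otimes y_i,\,x'\otimes y_j\rangle_\rho$ becomes diagonal in $i,j$: the factor $e^{-\beta H_{(A\cup B)^c}/2}$ has support reaching into $(A\cup B)_\partial$ and therefore does not commute with the $x$'s, so after any rearrangement you still have cross terms coupling different $y_i$'s through the $x$-sector. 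Working with a Schmidt decomposition in the $\bL_2(\rho)$ Hilbert space (rather than the algebra) would fix the denominator, but then the $y_i$ are no longer elements of $\cA_{((A\cup B)_\partial)^c}$ and you lose the claim that $\bE_A,\bE_B,\bE_{A\cup B}$ act only on the $x_i$ tensor factors.

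The paper sidesteps exactly this difficulty by a similarity transformation rather than a decomposition. One writes
\[
\sup_{f\in\cA_\Lambda}\frac{\Cov_{A\cup B}(\bE_A(f),\bE_B(f))}{\|f\|_{2,\rho}^2}
=\sup_{f}\frac{\langle f,\cW_{A\cup B}(f)\rangle_\rho}{\langle f,f\rangle_\rho}
=\sup_{g}\frac{\tr{g\,\tilde\cW_{A\cup B}(g)}}{\tr{g^2}},
\]
where $\cW_{A\cup B}:=(\bE_A-\bE_{A\cup B})(\bE_B-\bE_{A\cup B})$, $g:=\Gamma_\rho^{1/2}(f)$ and $\tilde\cW_{A\cup B}:=\Gamma_\rho^{1/2}\cW_{A\cup B}\Gamma_\rho^{-1/2}$. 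The point is that the commuting-potential factorisation $\rho_\Lambda\propto\rho_{(A\cup B)_\partial}\,\rho_{((A\cup B)_\partial)^c}$ lets the outer factor pass \emph{through} the local superoperator $\cW_{A\cup B}$ and cancel, yielding $\tilde\cW_{A\cup B}=\Gamma_{\rho_{(A\cup B)_\partial}}^{1/2}\cW_{A\cup B}\Gamma_{\rho_{(A\cup B)_\partial}}^{-1/2}$, which manifestly acts only on $(A\cup B)_\partial$. Since the last supremum is now a standard Hermitian eigenvalue problem in the Hilbert--Schmidt inner product (which \emph{does} tensor-factor), the restriction to $g\in\cA_{(A\cup B)_\partial}$, hence $f\in\cA_{(A\cup B)_\partial}$, is immediate. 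In effect, the substitution $g=\Gamma_\rho^{1/2}(f)$ is the ``careful combination'' you were looking for: it absorbs the troublesome weight into the operator once and for all, rather than trying to untangle it term by term in a Schmidt sum.
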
 

\proof{
We consider the expression on the left hand side of Eqn. (\ref{eqn:restB}), and note that
\bea \sup_{f \in {\cal A}_{\Lambda}}  \frac{\Cov_{A\cup B}(\bE_A(f),\bE_B(f))}{||f||^2_{2,\rho}}  &=& \sup_{f \in {\cal A}_{\Lambda}}  \frac{\avr{f,(\bE_A-\bE_{A\cup B})(\bE_B-\bE_{A\cup B})(f)}_\rho}{\avr{f,f}_\rho}\nonumber\\
&=&\sup_{g\in\cA} \frac{\tr{g\hat{\cW}_{A\cup B}(g)}}{\tr{g^2}}\label{eqn:restB2}\eea 
where we made the replacement $g :=  \Gamma^{1/2}_{\rho}(f)$ and  $\Gamma_\rho (f):=\rho^{1/2}f\rho^{1/2}$ and defined the operators $\cW_{A\cup B}(f):=(\bE_A-\bE_{A\cup B})(\bE_B-\bE_{A\cup B})$ and $\tilde{\cW}_{A\cup B}=\Gamma^{1/2}_{\rho}\cW_{A\cup B} \Gamma^{-1/2}_{\rho}$. We note that $\tilde{\cW}_{A\cup B}$ is a hermitian operator, so Eqn. (\ref{eqn:restB2}) is simply an eigenvalue equation. 

Now, we will show that $\tilde{\cW}_{A\cup B}$ acts non-trivially only on $(A\cup B)_\partial$. We  write the subscript of $\rho$ explicitly so as to avoid confusion. 
Indeed, for any $g\in\cA_\Lambda$,
\bea \tilde{\cW}_{A\cup B}(g)&=& \Gamma_{\rho_\Lambda}^{1/2}\circ\cW_{A\cup B}\circ \Gamma_{\rho_\Lambda}^{-1/2} ( g) \\
&=& \Gamma_{\rho_{(A\cup B)_\partial}\rho_{((A\cup B)_\partial)^c}}^{1/2}\circ\cW_{A\cup B}\circ \Gamma_{\rho_{(A\cup B)_\partial}\rho_{((A\cup B)_\partial)^c}}^{-1/2} ( g) \\
&=& \Gamma_{\rho_{(A\cup B)_\partial}}^{1/2}\circ\cW_{A\cup B}\circ \Gamma_{\rho_{(A\cup B)_\partial}}^{-1/2} ( g).
\eea
Note that $\Gamma_{\rho_{(A\cup B)_\partial}}^{1/2}\circ\cW_{A\cup B}\circ \Gamma_{\rho_{(A\cup B)_\partial}}^{-1/2}$ only acts on $(A\cup B)_\partial$. But it is well known that the supremum in the variational characterization of the spectral radius is obtained by eigenvetors which have the same support as the operator. Thus, we recover the right-hand-side of Eqn. (\ref{eqn:restB}).
\qed}

\bigskip
\noindent \textit{Remarks:}

\begin{enumerate}[i]
\item The fact that the  observables can be restricted to having support only on the boundary of the region $A \cup B$ is reminiscent of the definition of clustering for classical spin systems, in which one is allowed to fix the boundary around a given region. Strong clustering goes in the direction of the Dobrushin-Schlossman uniqueness conditions \cite{DobrushinSchlossman}. However, since DLR theory of boundary conditions does not hold quantum mechanically \cite{WernerBC}, the stronger form of mixing in Eqn. (\ref{strongclust}) can not be expressed as local conditions which depend on individual boundary terms. This is because of the possibility of entangled boundary conditions, which have, as far as the authors know, not been studied much so far in the context of Gibbs states. 
\item Whenever $A\cup B=\Lambda$ then weak clustering also implies strong clustering since for any conditional expectation $\bE$, $\bE_A(f)$ has support on the complement of $A$, so that for any $h\in\cA_\Lambda$,
\bea \Cov_\Lambda(\bE_A(h),\bE_B(h))&\leq& c||\bE_A(h)||_{2,\rho}||\bE_B(h)||_{2,\rho} e^{-d(B\setminus A,A\setminus B)/\xi}\\
&\leq& c || h ||_{2,\rho} e^{-d(B\setminus A,A\setminus B)/\xi}\label{equiv3clust}\eea  
\end{enumerate}
\bigskip

We have defined exponential clustering in Eqns. (\ref{weakclust}) and (\ref{strongclust}) as being exponential decay of the covariance (with distance) weighted by the $\bL_2$ norms of the observables, instead of the operator norms, which is more common in the field. Given the ordering of $\bL_p$ norms (Lemma \ref{Lem:Lp-norm}), the operator norm always dominates the $\bL_2$ norm, so that our definitions of exponential clustering are strictly stronger than the usual ones. Although classically, the $\bL_2$ and the $\bL_\infty$ clustering are equivalent \cite{MartinelliReview}, we do not know if that is the case for quantum systems, even for commuting Hamiltonians. 

Note that the inner product used here is also unconventional. In particular, when $\rho$ is a pure state, then $\Cov_\rho(f,g)$ does not reduce to the usual pure-state covariance 
\be \left|\bra{\psi}f g\ket{\psi}-\bra{\psi}f \ket{\psi}\bra{\psi} g\ket{\psi}\right|,\label{eqn:classClust}\ee 
However, it is easy to define a modified (non-symmetric) covariance 
\be \Cov^{(0)}_\Lambda(f,g) := |\tr{\rho f^\dag g}-\tr{\rho f^\dag}\tr{\rho g}|,\ee
that reduces to Eq. (\ref{eqn:classClust}) when $\rho$ is pure. In general, $\Cov^{(0)}_\rho$ and $\Cov_\rho$ are not equivalent. But in the special case when $\rho$ is the Gibbs state of a commuting Hamiltonian, weak clustering in $\Cov^{(0)}_\rho$ implies weak clustering in $\Cov_\rho$. Indeed,

\begin{proposition}\label{equivgloClust}
Let $\Lambda$ be a finite subset of $\bZ^d$. Let $\Phi :\Lambda\mapsto \cA_\Lambda$ be an $r$-local bounded and commuting potential, and let $\rho$ be the Gibbs state of $H_\Lambda = \sum_{k \in \Lambda} \Phi(k)$. Then the following are equivalent:
\begin{itemize}
\item There exist constants $c_0,\xi_0>0$ such that
\be \Cov^{(0)}_{\Lambda}(f,g)\leq c_0 ||f||_{2,(0),\rho} ||g||_{2,(0),\rho} e^{-d(\Sigma_f,\Sigma_g)/\xi_0},\ee

\item There exist constants $c,\xi>0$ such that
\be \Cov_{\Lambda}(f,g)\leq c ||f||_{2,\rho} ||g||_{2,\rho} e^{-d(\Sigma_f,\Sigma_g)/\xi},\ee
\end{itemize}
where the modified $\bL_2$ norm is given by $||f||^2_{2,(0),\rho} :=\tr{\rho f^\dag f}$.
\end{proposition}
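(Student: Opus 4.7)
The plan exploits two special features of the commuting-Hamiltonian setting: a modular-flow identity that rewrites $\Cov$ as a $\Cov^{(0)}$ of modified observables, together with an elementary identity $\avr{f,g}_\rho=\tr{\rho fg}$ that holds whenever $d(\Sigma_f,\Sigma_g)$ exceeds the interaction range. For $s\in\bR$ set $\phi_s(X):=\rho^{s}X\rho^{-s}$. A cyclic-trace manipulation gives, for Hermitian $f,g$,
\[
\Cov(f,g)=\Cov^{(0)}(\phi_{1/4}(f),\phi_{1/4}(g))\quad\text{and}\quad \|\phi_{1/4}(f)\|_{2,(0),\rho}=\|f\|_{2,\rho}.
\]
Crucially, for commuting $H_\Lambda$ the split $H_\Lambda=H_f+H_{f^c}$ (with $H_f$ collecting the local terms supported on $f_\partial$) satisfies $[H_f,H_{f^c}]=0$ and $[f,H_{f^c}]=0$, so the apparently global operator $\phi_{1/4}(f)=e^{-\beta H_f/4}f\,e^{\beta H_f/4}$ is actually supported in $f_\partial$. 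Symmetrically for $g$.

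For the direction from the $\Cov^{(0)}$-clustering to the $\Cov$-clustering, I would apply the hypothesis to the modified pair:
\[
\Cov(f,g)=\Cov^{(0)}(\phi_{1/4}(f),\phi_{1/4}(g))\;\leq\; c_0\,\|\phi_{1/4}(f)\|_{2,(0),\rho}\|\phi_{1/4}(g)\|_{2,(0),\rho}\,e^{-d'/\xi_0},
\]
where $d':=d(\Sigma_{\phi_{1/4}(f)},\Sigma_{\phi_{1/4}(g)})\geq d(\Sigma_f,\Sigma_g)-2r$ by the support inclusions. The norm identity then recovers exactly $\|f\|_{2,\rho}\|g\|_{2,\rho}$ on the right, and the geometric loss is absorbed into a new constant $c:=c_0\,e^{2r/\xi_0}$ while keeping $\xi=\xi_0$.

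For the reverse direction, decompose $H_\Lambda=H_f+H_g+H_0$, where $H_0$ collects the local terms touching neither $\Sigma_f$ nor $\Sigma_g$. When $d(\Sigma_f,\Sigma_g)>2r$, the supports of $H_f$ and $H_g$ are disjoint; each of $e^{-\beta H_0/2}$, $e^{-\beta H_g/2}$ commutes with $f$, while $e^{-\beta H_0/2}$, $e^{-\beta H_f/2}$ commutes with $g$. Propagating these exponentials through $\tr{e^{-\beta H/2}f\,e^{-\beta H/2}g}$ and using cyclicity collapses the trace to $\tr{e^{-\beta H}fg}$, i.e.\ $\Cov(f,g)=\Cov^{(0)}(f,g)$ in this regime. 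Combined with the universal inequality $\|h\|_{2,\rho}\leq\|h\|_{2,(0),\rho}$ (which follows from AM--GM applied to the weights $\sqrt{\lambda_i\lambda_j}\leq\tfrac12(\lambda_i+\lambda_j)$ appearing in the spectral representations of both norms), this yields $\Cov^{(0)}$-clustering at large $d$; the short-distance regime $d\leq 2r$ is handled by the Cauchy--Schwarz bound $\Cov^{(0)}(f,g)\leq 2\|f\|_{2,(0),\rho}\|g\|_{2,(0),\rho}$, with the missing exponential factor absorbed into the prefactor.

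The main technical point is the \emph{locality} of $\phi_{1/4}$ in the commuting case: it is simultaneously what yields the modular identity with controlled support and what makes the norm identity $\|\phi_{1/4}(f)\|_{2,(0),\rho}=\|f\|_{2,\rho}$ productive. Without commutativity, $\phi_{1/4}(f)$ generically spreads over all of $\Lambda$ and the same argument collapses.
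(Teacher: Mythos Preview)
Your argument is correct. For the direction $\Cov^{(0)}$-clustering $\Rightarrow$ $\Cov$-clustering you do exactly what the paper does: apply the modular map $\phi_{1/4}(X)=\rho^{1/4}X\rho^{-1/4}$, use the identity $\Cov(f,g)=\Cov^{(0)}(\phi_{1/4}(f),\phi_{1/4}(g))$ together with $\|\phi_{1/4}(f)\|_{2,(0),\rho}=\|f\|_{2,\rho}$, and invoke locality of $\phi_{1/4}$ to control the support shift by $2r$.

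For the reverse direction your route is genuinely different. The paper simply runs the same argument with the inverse map $\phi_{-1/4}(X)=\rho^{-1/4}X\rho^{1/4}$, obtaining $\Cov^{(0)}(f,g)=\Cov(\phi_{-1/4}(f),\phi_{-1/4}(g))$ and $\|\phi_{-1/4}(f)\|_{2,\rho}=\|f\|_{2,(0),\rho}$ in one stroke, with the same $2r$ support loss. You instead split $H_\Lambda=H_f+H_g+H_0$ and commute the exponentials through to get the exact equality $\Cov(f,g)=\Cov^{(0)}(f,g)$ once $d(\Sigma_f,\Sigma_g)>2r$, and then invoke the norm comparison $\|h\|_{2,\rho}\le\|h\|_{2,(0),\rho}$ (valid for Hermitian $h$ via AM--GM on the eigenvalue weights), handling the short-distance regime separately by Cauchy--Schwarz. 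This works, but it is more labor than needed: the paper's symmetric use of $\phi_{-1/4}$ avoids the case split, avoids the norm inequality, and makes the two directions manifestly dual. What your approach buys is an explicit observation that the two covariances literally coincide at separations beyond the interaction range, which is a nice structural fact in its own right even if not required for the proposition.
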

\proof{
Given an operator $f\in\cA_\Lambda$ with support on $A\subseteq\Lambda$, $\rho^s f \rho^{-s}$ has support on $A_\partial$, for any $s\in[0,1]$, since $\rho$ is the Gibbs state of a commuting Hamiltonian. Now define \mjk{$\tilde{f}\equiv \rho^{1/4}f\rho^{-1/4}$} and $\tilde{g}\equiv\rho^{1/4}g\rho^{-1/4}$. Then we get
\bea \Cov_{\Lambda}(f,g)&=& \left|\tr{\rho^{1/2}f^\dag\rho^{1/2}g}-\tr{\rho f^\dag}\tr{\rho g}\right|\\
&=& \left|\tr{\rho\tilde{f}^\dag\tilde{g}}-\tr{\rho \tilde{f}^\dag}\tr{\rho \tilde{g}}\right|\\
&=& \Cov_{\Lambda}^{(0)}(\tilde{f},\tilde{g})\\
&\leq& c \|\tilde{f}^\dag\|_{2,(0),\rho}\|\tilde{g}\|_{2,(0),\rho} e^{-(d(\Sigma_{f},\Sigma_{g})-2 r)/\xi}\\
&=&c \|f^\dag\|_{2,\rho}\|g\|_{2,\rho} e^{-(d(\Sigma_{f},\Sigma_{g})-2 r)/\xi},\eea
where we have used that 
\bea \|\tilde{g}\|^2_{2,(0),\rho}&=& \tr{\rho \tilde{g}^\dag \tilde{g}}\\
&=& \tr{\rho^{1/2}g^\dag\rho^{1/2}g}=\|g\|^2_{2,\rho}\eea
The other direction is identical, except that one has to define \mjk{$\tilde{f}\equiv \rho^{-1/4}f\rho^{1/4}$ and $\tilde{g}\equiv\rho^{-1/4}g\rho^{1/4}$}.
\qed}

\noindent \textit{Remark}:  A generalized covariance which interpolates between $\Cov^{(0)}_\rho$ and $\Cov_\rho$ was used in Ref.  \cite{Klietch}, and was shown to be a necessary ingredient in a proof of the existence of  a universal critical temperature above which correlations in the Gibbs state are exponentially clustering. This led to a stability theorem for locally perturbed Gibbs states at high temperatures. 

\subsection{Local indistinguishably}

One of the main contributions in this work is to introduce extended notions of clustering to characterize phases where correlations decay rapidly in a very strong sense. In this section, we consider how weak clustering relates to another important measure of correlation: 

 \begin{definition}[Local indistinguishability]\label{localindistinguishability}
Let $\Lambda$ be a finite subset of $\bZ^d$. Let $\Phi :\Lambda\mapsto \cA_\Lambda$ be an $r$-local bounded and commuting potential, and let $\rho$ be the Gibbs state of $H_\Lambda=\sum_{k \in \Lambda} \Phi(k)$. Let $B\subseteq A\subseteq\Lambda$ (see Fig. \ref{fig1}b), and \mjk{let $\bE\in\{\bE^\rho,\bE^\cL\}$}. Then, we say that \mjk{$\bE$ satisfies {\rm local indistinguishably}} if there exist constants $c,\xi>0$ such that for every two fixed points of $\bE^*_A$, $\rho^A$ and $\sigma^A$:
\be ||(\rho^A)_B-(\sigma^A)_B||_1\leq  c e^{-d(B,\Lambda\setminus A)/\xi}.\label{LTQO}\ee

\end{definition}

A condition similar to local indistinguishably was previously considered in Ref. \cite{Angelo}, and called \textit{Local Topological Quantum Order (LTQO)} because of an analogous condition for ground states of topologically ordered Hamiltonians (see also Refs. \cite{Michalakis,LTQOPEPS} for a closed system analogue). In the Gibbs sampler setting, as far as we know this condition does not appear to be connected with topological order, which is why we give it a different name here (see however the discussion in the outlook). 

We now show that a strengthening of weak clustering (changing  the bound from 2-norm to the product of 1- and infinity-norms) is equivalent to local indistinguishably:

\begin{theorem}\label{StrongCLusLTQO}
Let $\Lambda$ be a finite subset of $\bZ^d$. Let $\Phi :\Lambda\mapsto \cA_\Lambda$ be an $r$-local bounded and commuting potential,  $\rho$ be the Gibbs state of $H_\Lambda=\sum_{k \in \Lambda} \Phi(k)$, and \mjk{$\bE\in\{\bE^\rho,\bE^\cL\}$}. Then the following are equivalent:
\begin{itemize}
\item There exist constants $c_0,\xi_0>0$ such that for any $f,g\in\cA_\Lambda$, 
\be \Cov_{\Lambda}(f,g)\leq c ||f||_{1,\rho} ||g||_\infty e^{-d(\Sigma_f,\Sigma_g)/\xi},\label{strongclust1}\ee

\item  $\bE$ satisfies local indistinguishably. 
\end{itemize}

\end{theorem}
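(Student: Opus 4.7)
Both implications will rely on two elementary properties of the local conditional expectations from Sec.~\ref{sec:condexp}: $\rho$-self-adjointness $\langle\bE_A(f),g\rangle_\rho=\langle f,\bE_A(g)\rangle_\rho$, and the locality relation $\bE_A(h)=h$ whenever $\Sigma_h\cap A_\partial=\emptyset$. A preliminary observation that I use throughout is that reversibility yields the intertwining identity $\bE_A^*(\rho^{1/2}X\rho^{1/2})=\rho^{1/2}\bE_A(X)\rho^{1/2}$; consequently the Radon-Nikodym map $F^A:=\rho^{-1/2}\rho^A\rho^{-1/2}$ sends every fixed point of $\bE_A^*$ to a fixed point of $\bE_A$, which by range considerations is supported on $A^c$. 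The duality in Lemma~\ref{Lem:Lp-norm} applied to the identity $\langle h,F^A\rangle_\rho=\tr{h\rho^A}$ yields the useful estimate $\|F^A-F^{\sigma^A}\|_{1,\rho}=\|\rho^A-\sigma^A\|_1\leq 2$.

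\emph{Clustering $\Rightarrow$ local indistinguishability.}
I would begin from the variational identity $\|(\rho^A)_B-(\sigma^A)_B\|_1=\sup_{h\in\cA_B,\,\|h\|_\infty\leq 1}|\tr{h(\rho^A-\sigma^A)}|$, rewrite $\tr{h(\rho^A-\sigma^A)}$ as $\langle h,F\rangle_\rho$ with $F:=F^A-F^{\sigma^A}$, and observe that $\tr{\rho F}=\tr{\rho^A-\sigma^A}=0$ so this coincides with the covariance $\Cov_\Lambda(h,F)$. Since $F$ is supported on $A^c$, the strong clustering hypothesis (\ref{strongclust1}) applies and yields $\Cov_\Lambda(h,F)\leq c\|F\|_{1,\rho}\|h\|_\infty e^{-d(B,\Lambda\setminus A)/\xi}\leq 2c\,e^{-d(B,\Lambda\setminus A)/\xi}$, which is precisely the local-indistinguishability bound (\ref{LTQO}).

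\emph{Local indistinguishability $\Rightarrow$ clustering.}
Given $f,g\in\cA_\Lambda$ at distance $d=d(\Sigma_f,\Sigma_g)$, I would choose a region $A$ such that $\Sigma_g\subseteq A$, $\Sigma_f\cap A_\partial=\emptyset$, and $d(\Sigma_g,\Lambda\setminus A)\geq d/2-r$. Then $\bE_A(f)=f$, and combining self-adjointness with consistency gives $\Cov_\Lambda(f,g)=|\langle f,\bE_A(g)-\tr{\rho g}\1\rangle_\rho|\leq \|f\|_{1,\rho}\|\bE_A(g)-\tr{\rho g}\1\|_\infty$ by H\"older's inequality. It remains to extract from LTQO an operator-norm decay of $\bE_A(g)-\tr{\rho g}\1$. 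For this I would parameterize the fixed points of $\bE_A^*$ in the commuting case as $\rho^A_\omega=Y\omega Y^\dagger$ with $Y:=\rho^{1/2}\rho_{A^c}^{-1/2}$ and $\omega$ ranging over states on $A^c$; a short calculation exploiting the factorization of $\rho^{1/2}$ shows both that the normalization $\tr{\rho^A_\omega}=1$ is automatic and that $\tr{g\rho^A_\omega}=\tr{\bE_A(g)\omega}$. Taking $\sigma^A=\rho$ in (\ref{LTQO}) then gives $|\tr{(\bE_A(g)-\tr{\rho g}\1)\omega}|\leq c\|g\|_\infty e^{-d(\Sigma_g,\Lambda\setminus A)/\xi}$ for every state $\omega$ on $A^c$; since $\bE_A(g)-\tr{\rho g}\1$ is Hermitian and supported on $\partial A$, varying $\omega$ over pure states on $\partial A$ upgrades this pointwise bound into the desired operator-norm estimate.

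\emph{Main obstacle.}
The technical heart of the argument is the commuting-case computation that identifies $\Tr_A[Y^\dagger g Y]$ with $\bE_A(g)$ and supplies the clean normalization, both of which rely on the factorization $\rho^{1/2}=Z^{-1/2}\prod_{j\in\Lambda}e^{-\beta\Phi(j)/2}$. Without commutativity this would break and introduce non-Hermitian similarity dressings obstructing the pure-state maximization step. The case $\bE=\bE^\cL$ is handled by the same plan once one observes that $\bE^\cL_A(h)=h$ for $\Sigma_h\cap A_\partial=\emptyset$ (which follows immediately from $\bE^\cL_A=\lim_{t\to\infty}e^{t\cL_A}$ and the locality of $\cL_A$) and that an analogous parameterization of $(\bE^\cL_A)^*$-fixed points is available by frustration freeness.
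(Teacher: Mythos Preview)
Your argument is correct and follows essentially the same two-step strategy as the paper: for clustering $\Rightarrow$ local indistinguishability, pass to a Radon--Nikodym observable and apply the covariance bound; for the converse, use H\"older to isolate $\|f\|_{1,\rho}$ times an $\infty$-norm and then control that $\infty$-norm via LTQO. The paper phrases direction~1 through a preimage $\phi$ with $\bE_A^*(\phi)=\rho^A$ and an auxiliary region $C=\Lambda\setminus B$, whereas you work directly with $F^A=\rho^{-1/2}\rho^A\rho^{-1/2}$; for direction~2 the paper uses two regions $A,B$ with $A\cup B=\Lambda$ and bounds $\|(\bE_A\bE_B-\bE_\Lambda)(f)\|_\infty$ by $\sup_\varphi\|\Tr_B[(\bE_A^*-\bE_\Lambda^*)(\varphi)]\|_1$, while you use a single region and bound $\|\bE_A(g)-\tr{\rho g}\1\|_\infty$ via the explicit fixed-point parametrization $\rho^A_\omega=Y\omega Y^\dagger$. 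These are cosmetic differences. One small advantage of your route: in direction~2 the paper's step needs $\bE_A^*(\varphi)$ to be a fixed point, which is automatic only for the projective $\bE^\cL$; your explicit parametrization handles $\bE^\rho$ cleanly as well.
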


\proof{
We first show that Eqn. (\ref{strongclust1}) implies local indistinguishably. 
Let $\psi,\phi\in\cS_\Lambda$ be such that $\rho^A=\bE_A^*(\phi)$ and $\sigma^A=\bE_A^*(\psi)$. We now choose $A,C\subseteq \Lambda$ such that $B\cap C=\emptyset$ and $A\cup C=\Lambda$, $C=\Lambda\setminus B$ (as illustrated in Fig. \ref{fig1}b). Then

\bea ||(\rho^A)_B-(\sigma^A)_B||_1&=&\sup_{g\in\cA_B, ||g||=1} |\tr{\bE_A(g)(\phi-\psi)}|\nonumber\\
&=& \sup_{g\in\cA_B, ||g||=1} |\tr{(\bE_A(g)-\bE_{\Lambda}(g))\psi}-\tr{(\bE_A(g)-\bE_{\Lambda}(g))\phi}|\nonumber\\
&\leq& 2\sup_{g\in\cA_B, ||g||=1}|\tr{\phi(\bE_A(g)-\bE_\Lambda(g))}|\nonumber\eea

Now, defining $y:=\rho^{-1/2}\phi\rho^{-1/2}$, and noting that $\bE_C(g)=g$, we get

\bea ||(\rho^A)_B-(\sigma^A)_B||_1&=& 2\sup_{g\in\cA_B, ||g||=1} |\tr{\phi(\bE_A\bE_C(g)-\bE_\Lambda(g))}|\nonumber\\
&=& 2\sup_{g\in\cA_B, ||g||=1} |\avr{y,(\bE_A\bE_C-\bE_\Lambda)(g)}_\rho|\nonumber\\
&=& 2 \sup_{g\in\cA_B, ||g||=1} \Cov_\Lambda(\bE_A(y),\bE_C(g))\nonumber\\
&\leq& 2 \sup_{g\in\cA_B, ||g||=1} ||\bE_A(y)||_{1,\rho}||\bE_C(g)||_\infty e^{-d(B,\Lambda/A)/\xi}\nonumber\\
&=& 2 \sup_{g\in\cA_B, ||g||=1} ||y||_{1,\rho}||g||_\infty e^{-d(B,\Lambda\setminus A)/\xi}\nonumber\\
&\leq& 2 e^{-d(B,\Lambda/A)/\xi}\nonumber,\eea
where we have used that $||y||_{1,\rho}=\tr{\phi}=1$, and that $\bE_C(g)=g$. 

For the converse, let $A,B\subseteq\Lambda$ such that $A\cup B=\Lambda$ and let $f,g\in\cA_\Lambda$, where $f$ has support on $B^c$ and $g$ has support on $A^c$. Now consider

\bea \Cov_\Lambda(g,f)&=& \Cov_\Lambda(\bE_A(g),\bE_B(f))\nonumber\\
&=&|\avr{g,(\bE_A\bE_B-\bE_\Lambda)(f)}_\rho|\nonumber\\
&\leq& ||g||_{1,\rho}||(\bE_A\bE_B-\bE_\Lambda)(f)||_\infty\nonumber\\
&=&  ||g||_{1,\rho}\sup_{\varphi\in\cS_\Lambda} |\tr{\varphi (\bE_A\bE_B-\bE_\Lambda)(f)}|\nonumber\\
&=& ||g||_{1,\rho}\sup_{\varphi\in\cS_\Lambda} |\tr{(\bE_A^*-\bE_\Lambda^*)(\varphi)\bE_B(f)}|\nonumber\\
&=& ||g||_{1,\rho} ||\bE_B(f)||_\infty\sup_{\varphi\in\cS_\Lambda} ||\Tr_B[(\bE_A^*-\bE_\Lambda^*)(\varphi)]||_1\nonumber\\
&\leq&c||g||_{1,\rho} ||f||_\infty e^{-d(B\setminus A,A\setminus B)/\xi},
\eea
where we have used that $\bE_B$ is contractive in the $\infty \rightarrow\infty$ norm. 
\qed}

\noindent \textit{Remarks:}\\

\begin{enumerate}[i]
\item Eqn. (\ref{strongclust1}) only differs from the definition of weak clustering in that the norms on the right hand side are different. However, it is exactly this difference that allows the connection with local indistinguishably. Combining Theorem \ref{StrongCLusLTQO} and the results of Ref. \cite{Angelo}, we see that this form of clustering follows from having a system size-independent log-Sobolev constant, while weak clustering follows from the system having merely a constant spectral gap \cite{LogSobolevCC}. 
\item It is not know whether there is a relation between strong clustering and local indistinguishably. By analogy with classical results, one might expect that under certain conditions local indistinguishability implies strong clustering, but this is far from clear in our setting. It can be shown that local indistinguishability implies weak clustering, by Thm. \ref{StrongCLusLTQO} and an interpolation argument \footnote{This implication was pointed out to the authors by Angelo Lucia and David Perez-Garcia.}.
\end{enumerate}

%%%%%%%%%%%%%%%%%%%%%%%%%%%%%%%%%%%%%%%%%%%%%%%%%%%%%%%%%%%%%%%%%%%%%

\section{Main Results}\label{sec:main}

We are now in a position to prove the main results of the present work, namely  the equivalence between strong clustering of the Gibbs state and the associated Gibbs sampler (Heat Bath or Davies) being gapped. It turns out that both directions of the proof require different methods, hence for clarity of presentation we will separate them into two independent theorems. 

To start with, we recall the definition of the conditional variance, as it will play an important role in the proof. Let $\Lambda$ be the full lattice and let $A\in\Lambda$. Then the \textit{conditional variance} of $\rho\in\cS^+_\Lambda$ with respect to the conditional expectation $\bE$ on subset $A$ is given for any $f\in\cA_\Lambda$ by
\be \Var_A(f)=\avr{f-\bE_A(f),f-\bE_A(f)}_{\rho}=\|f-\bE_A(f)\|^2_{2,\rho}.\ee
The conditional variance reduces to the regular variance on the full lattice:
$\Var_\Lambda(f)=\avr{f,f}_{\rho}-\tr{\rho f }^2$. 

We now prove a proposition which relates the conditional variance of two subsets $A,B$ to the variance of their union $(A\cup B)$ when their overlap $A\cap B$ is non-zero (see Lemma 3.1 of \cite{Bertini} for a similar statement in the classical setting).  

\begin{proposition}\label{Variances}
Let $\Lambda$ be a finite subset of $\bZ^d$. Let $\rho\in\cS^+_\Lambda$ and let  $A,B\subseteq\Lambda$ be subsets of $\Lambda$ with non-zero overlap (i.e. $A\cap B\neq\emptyset$). Let \mjk{ $\bE\in\{\bE^\rho,\bE^\cL\}$}, and  suppose that there exists a positive constant $\epsilon>0$ such that for any $f\in \cA_\Lambda$, 
\be \Cov_{A\cup B}(\bE_A(f),\bE_B(f))\leq \epsilon \Var_{A\cup B}(f).\ee
Then
\be \Var_{A\cup B}(f)\leq (1-2\epsilon)^{-1}(\Var_A(f)+\Var_B(f))\ee
\end{proposition}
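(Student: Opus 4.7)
The plan is to work in the non-commutative $\bL_2(\rho)$ Hilbert space and to regard $P := \bE_A$, $Q := \bE_B$, $R := \bE_{A\cup B}$ as orthogonal projections satisfying the nesting relation $PR = RP = R$ (and similarly for $Q$). For $\bE^\cL$ this is automatic from (\ref{LLproj}) together with frustration-freeness of $\cL$; for $\bE^\rho$ it requires replacing $\bE^\rho_A$ with its iterated-limit projection $\lim_{n\to\infty}(\bE^\rho_A)^n$ of the remark following Proposition \ref{thm:localprojector}, which by monotonicity preserves all the conditional variances. Throughout, $\|\cdot\|$ and $\avr{\cdot,\cdot}$ abbreviate $\|\cdot\|_{2,\rho}$ and $\avr{\cdot,\cdot}_\rho$.

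The first move is to set $v := (I - R)f = f - \bE_{A\cup B}(f)$, so that $\Var_{A\cup B}(f) = \|v\|^2$. Using $PR = R$, one checks $(P - R)f = Pv$ and $v - Pv = f - \bE_A(f)$, whence Pythagoras for the orthogonal projection $P$ gives $\|v\|^2 = \|Pv\|^2 + \Var_A(f)$, with the analogous identity holding for $Q$. Abbreviating $a := Pv$ and $b := Qv$ and summing the two identities, one arrives at the pivotal equality
\be
\|a\|^2 + \|b\|^2 = 2\Var_{A\cup B}(f) - \Var_A(f) - \Var_B(f).
\ee

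Second, I would upper-bound this quantity by $(1 + 2\epsilon)\Var_{A\cup B}(f)$. The hypothesis, after using the nesting relation to simplify $\Cov_{A\cup B}(\bE_A(f), \bE_B(f))$, reads $|\avr{a, b}| \leq \epsilon\Var_{A\cup B}(f)$. Moreover, since $P - Q$ is self-adjoint and its quadratic form $\avr{w, (P - Q)w} = \|Pw\|^2 - \|Qw\|^2$ always lies in $[-\|w\|^2, \|w\|^2]$, it has operator norm at most one, so $\|a - b\|^2 = \|(P - Q)v\|^2 \leq \|v\|^2 = \Var_{A\cup B}(f)$. Expanding $\|a\|^2 + \|b\|^2 = \|a - b\|^2 + 2\avr{a, b}$ and splitting on the sign of $\avr{a, b}$ (discarding the cross term when it is non-positive, and bounding it via the hypothesis otherwise) yields in both cases
\be
\|a\|^2 + \|b\|^2 \leq (1 + 2\epsilon)\Var_{A\cup B}(f).
\ee

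Combining the two displayed inequalities rearranges immediately to $(1 - 2\epsilon)\Var_{A\cup B}(f) \leq \Var_A(f) + \Var_B(f)$, as claimed. I expect the main subtlety to lie not in the algebra but in the preliminary setup: Pythagoras demands a genuine orthogonal projection, and identifying $\Cov_{A\cup B}(\bE_A(f), \bE_B(f)) = |\avr{a, b}|$ requires $\bE_{A\cup B}\bE_A = \bE_{A\cup B}$. Both hold transparently for $\bE^\cL$ by frustration-freeness and the semigroup property; for $\bE^\rho$ one invokes the iterated-limit projection and verifies the nesting by direct manipulation of $\Gamma_\rho$ using the partial-trace identity (\ref{eqn:basicErho}).
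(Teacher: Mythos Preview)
Your argument is clean and correct when $\bE=\bE^{\cL}$, where $P,Q,R$ are genuine orthogonal projections in $\bL_2(\rho)$ and the nesting $PR=RP=R$ follows from $\ker\cL_{A\cup B}\subset\ker\cL_A$. In that regime your route (Pythagoras plus $\|P-Q\|\le 1$) and the paper's route in fact produce the identical intermediate inequality $\|a\|^2+\|b\|^2\le\|v\|^2+2\avr{a,b}$, so there the difference is only cosmetic.

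The gap is in the $\bE^{\rho}$ case. Passing to the iterated-limit projection $\hat\bE_A:=\lim_n(\bE^{\rho}_A)^n$ does give you orthogonal projections, but it also \emph{changes every quantity appearing in the statement}: the conditional variances $\Var_A(f)=\|f-\bE^{\rho}_A(f)\|^2$, $\Var_{A\cup B}(f)$, and the covariance $\Cov_{A\cup B}(\bE_Af,\bE_Bf)$ are all defined with the original, non-projective $\bE^{\rho}$. Monotonicity only says $\avr{f,(\bE^{\rho})^nf}$ is decreasing in $n$; it does \emph{not} give $\|f-\bE^{\rho}_Af\|=\|f-\hat\bE_Af\|$, so the phrase ``preserves all the conditional variances'' is unjustified. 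To rescue the argument you would need the replacement to move two variances on the right of the conclusion, one on the left, and the covariance in the hypothesis, all in compatible directions simultaneously, and there is no reason for that to happen. You also never actually verify the nesting $\hat\bE_{A\cup B}\hat\bE_A=\hat\bE_{A\cup B}$ for the minimal conditional expectation; it does not follow from the partial-trace identity~(\ref{eqn:basicErho}) alone.

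The paper sidesteps all of this with a more elementary and strictly more general device: expand
\[
0\;\le\;\bigl\|(\id-\bE_{A\cup B})\circ(\id-\bE_A-\bE_B)(f)\bigr\|_{2,\rho}^2 .
\]
The cross term is $2\,\Cov_{A\cup B}(\bE_Af,\bE_Bf)$ \emph{by definition}, with no nesting relation required, and the two diagonal terms $\|(\id-\bE_{A\cup B})(f-\bE_Af)\|^2$ are bounded by $\Var_A(f)$ using only that $\id-\bE_{A\cup B}$ is an $\bL_2$-contraction---which holds for any self-adjoint $\bE$ with spectrum in $[0,1]$, hence for $\bE^{\rho}$ directly. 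No projectivity, no Pythagoras, and no iterated limit are needed.
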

\proof{
Consider the following identity:
\bea 0&\leq& ||(\id-\bE_{A\cup B})\circ(\id-\bE_A-\bE_B)(f)||_{2,\rho}^2\nonumber\\
&=& -||(\id-\bE_{A\cup B})(f)||_{2,\rho}^2+||(\id-\bE_{A\cup B})\circ(\id-\bE_A)(f)||_{2,\rho}^2+\label{eq:varAB2}\\
&&||(\id-\bE_{A\cup B})\circ(\id-\bE_B)(f)||_{2,\rho}^2+2\avr{(\id-\bE_{A\cup B})\circ\bE_A(f),(\id-\bE_{A\cup B})\circ\bE_B(f)}_\rho\nonumber\eea

Recall that $||(\id-\bE_{A\cup B})(f)||_{2,\rho}^2=\Var_{A\cup B}(f)$,

\be ||(\id - \bE_{A\cup B})\circ(\id - \bE_A)(f)||_{2,\rho}^2\leq ||(f-\bE_A(f))||_{2,\rho}^2 = \Var_A(f)\nonumber\ee
and similarly for $\bE_B$, since $(\id-\bE_{A\cup B})$ is a positive contractive map,. Then we get from Eqn. (\ref{eq:varAB2}) that

\bea \Var_{A\cup B}(f)&\leq& \Var_A(f)+\Var_B(f)+2\avr{(\id-\bE_{A\cup B})\circ\bE_A(f),(\id-\bE_{A\cup B})\circ\bE_B(f)}_\rho\nonumber\\
&=&\Var_A(f)+\Var_B(f)+2\Cov_{A\cup B}(\bE_A(f),\bE_B(f))\nonumber\\
&\leq& \Var_A(f)+\Var_B(f)+2\epsilon \Var_{A\cup B}(f).\eea

This leads to the desired inequality

\be \Var_{A\cup B}(f)\leq (1-2\epsilon)^{-1}(\Var_A(f)+\Var_B(f)).\ee

\qed}

Note that for the proof of Proposition \ref{Variances} we have only used very general properties of the conditional expectations; in particular we have not assumed that $\rho$ is a Gibbs state, nor that $\bE$ has any local structure.

\subsection{Strong clustering implies gapped Gibbs sampler}

We now prove the first main theorem of the paper, which states that if the Gibbs state $\rho$ of a local commuting Hamiltonian satisfies strong clustering with respect to any of the two conditional expectations defined in Sec.\ref{sec:condexp}, then the associated Gibbs sampler (Heat Bath or Davies) has a spectral gap which is independent of the size of the lattice $|\Lambda|$.

By construction, the pairs $(\bE^\cL,\cL^D_\Lambda)$ and $(\bE^\rho,\cL^H_\Lambda)$ have the same kernel and share the same essential properties, such as reversibility and locality. In the remainder of this section, we will explicitly consider the Davies generator and associated Liouvillian expectation $(\cL^D,\bE^\cL)$, but it should be clear that the proof carries through almost unchanged for the pair $(\cL^H,\bE^\rho)$.

As the proof is based on an iterative construction comparing gaps of different sub lattices, it is necessary to define the concept of the gap of $\cL$ restricted to region $A$:

\begin{definition}\label{def:gap}
Let $\Lambda$ be a subset of $\bZ^d$. Let $\Phi :\Lambda\mapsto \cA_\Lambda$ be an $r$-local bounded and commuting potential, and let $\rho$ be the Gibbs state of $H = \sum_{k \in \Lambda} \Phi(k)$. The {\rm conditional gap} of $\cL^D_\Lambda$ with respect to $A\subseteq \Lambda$ is

\be \lambda_\Lambda(A) := \inf_{f\in\cA_\Lambda}\frac{\avr{f,-\cL^D_A(f)}_\rho}{\Var_A(f)}.\label{eqn:gap}\ee

\end{definition}
Given that $\bE^\cL$ is a projector and $\cL^D_A$ is reversible, Eq. (\ref{eqn:gap}) is identical to the spectral definition of the gap, i.e. the smallest non-zero eigenvalue of $-\cL^D_A$. In the case of $\cL^H_A$, this will not be so, since $\bE^\rho$ is not a projective conditional expectation. 

Note that the optimization is taken over operators with support on the full lattice $\Lambda$ with respect to the Gibbs state of the full lattice.  The gap $\lambda_\Lambda(A)$ is non-zero for any subset $A\subseteq\Lambda$ because by assumption $\bE_A$ and $\cL_A$ have the same kernel, and $(\bE,\cL)$ are assumed to be locally primitive\footnote{Recall  that the entire framework only makes sense for primitive semigroups, since $\bL_p$ spaces rely on a full rank reference state. }.

The next lemma tells us that when the potential $\Phi_\Lambda$ is commuting,  the conditional gap of $\cL$ with respect to $A$ in $\Lambda$  is the same as the conditional gap of $\cL$ with respect to $A$ in $A_\partial$.

\begin{lemma}\label{lemma:gap}
Let $\Lambda$ be a finite subset of $\bZ^d$. Let $\Phi :\Lambda\rightarrow\cA_\Lambda$ be an $r$-local bounded and commuting potential. Let $A\subseteq\Lambda$ and let $\rho_A$ be the Gibbs state of $H_A=\sum_{k\in A}\Phi_\Lambda(k)$ on $\cS_{A_\partial}$ (see Eqn. (\ref{eqn:gibbsstateA})) and $\rho$  the Gibbs state of $H_\Lambda=\sum_{k\in\Lambda}\Phi_\Lambda(k)$ on $\cS_\Lambda$. Then

\be \lambda_\Lambda(A)=\lambda_{A_\partial}(A):= \inf_{f\in\cA_{A_\partial}}\frac{\avr{f,-\cL^D_A(f)}_{\rho_{A}}}{||f-\bE_A(f)||^2_{2,\rho_{A}}}\ee
\end{lemma}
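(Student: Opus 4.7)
The plan is to exploit the locality of $\cL_A^D$ and $\bE_A^\cL$, together with the commuting structure of the potential, to (i) reduce the variational problem to test functions in $\cA_{A_\partial}$ and (ii) replace the reference state $\rho$ by $\rho_A$ on such test functions.

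For the locality reduction, Lemma~\ref{lem:DaviesGen}(2) guarantees that $\cL_A^D$ acts non-trivially only on $A_\partial$; hence so does $\bE_A^\cL=\lim_{t\to\infty}e^{t\cL_A^D}$. Decomposing any $f\in\cA_\Lambda$ as $f=\sum_i f_i\otimes u_i$ with $f_i\in\cA_{A_\partial}$ and $\{u_i\}\subset\cA_{A_\partial^c}$ chosen orthonormal with respect to the inner product induced by $\Tr_{A_\partial}(\rho)$, one has $\cL_A^D(f)=\sum_i\cL_A^D(f_i)\otimes u_i$ and $\bE_A^\cL(f)=\sum_i\bE_A^\cL(f_i)\otimes u_i$. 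The commuting structure of $H_\Lambda$ makes the $\bL_2(\rho)$ cross terms $(i\neq j)$ vanish, so both the Dirichlet form and the conditional variance decompose diagonally over $i$; the infimum in $\lambda_\Lambda(A)$ is therefore realized by a single $f_i\in\cA_{A_\partial}$.

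For the reference-state reduction, for $f\in\cA_{A_\partial}$ I will show
\[
\frac{\avr{f,-\cL_A^D(f)}_\rho}{\|f-\bE_A^\cL(f)\|^2_{2,\rho}}=\frac{\avr{f,-\cL_A^D(f)}_{\rho_A}}{\|f-\bE_A^\cL(f)\|^2_{2,\rho_A}}.
\]
The commuting hypothesis yields the factorization $\rho=\rho_A\,\sigma$ with $\sigma:=(Z_A/Z_\Lambda)\,e^{-\beta(H_\Lambda-H_A)}$ and $[\rho_A,\sigma]=0$, so $\rho^{1/2}=\rho_A^{1/2}\sigma^{1/2}$. Since $\cL_A^D(f)$ and $f-\bE_A^\cL(f)$ both lie in $\cA_{A_\partial}$, cyclicity of the trace together with $[\sigma^{1/2},\rho_A^{1/2}]=0$ let me absorb the $\sigma$-dependence of the inner product into a common positive scalar, obtained by partial-tracing over $A_\partial^c$ and using the structure of the Davies jump operators $S_{\alpha(k)}(\omega)$. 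This common scalar cancels in the Rayleigh quotient, yielding the claimed identity.

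The main obstacle is that $\sigma$ is not strictly supported on $A_\partial^c$: the boundary terms $\sum_{j\in\partial A}\Phi(j)$ contribute pieces of $e^{-\beta(H_\Lambda-H_A)}$ whose support reaches into $A_\partial$, so $\sigma$ does not naively tensor-factor across the partition $(A_\partial,A_\partial^c)$. The commuting hypothesis rescues the argument: those boundary pieces commute with $H_A$ (hence with $\rho_A$) and, via the detailed-balance relations~(\ref{DBDavies1})--(\ref{DBDavies2}) for $S_{\alpha(k)}(\omega)$, appear in a symmetric way in both numerator and denominator of the Rayleigh quotient, so they drop out of the ratio. Combining this with the locality reduction gives $\lambda_\Lambda(A)=\lambda_{A_\partial}(A)$.
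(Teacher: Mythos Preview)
Your overall strategy---reduce to $\cA_{A_\partial}$ by locality, then replace $\rho$ by $\rho_A$---is the right idea, but both steps as written contain genuine gaps that the paper's argument avoids by a different device.

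In step (i), the claim that the $\bL_2(\rho)$ cross terms vanish for a decomposition $f=\sum_i f_i\otimes u_i$ with $\{u_i\}$ orthonormal for $\Tr_{A_\partial}(\rho)$ is not justified. The inner product is $\langle f,g\rangle_\rho=\tr{\rho^{1/2}f\rho^{1/2}g}$, and $\rho^{1/2}$ does \emph{not} tensor-factorize across the cut $(A_\partial,A_\partial^c)$: the terms $\Phi(j)$ with $j\notin A_\partial$ can still have support reaching into $\partial A\subset A_\partial$. So orthogonality of the $u_i$ with respect to $\Tr_{A_\partial}(\rho)$ does not by itself kill the $i\neq j$ contributions. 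Locality of $\cL_A^D$ and $\bE_A^\cL$ alone is not enough; you need something about how $\rho^{1/2}$ interacts with these maps.

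In step (ii) you assert that the \emph{individual} Rayleigh quotients for $f\in\cA_{A_\partial}$ coincide for $\rho$ and $\rho_A$, and justify this by saying the boundary pieces of $\sigma$ ``appear in a symmetric way in both numerator and denominator.'' This is not a proof, and in fact the statement that is actually needed (and that the paper proves) is only that the \emph{infima} coincide; pointwise equality of the quotients would be a stronger claim that is not obviously true.

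The paper sidesteps both issues by first conjugating with $\Gamma_\rho^{1/2}$, i.e.\ passing to $\tilde{\cL}_A^D:=\Gamma_\rho^{1/2}\cL_A^D\Gamma_\rho^{-1/2}$ and $\tilde{\cQ}_A:=\Gamma_\rho^{1/2}(\id-\bE_A^\cL)\Gamma_\rho^{-1/2}$, which converts the $\rho$-weighted inner product into the plain Hilbert--Schmidt one. The commuting hypothesis is then used to show that $\tilde{\cL}_A^D$ and $\tilde{\cQ}_A$ act as $X_{A_\partial}\otimes\id$ on $(A_\partial)^c$: writing $\rho_\Lambda^{1/4}$ as a product of commuting pieces, the $\rho_{(A_\partial)^c}$ factor can be pulled through $\bE_A^\cL$ and cancels. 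Once both operators have this tensor structure, the variational problem is a generalized eigenvalue problem for operators of the form $X\otimes\id$ versus $Y\otimes\id$, and $\det\big((X+\lambda Y)\otimes\id\big)=\det(X+\lambda Y)^n$ gives the reduction to $A_\partial$ and to the reference state $\rho_A$ in one stroke. This is precisely the mechanism your ``symmetric cancellation'' is reaching for; making it rigorous requires the conjugation, because it is $\tilde{\cL}_A^D$ and $\tilde{\cQ}_A$---not $\cL_A^D$ and $\id-\bE_A^\cL$ in the $\rho$-weighted picture---that are manifestly local in the required sense.
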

The proof is provided in the appendix, as it very closely resembles that of Prop. \ref{prop:restrictionB}.

\bigskip
\noindent \textit{Remark:} Lemma \ref{lemma:gap} tells us that even though the gap of $\cL_A$ is defined with respect to the full space $\cA_\Lambda$, the variational optimization reaches its maximum for observables that have support on $A_\partial$. This independence on the complement of $A$ is very important in the proof of the main theorem.

We are now in a position to state and prove our main theorem: 

\begin{theorem}\label{thm:main}
Let $\Lambda = (\bZ \backslash l)^d$, for an integer $l$. Let $\Phi :\Lambda\mapsto \cA_\Lambda$ be an $r$-local bounded and commuting potential, and let $\rho$ be the Gibbs state of $H_\Lambda=\sum_{k \in \Lambda} \Phi(k)$. There is a $l_0 > 0$ such that for every $l \geq l_0$, if $\rho$ satisfies strong clustering with respect to $\bE^\cL$, then $\cL_\Lambda^D$ has a spectral gap which is independent of $|\Lambda|$.
\end{theorem}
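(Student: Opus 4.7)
The strategy is to reduce the lower bound on the spectral gap of the reversible Liouvillian $\cL_\Lambda^D$ to a gap estimate for a related frustration-free local Hermitian operator, and then to extract that estimate from the detectability lemma of \cite{DetectLemma2} fed with strong clustering. Concretely, I would first view $-\cL_\Lambda^D$ as a positive self-adjoint operator on $(\cA_\Lambda, \avr{\cdot,\cdot}_\rho)$ (using local reversibility from Lemma \ref{lem:DaviesGen}). Since $-\cL_\Lambda^D = \sum_k (-\cL_k^D)$ is a sum of bounded, positive semidefinite terms of range $r$, and $\cL^D$ is frustration-free (so $\ker(-\cL_\Lambda^D) = \bigcap_k \ker(-\cL_k^D) = \mathrm{Im}(\bE^\cL_\Lambda)$), the picture is that of a frustration-free local Hamiltonian $H = \sum_k \Pi_k$, where $\Pi_k$ is the $\bL_2(\rho)$-orthogonal projector onto the support of $-\cL_k^D$. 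Because each $-\cL_k^D$ has non-zero eigenvalues bounded above and below by constants depending only on the range $r$, the local potential norm, and the local dimension, the gap of $H$ and the gap of $\cL_\Lambda^D$ differ at most by a $|\Lambda|$-independent constant; it therefore suffices to lower bound the gap of $H$.

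Next, I would apply the detectability lemma, which lower-bounds the gap of a frustration-free local Hamiltonian by the rate at which an iterated product of commuting layers of projectors converges to the ground-space projector. A finite chromatic coloring of $\Lambda$ (possible because $\Pi_k$ has range $r$) splits $\{\Pi_k\}$ into $N = O(1)$ groups $\mathcal C_1,\ldots,\mathcal C_N$ with commuting elements within each group. Letting $P_j := \prod_{k \in \mathcal C_j}\Pi_k$ and $P_{\mathrm{gs}} := \bE^\cL_\Lambda$, the task reduces to proving a bound of the form
\be
\bigl\| (P_N \cdots P_1)^m - P_{\mathrm{gs}} \bigr\|_{2\to 2,\rho} \leq C\, q^m
\ee
for constants $C<\infty$ and $q<1$ independent of $|\Lambda|$.

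This is where strong clustering enters. One layer $P_N \cdots P_1$ is effectively a composition of $\bE^\cL_A$'s on overlapping but partly separated regions. Proposition \ref{prop:restrictionB} lets us restrict attention to observables near the relevant overlap, and strong clustering bounds $\Cov_{A\cup B}(\bE_A f,\bE_B f)$ by $\|f\|_{2,\rho}^2\, e^{-d(A\setminus B, B\setminus A)/\xi}$. Combining this with the variance identity of Proposition \ref{Variances} yields a schematic operator-norm bound
\be
\bigl\| \bE^\cL_A \bE^\cL_B - \bE^\cL_{A \cup B} \bigr\|_{2\to 2,\rho} \leq c\, e^{-d/\xi},
\ee
which, iterated across the $N$ layers and across the $m$-fold application of $P_N\cdots P_1$, produces the required geometric decay with rate $q<1$ depending only on $c,\xi,r,d$. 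The detectability lemma then gives a uniform lower bound on the gap of $H$, which by the comparability $\Pi_k \sim -\cL_k^D$ pulls back to a uniform lower bound on the gap of $\cL_\Lambda^D$; the hypothesis $l \geq l_0$ is needed only to ensure that the overlap distances are large enough to make $q<1$.

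The main obstacle I expect is the careful translation between the $\bL_2(\rho)$-inner-product structure native to the Liouvillian and the norm-based setting of the detectability lemma: one must verify that the support projections $\Pi_k$ of $-\cL_k^D$ are comparable, up to $|\Lambda|$-independent constants, to the Liouvillian quasi-projectors $\bE^\cL_k$, and that the exponential decay from strong clustering survives iteration through many nested conditional expectations without the prefactor blowing up in $|\Lambda|$. Choosing a coloring into commuting layers simultaneously compatible with the range of $\cL^D$ and with the geometry used for strong clustering (so that the separations $d(A\setminus B, B\setminus A)$ are uniformly large enough) is the other technical point where care is required.
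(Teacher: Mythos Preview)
Your approach is genuinely different from the paper's and, as written, has a real gap at the central step.

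The paper does not use the detectability lemma for this direction at all; that tool is reserved for the converse (Theorem~\ref{thm:converse}: gap implies strong clustering). Instead, the paper follows the classical Glauber--dynamics route of Bertini and Martinelli: using Proposition~\ref{Variances} together with strong clustering, it compares the conditional gap $\lambda_\Lambda(C)$ on a rectangle $C$ of side $L$ to the conditional gaps on sub-rectangles $A_i,B_i$ roughly half as large, with overlaps of width $\sim\sqrt{L}$. An averaging trick over $\sim L^{1/3}$ such decompositions (Lemma~\ref{lem:iteration}) kills the factor of two one would otherwise lose at each doubling, yielding
\[
\lambda_\Lambda(C)\ \geq\ \bigl(1+2\lfloor L^{1/3}\rfloor^{-1}\bigr)^{-1}\,\min_i\{\lambda_\Lambda(A_i),\lambda_\Lambda(B_i)\}.
\]
Iterating this over geometrically shrinking scales $l_k=(3/2)^{k/d}$ gives a convergent infinite product and hence a size-independent lower bound, with Lemma~\ref{lemma:gap} supplying the constant-size base case.

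Your argument instead colors the single-site terms into $N=O(1)$ commuting layers and tries to show that the one-round product $P_N\cdots P_1$ already contracts the complement of the ground space by a factor $q<1$ independent of $|\Lambda|$, feeding in strong clustering via $\|\bE_A^\cL\bE_B^\cL-\bE_{A\cup B}^\cL\|_{2\to2,\rho}\leq c\,e^{-d/\xi}$. The difficulty is the step you describe as ``iterated across the $N$ layers and across the $m$-fold application'': to pass from single-site projectors all the way to $\bE_\Lambda^\cL$ you must merge at least $O(\log|\Lambda|)$ regions, and in a fixed coloring scheme the overlaps between regions being merged have \emph{constant} width, so each merge contributes an error of fixed size $c\,e^{-d_0/\xi}$ that does not improve with the scale. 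The accumulated error then grows with $|\Lambda|$, and you cannot conclude $q<1$ uniformly. Strong clustering is only sharp enough when the overlap separations grow with the scale; your layered single-site picture has no mechanism for that. This is exactly why the paper's argument is recursive with overlaps of width $\sqrt{L}$ at scale $L$: the corrections $e^{-\sqrt{l_k}/\xi}$ over scales $l_k=(3/2)^{k/d}$ then form a convergent product. Without inserting that multiscale structure (which would effectively reproduce the paper's proof), the iteration you sketch does not close.
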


The proof strategy follows closely Ref. \cite{Bertini}, and consists in showing that for any sufficiently large subset $C\subseteq\Lambda$, we can choose $A,B\subseteq\Lambda$ such that $A\cup B=C$ and $A\cap B\neq\emptyset$ and such that the conditional gap with respect to $A$ or $B$ is approximately the same as the conditional gap with respect to  $C$. Choosing $A$ and $B$ to be roughly half the size of $C$, this shows that doubling the lattice size essentially leaves the conditional gap unchanged. By applying this procedure iteratively, we can show that the gap of the full system is lower bounded by the gap of a constant size subset. Finally, invoking Lemma \ref{lemma:gap}, we get that the conditional gap of a constant size subset of the lattice is lower bounded by a constant.  Hence the gap of $\cL$ on the full system cannot depend on the system size.   

\bigskip

\proof{

Let $A,B\subseteq\Lambda$ such that $A\cup B\neq \emptyset$ and $A\cap B$ forms a rectangle of minimal side  length $L$, while the overlap $A\cap B$ has minimum side length larger or equal to $\sqrt{L}$ (see Fig. \ref{fig1}). Recall that the strong clustering assumption implies there exist constants $c,\xi>0$ such that for any $f\in\cA_\Lambda$, 
\be \Cov_{A\cup B}(\bE_A(f),\bE_B(f))\leq c \Var_{A\cup B}(f) e^{-\sqrt{L}/\xi}\ee

Then, from Proposition \ref{Variances} and the definition of the conditional gap (Eqn. (\ref{eqn:gap})) we get that for any $f\in\cA_\Lambda$:
\bea \Var_{A\cup B}(f)&\leq&(1-2ce^{-\sqrt{L}/\xi})^{-1}(\Var_A(f) + \Var_B(f))\nonumber\\
&\leq& (1-2ce^{-\sqrt{L}/\xi})^{-1}\left(\frac{\avr{f,-\cL_A(f)}_\rho}{\lambda_\Lambda(A)}+\frac{\avr{f,-\cL_B(f)}_\rho}{\lambda_\Lambda(B)}\right)\\
&\leq& (1-2ce^{-\sqrt{L}/\xi})^{-1}\frac{1}{\lambda_\Lambda(A \wedge B)}(\avr{f,-\cL_{A\cup B}(f)}_\rho+\avr{f,-\cL_{A\cap B}(f)}_\rho),\nonumber\eea
where we have written $\lambda_\Lambda(A \wedge B):=\min\{ \lambda_\Lambda(A),\lambda_\Lambda(B)\}$.

At this point we might be tempted to upper bound $\avr{f,-\cL_{A\cap B}(f)}_\rho$ by $\avr{f,-\cL_{A\cup B}(f)}_\rho$. But this would provide us with a bound where the gap roughly halves in magnitude when we double the size of the system, which would lead upon iterations to a global gap decreasing polynomially with the system size. However, we can use an averaging trick originally developed for a similar purpose in classical lattice spin systems \cite{Bertini,MartinelliReview,MartinelliBook}. 

\begin{figure}
\centering
  \includegraphics[scale=0.30]{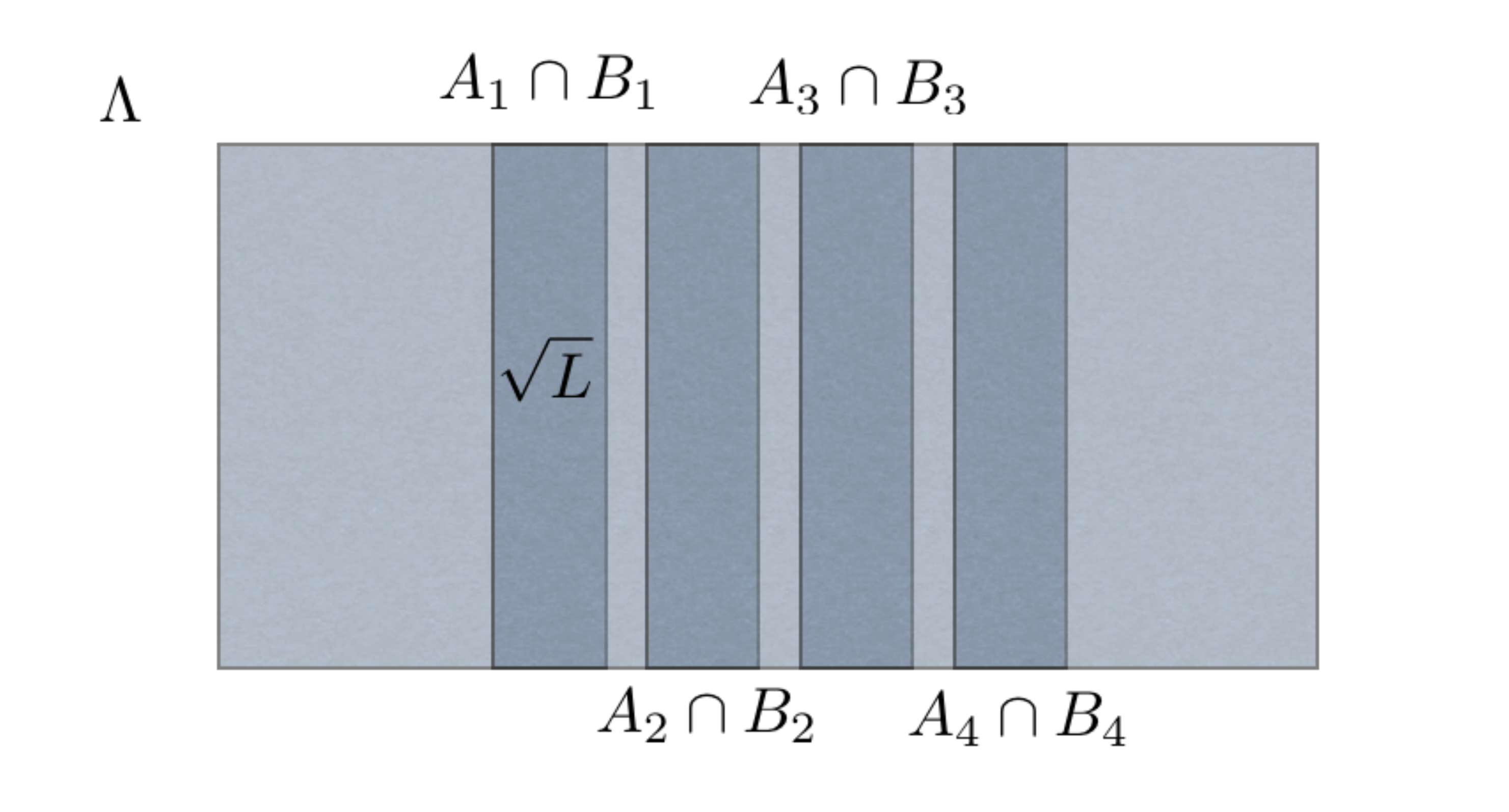}
    \caption{An illustration of the sequence of decompositions of the subset $C:=A_i\cup B_i$, which guarantee the conditions of Lemma \ref{lem:iteration}. The overlap between $A_j$ and $B_j$ is of order $\sqrt{L}$ for each $j$, and the different overlaps $A_j\cap B_j$ do not intersect.}
    \label{fig2}
\end{figure}

Given a rectangular subset of the lattice $C\subseteq\Lambda$, suppose there exists a sequence of subsets   $\{A_i,B_i\}_{i=1}^s$, where $s:=\lfloor L^{1/3}\rfloor$, with the property that for every $i=1,...,s$ 
\begin{itemize}
\item $A_i\cup B_i=C$,
\item $A_i\cap B_i$ has minimum side length lower bounded by $\sqrt{L}$,
\item  $A_i\cap B_i\cap A_j\cap B_j=\emptyset$ for all $i\neq j$.
\end{itemize}

Then, by noting that
\be \sum_{i=1}^s \avr{f,-\cL_{A_i\cap B_i}(f)}_\rho\leq\avr{f,-\cL_C}_\rho,\ee
we get 
\bea \Var_C(f)&\leq&(1-2ce^{\sqrt{L}/\xi})^{-1}\max_{i=1,..,s}\{(\frac{1}{\lambda_\Lambda(A_i)},\frac{1}{\lambda_\Lambda(B_i)})\}\nonumber\\
&&\left(\avr{f,-\cL_C(f)}_\rho+\frac{1}{s} \sum_{i=1}^s \avr{f,-\cL_{A_i\cap B_i}(f)}_\rho\right)\\
&\leq&(1-2ce^{\sqrt{L}/\xi})^{-1}(1+\frac{1}{s})\max_{i=1,..,s}\{(\frac{1}{\lambda_\Lambda(A_i)},\frac{1}{\lambda_\Lambda(B_i)})\}\avr{f,-\cL_C(f)}_\rho\nonumber
\eea

It is not difficult to see now that, for given $c,\xi$, there exists an $L_0$ such that for all $L\geq L_0$, 
\be (1-2ce^{\sqrt{L}/\xi})^{-1}\left(1+\frac{1}{\lfloor L^{1/3}\rfloor}\right)\leq \left(1+\frac{2}{\lfloor L^{1/3}\rfloor}\right),\ee 
which leads to 

\be \lambda_\Lambda(C) \geq \left(1+\frac{2}{\lfloor L^{1/3}\rfloor}\right)^{-1} \min_{i=1,..,s}\{\lambda_\Lambda(A_i),\lambda_\Lambda(B_i).\}\label{eqn:singlegapbd}\ee
Clearly, the specific value of $L_0$ depends on the constants $c,\xi$ an not on the system size $|\Lambda|$.

In order to complete the proof, we construct a decomposition of the full lattice $\Lambda$ into sequential subsets in such a way that we can use the bound in Eqn. (\ref{eqn:singlegapbd}) iteratively, and obtain a constant lower bound on $\lambda_\Lambda(\Lambda)$. The construction has been taken from Ref. \cite{Bertini} which in turn was initiated in the work of Martinelli \cite{MartinelliReview}.  
Let $l_k:=(3/2)^{k/d}$, and let $\cR_k^d$ be the set of all rectangles in $\cR^d$ which, modulo translations and permutations of the coordinates, are contained in 
\be [0,l_{k+1}]\times[0,l_{k+2}]\times ...\times[0,l_{k+d}]\ee
Assume that $\Lambda\equiv\cR^d_{k_{max}}$. We will later show that for any size $\Lambda$, the gap is always lower bounded. Note that we never explicitly compare the gaps of two systems $\Lambda_1$ and $\Lambda_2$ where each has specified boundary conditions, rather some boundary conditions are fixed for $\Lambda\equiv\cR^d_{k_{max}}$ and left untouched thereafter. 
We will also define the minimum gap restricted to rectangles in $\cR_k^d$ as $g_k:=\inf_{V\in \cR_k^d} \lambda_\Lambda(V)$. The idea behind this construction is that each rectangle in $\cR_k^d/\cR_{k-1}^d$ can be obtained as a slightly overlapping union of two rectangles in $\cR_{k-1}^d$. By Lemma \ref{lem:iteration}, we can then get the following iterative bound:

For all $C_k\subseteq \cR_{k+1}^d/\cR_{k}^d$ and $C_{k-1}\subseteq \cR_k^d/\cR_{k-1}^d$,

\be \lambda_\Lambda(C_k) \geq (1+2\frac{1}{\lfloor l_k^{1/3}\rfloor})^{-1} \lambda_\Lambda(C_{k-1})\ee

In particular the minimum $L_0$ can be associated with a minimum integer $k_0$, such that taking $\Lambda$ to be the thermodynamic lattice, and taking $k=k_0+1,...,\infty$, we get 

\bea \lim_{\Lambda\rightarrow\bZ^d}\lambda_\Lambda(\Lambda) &\geq& \left(\prod_{k=k_0+1}^\infty (1+2(3/2)^{-k/(3d)})\right)^{-1}\inf_{C_{k_0}\subseteq\cR_{k_0}}\lambda_\Lambda(C_{k_0})\\
&\geq&\exp[-2(1-(2/3)^{1/(3d)})]\inf_{C_{k_0}\subseteq\cR_{k_0}}\lambda_\Lambda(C_{k_0})\eea

Finally, since $k_0$ is determined only by $L_0$, it is independent of $|\Lambda|$. Hence, we get from Lemma \ref{lemma:gap} that $\lambda_\Lambda(C_{k_0})$ can be lower bounded by a constant independent of $|\Lambda|$ for any $C_{k_0}\subseteq\cR_{k_0}$. \qed}

\bigskip

\noindent \textit{Remarks: }

\begin{enumerate}[i]
\item In order for the iterative procedure to work, we do not need strong clustering to be exponential. Any polynomial decay with sufficiently high degree will in fact do the job. Furthermore, Theorem \ref{thm:main} actually shows that for every rectangle $A\subseteq\Lambda$, $\cL_A$ is gapped. 

\item If one were able to extend Prop. \ref{Variances} to show that 
\be \Ent_{A\cup B}(f)\leq (1-2\epsilon)^{-1}(\Ent_A(f)+\Ent_B(f)),\ee
where the $\Ent(f)$ was defined in Ref. \cite{LogSobolev}, and $\Ent_A(f)$ is some appropriately chosen conditional entropy, then the same proof strategy would work to show that the Gibbs samplers satisfy a Log-Sobolev inequality. This, in turn would have strong implications for the mixing of the semigroup, and would for instance show that strong clustering implies local indistinguishability, among other things. 

\item The proof of Theorem \ref{thm:main} requires strong clustering in order to invoke Proposition \ref{Variances}. If we were able to find a systematic procedure to associate boundary conditions $\zeta$ to the Hamiltonian $H_A$ for any $A\subseteq\Lambda$, and prove that $\lambda_\Lambda(A)\geq\lambda^\zeta_{A_\partial}(A_\partial)$, then it would be possible to prove Theorem \ref{thm:main} with respect to weak clustering rather than strong clustering. Indeed, whenever $A\cup B=\Lambda$ then Proposition \ref{Variances} holds under weak clustering. However, there is evidence to believe that Theorem \ref{thm:main} does not hold in general dimensions under the weak clustering assumption. Consider the 4D toric code, whose Davies generators are known not to be gapped (in fact the gap decreases exponentially). We know that the ground state of the 4D toric code satisfies a strong form of LTQO, where the suppression at large distances is not exponentially suppressed, but exactly zero. It is then plausible (although not proven) that the 4D toric code at finite temperature satisfies local indistinguishability. If true, this in turn would imply weak clustering at low enough temperature. Hence the equivalence between weak and strong clustering for the 4D toric Hamiltonian would to lead to a contradiction in general. See the outlook for a further discussion 

\item It is worth noting that in a sequence of important papers, Majewski and Zegarlinski have considered a similar approach to generalizing Glauber dynamics analysis to the quantum setting \cite{ZiggyMaj1,ZiggyMaj2,ZiggyMaj3}. There, they introduce the equivalent of our Heat Bath sampler, and show that the dynamics are well defined in the thermodynamic limit. Furthermore, they show that under some strong local ergodicity  conditions reminiscent of a certain form of the Dobrushin-Schlossmann complete analytic conditions, the dynamics are rapidly mixing, and in particular are gapped. Their conditions allow to show both our strong clustering, and local indistinguishability, and we hence expect them to be overly stringent for our main theorem.  Those conditions, in particular, also lead to a proof that the Heat Bath sampler is always gapped at high enough temperatures. 
\end{enumerate}

\begin{lemma}[\cite{Bertini}]\label{lem:iteration}
For all $C\subseteq \cR_k^d/\cR_{k-1}^d$ there exists a finite sequence $\{A_i,B_i\}_{i=1}^{s_k}$, where $s_k:=\lfloor l_k^{1/3}\rfloor$, such that 
\begin{enumerate}
\item $C=A_i\cup B_i$ and $A_i,B_i\in\cR_{k-1}^d$, for all $i=1,...,s_k$,
\item $d(C/A_i,C/B_i)\geq \frac{1}{8}\sqrt{l_k}$, for all $i=1,...,s_k$,
\item $A_i\cap B_i \cap A_j \cap B_j\emptyset$ if $i\neq j$.
\end{enumerate}
\end{lemma}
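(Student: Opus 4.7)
The plan is to construct the pairs $(A_i,B_i)$ by slicing $C$ in a single coordinate direction with overlap slabs placed at $s_k$ distinct positions. After a permutation of coordinates, write $C=[0,c_1]\times\cdots\times[0,c_d]$ with $c_j\le l_{k+j}$. Since $C\notin\cR_{k-1}^d$, at least one sorted coordinate violates the corresponding $\cR_{k-1}^d$ bound; the first step is to identify a direction (call it the $d$-th, WLOG) such that shrinking only that side to length $\le l_{k+d-1}$ produces a rectangle in $\cR_{k-1}^d$. Such a direction exists because the ratio $l_{k+j}/l_{k+j-1}=(3/2)^{1/d}$ is the same for every $j$, so a single reduction of the most offending side causes the resulting sorted sequence of side-lengths to shift down by one slot into the $\cR_{k-1}^d$ bounds.

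Fix $w:=\lceil \sqrt{l_k}/8\rceil$ and, for each $i=1,\ldots,s_k$, set
\[ p_i := (c_d - l_{k+d-1}) + (i-1)\,\delta, \qquad A_i := C\cap\{x_d\le p_i+w\}, \qquad B_i := C\cap\{x_d\ge p_i\}, \]
with spacing $\delta\ge w$. By construction $C=A_i\cup B_i$ and $A_i\cap B_i$ is the slab $C\cap\{p_i\le x_d\le p_i+w\}$, hence $d(C\setminus A_i,C\setminus B_i)=w\ge\sqrt{l_k}/8$ (condition 2). The slabs are pairwise disjoint because $\delta\ge w$ (condition 3). For condition 1, the $d$-th side of $A_i$ has length $p_i+w\le l_{k+d-1}$ and of $B_i$ has length $c_d-p_i\le l_{k+d-1}$, so by Step 1 both rectangles sit in $\cR_{k-1}^d$.

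For the counting: the feasible interval for $p_i$ has length $2l_{k+d-1}-c_d-w\ge(2-(3/2)^{1/d})\,l_{k+d-1}-w\ge \tfrac12 l_k - w$, using $l_{k+d-1}\ge l_k$ and $(3/2)^{1/d}\le 3/2$. Packing $s_k=\lfloor l_k^{1/3}\rfloor$ disjoint slabs of thickness $w\sim \sqrt{l_k}$ requires total length $\sim l_k^{5/6}$, which is dominated by $\tfrac12 l_k$ once $l_k$ exceeds an absolute constant. This fixes the "$l\ge l_0$" hypothesis implicitly needed in Thm.~\ref{thm:main}.

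The main obstacle is Step 1: showing that one coordinate direction suffices to lower $C$ into $\cR_{k-1}^d$. The subtle case is when more than one sorted dimension of $C$ exceeds its $\cR_{k-1}^d$ bound simultaneously, so that reducing only one side does not obviously restore the full sorted chain of bounds. Handling this requires a careful insertion argument on the sorted sequence of side-lengths (tracking how reducing one coordinate value shifts the order statistics of the remaining ones), exploiting the uniform geometric spacing $l_{k+j}/l_{k+j-1}=(3/2)^{1/d}$ that is specific to the Martinelli/Bertini rectangle scale. Once this geometric lemma is in hand, the rest of the argument above is routine.
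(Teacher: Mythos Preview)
Your overall strategy---slice $C$ along one coordinate and place $s_k$ pairwise disjoint overlap slabs---is exactly the paper's. The difference is where you place the cuts, and this difference makes your ``Step~1 geometric lemma'' false as stated, not merely subtle.

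Concretely: take $d=2$ and $C=[0,l_{k+1}]\times[0,l_{k+2}]\in\cR_k^2\setminus\cR_{k-1}^2$. You seek a direction whose reduction to length $\le l_{k+d-1}=l_{k+1}$ lands the box in $\cR_{k-1}^2=\{$fits in $[0,l_k]\times[0,l_{k+1}]\}$. Reducing the second side to $l_{k+1}$ gives sides $(l_{k+1},l_{k+1})$, whose smaller side exceeds $l_k$; reducing the first side leaves the second side $l_{k+2}>l_{k+1}$ untouched. Neither choice works, so no direction satisfies your Step~1 with the bound $l_{k+d-1}$, and the ``insertion argument on sorted side-lengths'' you anticipate cannot rescue it.

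The paper avoids this entirely by cutting near the \emph{midpoint} $c_d/2$ rather than near $c_d-l_{k+d-1}$. With overlaps at positions $\tfrac{c_d}{2}+\tfrac{2i-1}{8}\sqrt{l_k}$ to $\tfrac{c_d}{2}+\tfrac{2i}{8}\sqrt{l_k}$, each piece has $d$-th side at most
\[
\frac{c_d}{2}+\frac{2s_k}{8}\sqrt{l_k}\ \le\ \frac{l_{k+d}}{2}+\frac{1}{4}l_k^{5/6}\ =\ \frac{3}{4}l_k+\frac{1}{4}l_k^{5/6}\ \le\ l_k.
\]
Now the reduced side is $\le l_k$, the \emph{smallest} bound in $\cR_{k-1}^d$, so the cyclic relabelling $(c_d^{\rm new},c_1,\ldots,c_{d-1})$ against $(l_k,l_{k+1},\ldots,l_{k+d-1})$ succeeds trivially: $c_d^{\rm new}\le l_k$ and $c_j\le l_{k+j}$ for $j=1,\ldots,d-1$. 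No sorting analysis is needed. Your counting argument (total excursion $O(l_k^{5/6})\ll l_k$) still goes through with this placement, so the only change required is to center the slabs at $c_d/2$ and to target $l_k$ instead of $l_{k+d-1}$.
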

The proof is reproduced in the Appendix for the convenience of the reader.

%%%%%%%%%%%%%%%%%%%%%%%%%%%%%%%%%%%%%%%%%%%%%%%%

\subsection{Gapped Gibbs sampler implies strong clustering}\label{sec:converse}

We now proceed to prove the converse statement, namely that if a Gibbs state can be prepared by a gapped Gibbs sampler, then it satisfies strong clustering. The proof relies heavily on the so-called detectability lemma introduced in Refs. \cite{DetectLemma1,DetectLemma2}. 
We start by pointing out an important connection between Gibbs samplers of local commuting Hamiltonians, and general local frustration-free Hamiltonians. Indeed, let $\cL$ be a primitive Gibbs sampler of a local commuting Hamiltonian, and note that since it satisfy detailed balance, we get that the modified operator
\mjk{\be \hat{\cL}(f)= \rho^{1/4}\cL(\rho^{-1/4}f\rho^{-1/4})\rho^{1/4}\ee}
is Hermitian. In particular, if we represent $\hat{\cL}$ as a matrix on a doubled Hilbert space by the transformation $\ket{i}\bra{j}\mapsto\ket{i}\ket{j}$, we  get that $(-\hat{\cL})$ is a Hamiltonian (Hermitian operator) with ground state energy $0$. Throughout, we will use the same symbol for the super-operators acting on $\cB_\Lambda \rightarrow \cB_\Lambda$ and their associated operator representation acting on $\cB_\Lambda\otimes\bar{\cB}_\Lambda$. It should be obvious from the context which representation we are working with.  We furthermore have that $(-\hat{\cL})$ is local and frustration free (for both the Heat Bath and Davies Liouvillians). If $(-\hat{\cL})$ is gapped (in the Liouvillian sense) then $(-\hat{\cL})$ is also gapped (in the Hamiltonian sense).  The Gibbs state (density matrix $\rho$) is mapped onto a pure state $\ket{\sqrt{\rho}}=\sqrt{\rho}\otimes\1 \ket{\omega}$, where $\ket{\omega}=\sum_j\ket{jj}$ is proportional to the maximally entangled state, and satisfies 
\be \hat{\cL} \ket{\sqrt{\rho}}=0.\ee
Similarly, if $\bE$ is a projective conditional expectation, then $\hat{\bE}$ locally projects onto $\ket{\sqrt{\rho}}$.  We can summarize the correspondence as:
\begin{table}[h!]
\begin{center}

    \begin{tabular}{| l | l | l |}
    \hline
     & Commuting Gibbs Sampler & Frustration-free Hamiltonian \\ \hline
    State & Gibbs state $\rho$ & Ground state $\ket{\varphi}$\\ \hline
    Dynamics & Reversible Liouvillians $\cL$ & Hamiltonian $H$\\ \hline
    Projectors & Conditional Expectations $\bE$ & Ground state projectors $P$\\ \hline
    Gap & Spectral gap of $\cL$ &  Spectral gap of $H$ \\ \hline
    Framework & $\bL_p$ spaces & Hilbert spaces $\cH$ \\
    \hline
    \end{tabular}
     \caption{Correspondence between the Gibbs sampler framework and the Hamiltonian complexity framework.}
\end{center}
\end{table}

Thus, all of the tools developed for frustration-free Hamiltonians with a unique ground state can be applied to the setting of Gibbs samplers. In particular, we have completely recovered the setting of the detectability lemma and we can invoke the results from Ref. \cite{DetectLemma2}, as we now explain. 

Throughout the rest of this section we will be considering an $r$-local frustration-free Hamiltonian $-\hat{\cL}=\sum_{j\in\Lambda}-\hat{\cL}_j$, which has a system size ($\Lambda$) independent spectral gap $\lambda$ and ground state energy zero. It will in fact be more convenient to work with the modified Hamiltonian 
\be \hat{Q}_\Lambda=\sum_{j\in\Lambda} \hat{Q}_j\ee
where each local term $\hat{Q}_j:=(\1-\hat{\bE}^\cL_j)$ is a projector. It is not difficult to show (see e.g. \cite{DetectLemma2}) that if each $||\hat{\cL}_j||\leq K$ for some constant $K$, then the spectral gap $\epsilon$ of $Q$ is bounded as $\epsilon\geq \lambda/K$. Given that $\hat{Q}$ also has a unique ground state $\ket{\sqrt{\rho}}$, all results will also hold for $-\hat{\cL}$. 

Each term $\hat{Q}_j$ overlaps with a constant number of other local projectors $\hat{Q}_k$, so that the terms $\{ \hat{Q}_k\}$ can be partitioned into $g$ layers where each layer consists of non-overlapping projectors. See Fig. \ref{fig4} for an illustration of the one-dimensional case when $r=2$ and there are only two layers. Define 
\be \hat{\Pi}_j := \prod_{k\in {\rm layer (j)}} (\1-\hat{Q}_k)=\prod_{k\in {\rm layer (j)}} \hat{\bE}_k\ee
and 
\be\hat{\Pi} := \hat{\Pi}_g \cdots \hat{\Pi}_1.\ee 
 
Finally,  define $f(k,g)$ to be the number of sets of pyramids that are necessary to estimate the energy contribution of all the $Q_j$ terms. In the 1D case illustrated in Fig. \ref{fig4}, we had $f(k,g) = 2$. In the general case one can derive a crude upper bound:$f(g, k) \leq(g - 1)k^g$. For more details consult Ref.  \cite{DetectLemma2}. Then,
 
\begin{lemma}{[Detectability Lemma \cite{DetectLemma2}]}\label{lem:Detect}
With the notation introduced above, we get
\be || \hat{\Pi} - \hat{\bE}_\Lambda ||\leq \frac{1}{(\epsilon/f(k,g)+1)^{1/3}}.\ee
\end{lemma}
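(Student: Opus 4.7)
\proof{
My plan is to mirror the proof of the original detectability lemma in \cite{DetectLemma2}, exploiting the dictionary established above between commuting Gibbs samplers and frustration-free local Hamiltonians.

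First I would reduce the operator-norm estimate to the action of $\hat{\Pi}$ on vectors orthogonal to the ground state. Since each $\hat{\bE}_k$ is a projective conditional expectation whose range contains $\ket{\sqrt{\rho}}$, both $\hat{\Pi}$ and $\hat{\bE}_\Lambda$ fix $\ket{\sqrt{\rho}}$. Uniqueness of the ground state of $\hat{Q}_\Lambda$ then gives $\hat{\Pi}\hat{\bE}_\Lambda=\hat{\bE}_\Lambda\hat{\Pi}=\hat{\bE}_\Lambda$, so
\be
\hat{\Pi}-\hat{\bE}_\Lambda=(\1-\hat{\bE}_\Lambda)\,\hat{\Pi}\,(\1-\hat{\bE}_\Lambda),
\ee
and it suffices to bound $\|\hat{\Pi}\ket{\psi}\|$ for states $\ket{\psi}\perp \ket{\sqrt{\rho}}$. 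For any such state, the spectral gap of $\hat{Q}_\Lambda$ combined with frustration-freeness forces $\sum_j\|\hat{Q}_j\ket{\psi}\|^2\geq\epsilon$.

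Next I would exploit the layer structure. Within a single layer the $\hat{Q}_k$ have disjoint supports and are mutually orthogonal projectors, so $\|\hat{\Pi}_\ell\ket{\phi}\|^2=\|\phi\|^2-\sum_{k\in\mathrm{layer}(\ell)}\|\hat{Q}_k\ket{\phi}\|^2$ for any $\ket{\phi}$. If the energies $\|\hat{Q}_k\ket{\phi}\|^2$ of each intermediate state $\ket{\phi}=\hat{\Pi}_{<\ell}\ket{\psi}$ could simply be replaced by the original energies $\|\hat{Q}_k\ket{\psi}\|^2$, a telescoping sum over the $g$ layers together with the gap lower bound would already give a first-order shrinkage $\|\hat{\Pi}\ket{\psi}\|^2\leq 1-c\,\epsilon$ for some constant $c$.

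The difficulty is that projectors in different layers do not commute, so transferring energies across layers incurs commutator corrections --- and this is where the pyramid construction enters. For each local $\hat{Q}_j$ one isolates the finite set of earlier-layer projectors whose supports can propagate information into the support of $\hat{Q}_j$ (its causal pyramid, of size at most $f(k,g)$) and uses that everything outside this pyramid commutes with $\hat{Q}_j$ and therefore leaves $\|\hat{Q}_j\ket{\phi}\|^2$ invariant. Because pyramids associated to terms in the same layer can be chosen disjoint, their shrinkage contributions combine additively, and moving energies between $\ket{\phi}$ and $\ket{\psi}$ only dilutes each individual contribution by a factor at most $f(k,g)$. I expect this pyramid bookkeeping --- carefully redistributing excess energies through commutators without double-counting across neighbouring pyramids --- to be the main obstacle.

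Plugging the pyramid estimate into the per-layer bound and summing over layers yields a first-order shrinkage of the form $\|\hat{\Pi}\ket{\psi}\|^2\leq(1+\epsilon/f(k,g))^{-1}$, with universal constants absorbed into $f(k,g)$. The $1/3$ exponent appearing in the statement then arises from a standard bootstrapping trick in \cite{DetectLemma2}: the same estimate applies to iterated products $\hat{\Pi}^m$ with only a polynomial price in the pyramid counting, so comparing $\|\hat{\Pi}\ket{\psi}\|$ to $\|\hat{\Pi}^3\ket{\psi}\|$ and extracting a cube root produces exactly the bound of the lemma.
\qed}
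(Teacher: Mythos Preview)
The paper does not prove this lemma at all: it is stated as a quotation of the detectability lemma from Ref.~\cite{DetectLemma2} and then used as a black box (the text continues immediately with ``Since for all $j\in\Lambda$, $\bE_\Lambda\bE_j=\bE_\Lambda$, \ldots''). So there is no proof in the paper to compare your attempt against; your sketch is an attempt to reproduce the argument of the cited reference, which is the appropriate thing to do.

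As a sketch of the argument in \cite{DetectLemma2}, your outline hits the right structural beats: the reduction to $\ket{\psi}\perp\ket{\sqrt{\rho}}$ via $\hat{\Pi}\hat{\bE}_\Lambda=\hat{\bE}_\Lambda$, the per-layer shrinkage identity $\|\hat{\Pi}_\ell\ket{\phi}\|^2=\|\phi\|^2-\sum_{k\in\mathrm{layer}(\ell)}\|\hat{Q}_k\ket{\phi}\|^2$, and the pyramid bookkeeping to transfer energies across non-commuting layers with a loss of at most $f(k,g)$. The one place where your description is loose is the origin of the exponent $1/3$: it does not come from comparing $\hat{\Pi}$ with $\hat{\Pi}^3$ and taking a cube root, but rather from an explicit optimisation inside the pyramid argument in \cite{DetectLemma2} (balancing the number of ``active'' terms against the energy recovered in each step). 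This is a detail of the cited proof rather than a flaw in your strategy, but if you actually carry out the argument you should expect the $1/3$ to emerge from that internal estimate, not from a bootstrapping on powers of $\hat{\Pi}$.
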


Since for all $j\in\Lambda$, $ \bE_\Lambda\bE_j=\bE_\Lambda$, we get that 
\be \lim_{n\rightarrow\infty}\hat{\Pi}^n=\hat{\bE}_\Lambda.\ee
Then from Lemma \ref{lem:Detect}, we know that there exists constants $C,\kappa>0$ such that 
\be ||\hat{\bE}_\Lambda - \hat{\Pi}^n||\leq C e^{-n \kappa},\label{eqn:deteclem3}\ee
where $\kappa$ is proportional to $\epsilon$. The exact same reasoning holds true for local projectors:
\be ||\hat{\bE}_A - \hat{\Pi}_A^n||\leq C e^{-n \kappa},\label{eqn:deteclem3}\ee
where $\Pi_A$ is constructed from local projectors that intersect $A$ (i.e. have support on $A_\partial$).

\begin{figure}
\centering
  \includegraphics[scale=0.28]{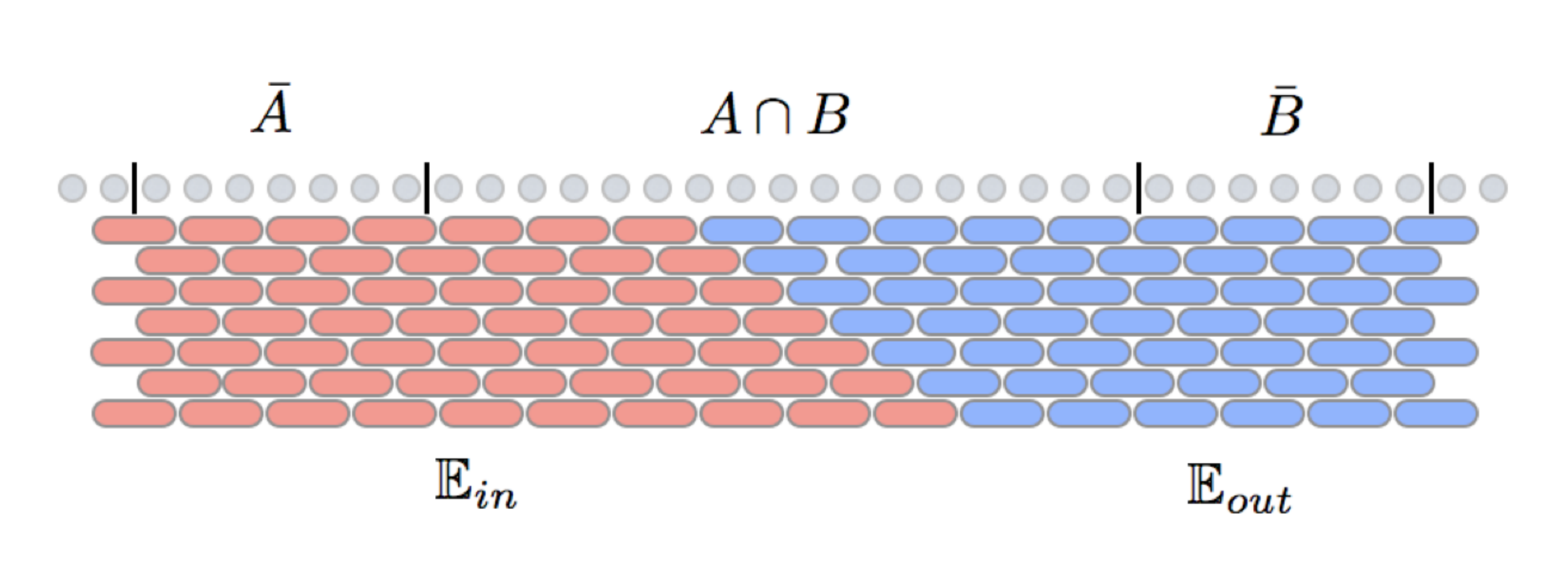}
    \caption{The setting of the detectability lemma when the Hamiltonian consists of $2$-local terms. We denote $\bar{A}\equiv (A\cup B)\setminus B$ and $\bar{B}\equiv (A\cup B)\setminus A$. The two non-overlapping approximate ground state projectors $\bE_{in}$ and $\bE_{out}$ are shown in red and blue.}
    \label{fig4}
\end{figure}

Now, we can use this approximate local projection property to prove the converse of Theorem \ref{thm:main}. 

\begin{theorem}\label{thm:converse}
Let $\Lambda$ be a finite subset of $\bZ^d$. Let $\Phi_\Lambda:\Lambda\mapsto \cA_\Lambda$ be an $r$-local bounded and commuting potential, and let $\rho$ be the Gibbs state of $H_\Lambda=\sum_{k \in \Lambda} \Phi(k)$. Denote $(\cL_\Lambda,\bE)$ a local Gibbs sampler of $\rho$, and suppose that $\bE$ is a projective conditional expectation. If for any $A\subseteq\Lambda$, $\cL_A$ is gapped, then $\rho$ satisfies strong clustering with respect to $\bE$. 
\end{theorem}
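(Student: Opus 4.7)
The plan is to exploit the Hamiltonian--sampler correspondence set up at the start of Section~\ref{sec:converse}, combined with the detectability lemma (Lemma~\ref{lem:Detect}), to bound the conditional covariance directly by a light-cone argument.

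First, I would use the projectivity of $\bE$ to simplify the expression for the covariance. Since $\bE_A$, $\bE_B$, and $\bE_{A\cup B}$ are all orthogonal projections in the $\bL_2(\rho)$ inner product and, by local primitivity, their fixed-point ranges are nested via $\mathrm{range}(\bE_{A\cup B}) \subseteq \mathrm{range}(\bE_A) \cap \mathrm{range}(\bE_B)$, we have $\bE_A\bE_{A\cup B} = \bE_{A\cup B}\bE_A = \bE_{A\cup B}$ and likewise with $A$ replaced by $B$. This immediately yields
\[
\Cov_{A\cup B}(\bE_A(f),\bE_B(f)) = \big|\avr{\bE_A(f)-\bE_{A\cup B}(f),\;\bE_B(f)-\bE_{A\cup B}(f)}_\rho\big|.
\]
In the doubled-Hilbert-space representation, the $\bL_2(\rho)$ inner product becomes an ordinary Hilbert-space inner product, the projective conditional expectations $\hat\bE_k = I - \hat Q_k$ become orthogonal projectors, and $\hat Q_{A\cup B} = \sum_{k\in A\cup B}\hat Q_k$ is a frustration-free Hamiltonian whose spectral gap is, by the assumption that $\cL_{A\cup B}$ is gapped, independent of $|\Lambda|$.

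Second, the detectability lemma gives $\|\hat\Pi^n_{A\cup B} - \hat\bE_{A\cup B}\|\le C\,e^{-n\kappa}$ for constants $C,\kappa>0$ depending only on the gap, the range $r$, and the number $g$ of commuting layers. I would choose $n = \lfloor L/(4gr)\rfloor$ with $L := d(A\setminus B,B\setminus A)$. A Lieb--Robinson-style light-cone bound then shows that after $n$ rounds of the layered product, every local projector $\hat\bE_k$ in $\hat\Pi^n_{A\cup B}$ has its support spread by at most $L/4$, so projectors originating in $A\setminus B$ remain spatially disjoint from those originating in $B\setminus A$. Assigning each site of $A\cap B$ consistently to one side, this lets me factor $\hat\Pi^n_{A\cup B} = V_A\,V_B$ with commuting factors supported respectively in disjoint enlargements of $A\setminus B$ and $B\setminus A$. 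Inserting this factorization into the simplified covariance and using both $\hat\bE_A V_A = V_A\hat\bE_A = \hat\bE_A$ on the $A$-part and the analogous identity for $B$, every remaining piece is controlled by the detectability-lemma error, yielding a bound of the form $c\,e^{-n\kappa}\|f\|^2_{2,\rho}$. Since $n$ is linear in $L$, this is strong clustering with correlation length $\xi = O(gr/\kappa)$.

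The main obstacle I anticipate is the careful bookkeeping of the light-cone factorization of $\hat\Pi^n_{A\cup B}$: one has to specify an unambiguous layer-by-layer assignment of the shared projectors in $A\cap B$ to either $V_A$ or $V_B$, check that commutators between these factors are genuinely zero (not merely small) for the chosen $n$, and verify that the residual cross-terms between the detectability-lemma errors and the bulk projections enter multiplicatively so that the final prefactor is $\|f\|^2_{2,\rho}$ and not some stronger norm. The subtlety is essentially a noncommutative analogue of the alternating-projection convergence $(\bE_A\bE_B)^n \to \bE_{A\cup B}$, which holds abstractly but here must be made quantitative in terms of the spatial separation $L$, and the detectability lemma is what turns the qualitative alternating-projection statement into the quantitative exponential decay required by strong clustering.
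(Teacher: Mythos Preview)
Your proposal is correct and follows essentially the same route as the paper: approximate $\bE_{A\cup B}$ by the layered product $\Pi^n_{A\cup B}$ via the detectability lemma, factor $\Pi^n_{A\cup B}$ using the finite light cone, and use that $\bE_A(f)$ and $\bE_B(f)$ are fixed by the appropriate factors so that the main term vanishes identically and only the detectability error survives.

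Two small streamlinings the paper uses that you may want to adopt. First, rather than demanding that $V_A$ and $V_B$ commute, the paper writes $\Pi^l_{A\cup B}=\bE_{in}\bE_{out}$ and simply moves $\bE_{in}$ across the inner product by self-adjointness in $\bL_2(\rho)$; this avoids the bookkeeping of assigning the $A\cap B$ sites to one side. Second, no Lieb--Robinson estimate is needed: since the local projectors $\hat\bE_k$ at disjoint sites commute \emph{exactly}, the light-cone factorization is a purely combinatorial rearrangement of the brick-wall product, and the commutator you worried about is genuinely zero by construction once $n\le L/(gr)$. With these in hand, the paper rewrites the covariance as $\avr{\bE_B(f),\bE_A(f)}_\rho-\avr{\bE_B(f),\bE_{A\cup B}\bE_A(f)}_\rho$, replaces $\bE_{A\cup B}$ by $\bE_{in}\bE_{out}$, and the main term collapses in one line via $\bE_{in}\bE_B=\bE_B$ and $\bE_{out}\bE_A=\bE_A$; the residual is a single detectability error times $\|f\|_{2,\rho}^2$, with no cross-terms to track.
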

\proof{
Our proof resembles the proof of (weak) clustering for ground states of frustration free Hamiltonians in Ref. \cite{DetectLemma2} (see Section  6 of \cite{DetectLemma2}). Consider two subsets $A,B\subseteq\Lambda$ with $A\cap B\neq \emptyset$, and assume that the overlap has minimal side length $L$. Consider an approximate local projection 
\be\hat{\bE}_{A\cup B}\approx\hat{\Pi}^l_{A\cup B},\ee
where $\hat{\Pi}_{A\cup B}$ is restricted to local projective terms that intersect with the subset $A\cup B$. 
If $l\leq L/(gr)$, then we can write $\hat{\Pi}^l_{A\cup B}=\hat{\bE}_{in}\hat{\bE}_{out}$, where $\hat{\bE}_{in}$ consists of terms intersecting $\bar{A}$ on the first level, and the light cone resulting from the iterative application of $ \hat{\Pi}_{A\cup B}$. $\hat{\bE}_{out}$ is the rest of the local projective terms (see Figure \ref{fig4} for an illustration of the setting in 1D). 
    Let $f\in\cA_\Lambda$, then for $g_{\bar{A}}:=\bE_B(f)$ and $g_{\bar{B}}:=\bE_A(f)$, because of the frustration freeness and reversibility of the conditional projective expectations, we get 
\mjk{\bea \bE_{in}(g_{\bar{A}})&=&\bE_{in}\bE_B(g)=g_{\bar{A}}\label{eqn:DL1}\\
 \bE_{out}(g_{\bar{B}})&=&\bE_{out}\bE_A(g)=g_{\bar{B}}\label{eqn:DL2}\eea}

 Therefore, by the construction of $\Pi_{A\cup B}$, and Eqs. (\ref{eqn:DL1}) and (\ref{eqn:DL2}), we get 
 \bea\avr{g_{\bar{A}},\Pi^l_{A\cup B}(g_{\bar{B}})}_\rho&=&  \avr{g_{\bar{A}},\bE_{in}\circ\bE_{out}(g_{\bar{B}})}_\rho \nonumber\\
 &=& \avr{\bE_{in}(g_{\bar{A}}),\bE_{out}(g_{\bar{B}})}_\rho\\
 &=&\avr{g_{\bar{A}},g_{\bar{B}}}_\rho\nonumber\eea
 
 Then, noting that
 \be ||\Pi_{A\cup B}^l - \bE_{A\cup B}||_{2-2,\rho} =||\hat{\Pi}_{A\cup B}^l - \hat{\bE}_{A\cup B}||,\ee
 and using the $\bL_p$ H\" older's inequality, we get for any $f,h\in\cA_\Lambda$,
 
 \bea|\avr{f,(\Pi_{A\cup B}^l - \bE_{A\cup B})(h)}_\rho|&\leq& ||f||_{2,\rho}||(\Pi_{A\cup B}^l - \bE_{A\cup B})(h)||_{2,\rho}\\
 &\leq&||f||_{2,\rho}||\Pi_{A\cup B}^l - \bE_{A\cup B}||_{2-2,\rho}||h||_{2,\rho}.\eea
 Eq. (\ref{eqn:deteclem3}) leads to
 \bea \Cov_{A\cup B}(\bE_A(f),\bE_B(f))&=&\avr{g_{\bar{A}},g_{\bar{B}}}_\rho-\avr{g_{\bar{A}},\bE_{A\cup B}(g_{\bar{B}})}_\rho\nonumber\\
 &\leq&\avr{g_{\bar{A}},g_{\bar{B}}}_\rho-\avr{g_{\bar{A}},\Pi^l_{A\cup B}(g_{\bar{B}})}_\rho+C||g_{\bar{A}}||_{2,\rho}||g_{\bar{B}}||_{2,\rho}e^{-\kappa l}\nonumber\\
 &\leq& \avr{g_{\bar{A}},g_{\bar{B}}}_\rho-\avr{g_{\bar{A}},\bE_{in}\circ\bE_{out}(g_{\bar{B}})}_\rho +C||f||_{2,\rho}^2e^{-\kappa l}\nonumber\\
 &=& \avr{g_{\bar{A}},g_{\bar{B}}}_\rho-\avr{\bE_{in}(g_{\bar{A}}),\bE_{out}(g_{\bar{B}})}_\rho +C||f||_{2,\rho}^2e^{-\kappa l}\nonumber\\
 &=& C||f||_{2,\rho}^2e^{-\kappa l}.\eea
 
 \qed}

The proof of Thm. \ref{thm:converse} can be adapted to show that a gapped Gibbs sampler also implies weak clustering, which in turn shows that strong clustering implies weak clustering. \\

\begin{corollary}\label{cor:weakcluster}
Under the same assumptions as Theorem \ref{thm:converse}, 
\be \Cov(f,g)\leq c ||f||_{2,\rho}||g||_{2,\rho} e^{-d(\Sigma_f,\Sigma_g)/\xi}\ee
for some positive $c,\xi$. 
\end{corollary}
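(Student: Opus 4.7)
The proof directly adapts the argument used for Theorem~\ref{thm:converse}. Given $f,g\in\cA_\Lambda$ with $d:=d(\Sigma_f,\Sigma_g)$, I would first normalize by replacing $f$ with $f-\tr{\rho f}\,\1$ and $g$ with $g-\tr{\rho g}\,\1$; this leaves $\Cov(f,g)$ unchanged, while the $\bL_2(\rho)$-norms only decrease. Hence we may assume $\tr{\rho f}=\tr{\rho g}=0$, in which case $\bE_\Lambda(g)=0$ and $\Cov(f,g)=\avr{f,g}_\rho$, and the task reduces to bounding $|\avr{f,g}_\rho|$ by a quantity decaying exponentially in $d$.

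Next, I would invoke the detectability-lemma machinery from the proof of Theorem~\ref{thm:converse}: there is an operator $\Pi$, built from $L$ commuting layers of local projective conditional expectations $\bE_k$, with $\|\Pi^l-\bE_\Lambda\|_{2,\rho\to 2,\rho}\leq Ce^{-\kappa l}$. Fix $l$ of order $d$, small enough that the light cone of $\Pi^l$ acting on a neighbourhood of $\Sigma_g$ cannot reach $\Sigma_f$. Partition the factors in $\Pi^l$ into $\bE_{in}$, consisting of those $\bE_k$ whose extended supports lie within a neighbourhood of $\Sigma_g$ of radius $\sim d/2$, and $\bE_{out}$, comprising the remaining factors (all supported far from $\Sigma_g$). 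Since the two families act on disjoint regions they commute, so $\Pi^l=\bE_{in}\bE_{out}$.

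The crucial step, mirroring the identities $\bE_{in}(g_{\bar A})=g_{\bar A}$ used in the theorem, is that frustration-freeness implies $\bE_k(h)=h$ whenever $k_\partial\cap\Sigma_h=\emptyset$. Applied to the present decomposition, $\bE_{in}(f)=f$ because the support of $\bE_{in}$ misses $\Sigma_f$, and $\bE_{out}(g)=g$ because the support of $\bE_{out}$ misses $\Sigma_g$. Using reversibility of each conditional expectation,
$$\avr{f,\Pi^l(g)}_\rho=\avr{f,\bE_{in}\bE_{out}(g)}_\rho=\avr{f,\bE_{in}(g)}_\rho=\avr{\bE_{in}(f),g}_\rho=\avr{f,g}_\rho.$$
Combining this identity with Cauchy--Schwarz and the detectability bound, $\|\Pi^l(g)\|_{2,\rho}=\|(\Pi^l-\bE_\Lambda)(g)\|_{2,\rho}\leq Ce^{-\kappa l}\|g\|_{2,\rho}$, gives
$$|\Cov(f,g)|=|\avr{f,\Pi^l(g)}_\rho|\leq\|f\|_{2,\rho}\,\|\Pi^l(g)\|_{2,\rho}\leq Ce^{-\kappa l}\|f\|_{2,\rho}\|g\|_{2,\rho},$$
and taking $l$ proportional to $d$ produces the claimed exponential decay.

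The main technical obstacle is the clean decomposition $\Pi^l=\bE_{in}\bE_{out}$: local factors straddling the interface between the two regions may have extended supports that overlap, breaking exact commutativity. This is handled exactly as in the proof of Theorem~\ref{thm:converse}, by interposing a buffer of width at least $r$ between the "in" and "out" regions and assigning boundary terms consistently to one side; the buffer only enlarges the relevant constants in the final decay rate $\xi$ by an $O(1)$ factor, without affecting the qualitative conclusion.
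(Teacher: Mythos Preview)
Your proposal is correct and takes essentially the same approach as the paper, whose proof simply says to repeat the argument of Theorem~\ref{thm:converse} with $A\cup B=\Lambda$ and with $f,g$ in place of $g_{\bar A},g_{\bar B}$. One minor remark: the factorization $\Pi^l=\bE_{in}\bE_{out}$ is not obtained by commuting two families supported on disjoint regions (they are not disjoint across layers), but rather from the light-cone/pyramid structure of the layered product exactly as in the proof of Theorem~\ref{thm:converse} and Fig.~\ref{fig4}; your deferral to that proof is the right move, though the ``buffer'' explanation is not quite the mechanism used there.
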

\proof{ The proof is identical to that of Theorem \ref{thm:converse}, but setting $A\cup B=\Lambda$, and taking $f$ and $g$ instead of $g_{\bar{A}}$ and $g_{\bar{B}}$ in the covariance.\qed}

\bigskip

\noindent \textit{Remarks:} 
\begin{enumerate}[i]
\item We point out that a number of results have already been published which show that a gapped Liouvillian implies weak clustering in the ground state \cite{Angelo,LogSobolevCC,PoulinLR}. Those results focus on general Liouvillians and their steady states. Hence although they are weaker than Theorem \ref{thm:converse}, they are more general.
 
\item Using the mapping described in Table 1, it can be seen that the strong clustering condition is essentially equivalent to condition $C3$ in Ref. \cite{Nachtergaele}. Theorem \ref{thm:main} can hence also be seen as an alternative proof of Nachtergaele's Thm. 3 in Ref. \cite{Nachtergaele}. It is furthermore interesting to note that condition  $C3$ in Ref. \cite{Nachtergaele} could in fact be related to a covariance decay condition; a connection which had thus far not been made.

\item  The detectablility lemma is almost sufficient to show local indistinguishability. Indeed, by H\" older duality,  one gets 
 \be|\avr{f,(\Pi_{A\cup B}^l - \bE_{A\cup B})(g)}_\rho|\leq ||f||_{1,\rho}||(\Pi_{A\cup B}^l - \bE_{A\cup B})(g)||_\infty.\ee
Thus, if one could show that $||(\Pi_{A\cup B}^l - \bE_{A\cup B})||_{\infty-\infty}$ is exponentially decaying in $l$, then local indistinguishability would follow. In the framework of frustration-free Hamiltonians and ground states, this would connect LTQO and the detectability lemma in an intriguing way, and could potentially lead to new strategies for proving the area law conjecture (which is implied by LTQO \cite{Michalakis}).

\item \mjk{The proof of the equivalence between strong clustering and a system size independent gap carries through with some modification  for the pair $(\cL^H,\bE^\rho)$. The only difference is that the  conditional expectation $\bE^\rho$ is not projective, and Theorem \ref{thm:converse} requires projective conditional expectation in order to use the detectability lemma. This restriction can be partially circumvented by noting that the gap of $\cL^H$ is lower bounded by the gap of $\cL^{H,\infty}(f):=\lim_{n\rightarrow\infty}\sum_{k\in\Lambda} ((\bE_k^\rho)^n(f)-f)$. Indeed, by monotonicity of conditional expectations, $\avr{f,\bE_k^\rho(f)}\geq\avr{f,(\bE_k^\rho(f))^n}$ for every $n\geq 1$. This implies that the gap of $\cL^H$ is larger than that of $\cL^{H,\infty}$ by the variational characterization of Eqn. (\ref{eqn:vargap}). Therefore, strong clustering in $\bE^\rho$ implies that $\cL^H$ is gapped, and gapped $\cL^H$ implies strong clustering in $\lim_{n\rightarrow\infty}(\bE^\rho)^n$. By the same argument, and Corollary \ref{cor:weakcluster}, strong clustering in $\bE^\rho$ implies weak clustering. }
\end{enumerate}

%%%%%%%%%%%%%%%%%%%%%%%%%%%%%%%%%%%%%%%%%%%%%%%%%%%%%%%%%%%%%%%%%%%%%

\section{One-dimensional models}\label{sec:1D}

In this section, we consider the special case of one-dimensional lattice systems. We show that the Gibbs sampler (Davies or Heat-Bath) is always gapped and hence satisfies strong clustering. This can be considered as a partial extension of the seminal result by Araki that Gibbs states of one dimensional lattice systems always satisfy clustering of correlations \cite{Araki}. The other result of this section is a proof that strong clustering and weak clustering are equivalent for one-dimensional commuting potentials. Intuitively, this is true because strong clustering is in a sense a statement of clustering restricted to a subsystem, where the worst-case boundary conditions are taken into account. Given that in a one-dimensional lattice, the boundary has dimension zero, its contribution only provides a constant multiplicative factor in the clustering statement. 

\begin{figure}[h]
\centering
  \includegraphics[scale=0.40]{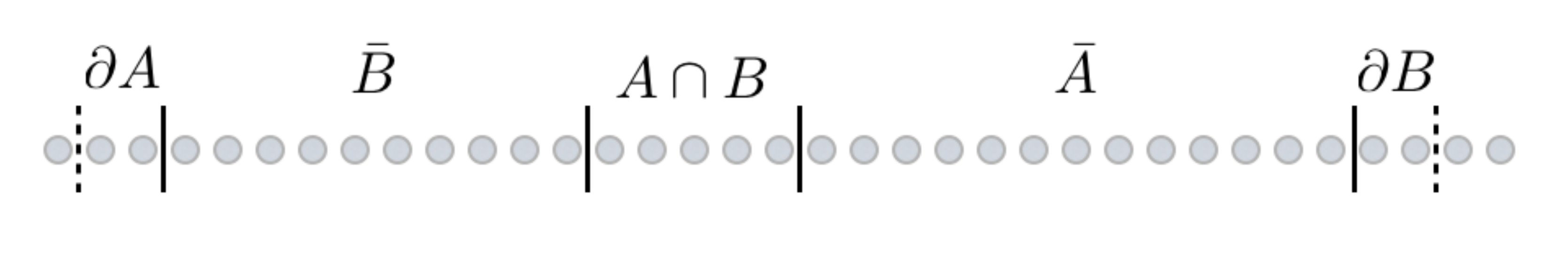}
    \caption{Illustration of a subset $A\cup B\subseteq \Lambda$ of a one-dimensional lattice. The boundary $\partial A \cup \partial B$ is zero-dimensional. }
    \label{fig5}
\end{figure}

\begin{theorem}[1D equivalence]\label{thm:1Dequiv}
Let $\Lambda = \bZ \backslash l$, for an integer $l$. Let $\Phi :\Lambda\mapsto \cA_\Lambda$ be an $r$-local bounded and commuting potential, and let $\rho$ be the Gibbs state of $H_\Lambda=\sum_{k \in \Lambda} \Phi(k)$. Then, $\rho$ satisfies weak clustering if, and only if, $\rho$ satisfies strong clustering for $\bE\in\{\bE^\rho,\bE^\cL\}$. 
\end{theorem}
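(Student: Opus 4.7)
The claim is an equivalence. The direction \emph{strong clustering implies weak clustering} holds in every dimension: it follows by chaining Theorem~\ref{thm:main} (strong clustering implies the Gibbs sampler has a system-size-independent spectral gap) with Corollary~\ref{cor:weakcluster} (a gapped Gibbs sampler implies weak clustering of $\rho$). The substantive direction is \emph{weak clustering implies strong clustering in 1D}; the crucial feature of one dimension is that the boundary $\partial C$ of any interval $C = A \cup B$ consists of only $O(1)$ sites, making the restriction to the subsystem $C_\partial$ dimension-independent.

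By Proposition~\ref{prop:restrictionB} I may restrict the supremum defining strong clustering to observables $f \in \cA_{C_\partial}$ with normalization $\|f\|_{2,\rho_C}^2$, where $\rho_C$ is the Gibbs state of $H_C$ on $C_\partial$. A direct calculation exploiting the commuting potential (essentially the locality argument used in the proof of Lemma~\ref{lem:HBGen}) shows that $\bE^\rho_A$ agrees with $\bE^{\rho_C}_A$ when acting on $\cA_{C_\partial}$, and the analogous statement holds for $\bE^\cL$. The problem therefore reduces to one entirely on the finite subsystem $(C_\partial, \rho_C)$, in which $\rho_C$ is itself the Gibbs state of a 1D commuting Hamiltonian.

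Next, I extend $A, B$ to $\tilde A := A \cup \partial_L C$ and $\tilde B := B \cup \partial_R C$, so that $\tilde A \cup \tilde B = C_\partial$. The supports of $\bE_{\tilde A}(f)$ and $\bE_{\tilde B}(f)$ lie in $C_\partial \setminus \tilde A = \bar B \cup \partial_R C$ and $C_\partial \setminus \tilde B = \bar A \cup \partial_L C$, which in one dimension are disjoint subsets separated by exactly $d(\bar A, \bar B)$. The remark following Definition~\ref{def:strongclustering}, together with weak clustering of $\rho_C$ (inherited from the weak-clustering hypothesis on $\rho$ using that $\rho$ and $\rho_C$ differ only through a constant-size boundary correction in 1D, or unconditionally from Araki's theorem for 1D quantum Gibbs states), then gives $\Cov_{C_\partial}(\bE_{\tilde A}(f), \bE_{\tilde B}(f)) \leq c \|f\|_{2,\rho_C}^2 e^{-d(\bar A, \bar B)/\xi}$.

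The final step is to deduce from this the required bound on $\Cov_C(\bE_A(f), \bE_B(f))$. Since $\tilde A \setminus A = \partial_L C$ and $\tilde B \setminus B = \partial_R C$ have constant size in 1D, the pairs $\bE_A, \bE_{\tilde A}$ (and $\bE_B, \bE_{\tilde B}$) are related through a local transformation acting on a constant-dimensional boundary algebra, via the tower identities $\bE_A \bE_{\tilde A} = \bE_{\tilde A} \bE_A = \bE_{\tilde A}$ for projective conditional expectations. The resulting correction factors are bounded independently of $|\Lambda|$ and can be absorbed into the constants without affecting the exponential decay rate in $d(\bar A, \bar B)$. I expect this transfer step to be the main technical obstacle of the proof; the 1D hypothesis is essential precisely because $\partial C$ is zero-dimensional, so that the boundary algebra has bounded dimension independent of the system size.
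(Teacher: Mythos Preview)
Your reduction via Proposition~\ref{prop:restrictionB} and the observation that $\tilde A\cup\tilde B=C_\partial$ lets you invoke the easy case (Remark~ii after Definition~\ref{def:strongclustering}) are both fine, and the direction strong $\Rightarrow$ weak is handled correctly. However, the ``transfer step'' you flag as the main technical obstacle is not just an obstacle: as written it is a genuine gap, and the tower identities $\bE_A\bE_{\tilde A}=\bE_{\tilde A}\bE_A=\bE_{\tilde A}$ do not close it. What you need is a bound on
\[
\Cov_C(\bE_A f,\bE_B f)=\avr{(\bE_A-\bE_C)f,(\bE_B-\bE_C)f}_\rho
\]
starting from one on $\avr{(\bE_{\tilde A}-\bE_{C_\partial})f,(\bE_{\tilde B}-\bE_{C_\partial})f}_\rho$. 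The difference $\bE_A-\bE_{\tilde A}$ is an $\bL_2(\rho)$-projector whose range consists of operators supported on $A^c$ with nontrivial action on $\partial_L C$; this range is \emph{not} finite-dimensional (the $\bar B$ part is unconstrained), so there is no ``constant-dimensional boundary algebra'' that controls the discrepancy. Expanding $\bE_A=\bE_{\tilde A}+(\bE_A-\bE_{\tilde A})$ produces cross terms of size $O(\|f\|_{2,\rho}^2)$ with no a~priori decay in $d(\bar A,\bar B)$. A separate but smaller issue: invoking Araki's theorem ``unconditionally'' does not give you $\bL_2$ weak clustering, only the $\bL_\infty$ version, so that parenthetical does not shortcut the argument.

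The paper's proof takes a different route that avoids the transfer problem entirely. After the same reduction to $C_\partial$, it works with the asymmetric covariance $\Cov^{(0)}$ (legitimate by Proposition~\ref{equivgloClust}), writes $\rho_{C_\partial}=P(\sigma_{\partial_A}\otimes\sigma_{A\cup B}\otimes\sigma_{\partial_B})Q$ with $P,Q$ supported near the two boundaries, and then Schmidt-decomposes $Qh$ and $Pg$ (with $h=\bE_A f$, $g=\bE_B f$) across the cut $\partial_A\partial_B\,|\,\bar A,\bar B$. The point is that the Schmidt rank is at most $(d_{\partial_A}d_{\partial_B})^2$, a constant in 1D, so weak clustering can be applied term by term and the finite sum absorbed into the prefactor. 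This is the missing idea: rather than comparing two different conditional expectations on the same $f$, one decomposes a single pairing into finitely many factorized pieces to which the global weak-clustering bound applies directly.
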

\proof{ 
By Corollary \ref{cor:weakcluster}, strong clustering  implies weak clustering \mjk{(see Remark iv above)}. We show the converse below.

We can without loss of generality assume that $\bE_\Lambda(f)=0$. Given $A,B\subseteq\Lambda$, with $A\cap B\neq 0$, Proposition \ref{prop:restrictionB} shows that the maximum in 
\be \sup_{f\in\cA_\Lambda}\frac{\Cov_{A\cup B}(\bE_A(f),\bE_B(f))}{||f||_{2,\rho}^2}\leq c  e^{-d(\bar{A},\bar{B})/\xi},\label{1Dscluster}\ee
is reached for an operator $\1\otimes f$, with $f\in\cA_{(A\cup B)_\partial}$. Hence, it suffices to work with 
$f\in\cA_{(A\cup B)_\partial}$ and $\rho\propto e^{-\beta H_{A\cup B}}$ where $H_{A\cup B}=\sum_{k\in A\cup B}\Phi(k)$ (see also the comments after Lemma \ref{lemma:gap}).

Denote $h = \bE_A(\tilde{f})$ and $g = \bE_B(\tilde{f})$ where $\tilde{f}$ is the observable achieving the maximum on the LHS of Eq. (\ref{1Dscluster}), which we know acts only on $(A\cup B)_\partial\equiv \partial_A \cup A\cup B\cup \partial_B$ (see Fig. \ref{fig5}).
Throughout this proof, we will write the tensor products explicitly so as to avoid confusion. We furthermore define the modified states $\sigma_{A_\partial}\equiv \rho_A$, which for any $A\subseteq\Lambda$ is the Gibbs state restricted to a subset \textit{without} the Hamiltonian terms intersecting the boundary of that subset.
Since the Hamiltonian of the system is commuting,  we can write
\begin{equation}
\rho_{(A\cup B)_\partial} = P (\sigma_{\partial_A} \otimes \sigma_{A\cup B} \otimes \sigma_{\partial_B}) Q,
\end{equation}
where $P$ acts only on $\partial_A$ and $A$ (in fact only on a constant-sized region of $A$ that touches $\partial_A$) and $Q$ acts only on $\partial_B$ and $B$. 

Note furthermore that $h$ has support on the complement of $A$ in $(A\cup B)_\partial$, and similarly for $g$ with respect to $B$. 

We will prove that weak clustering is equivalent to strong clustering in the covariance $\Cov_\rho^{(0)}$. This is sufficient, because as shown in Proposition \ref{equivgloClust}, for commuting Hamiltonians, the two types of weak clustering are equivalent. We can write
\begin{equation} \label{inner}
\avr{h,g}^{(0)}_{\rho} \equiv \tr{\rho h g^\dag  } = \tr{ \sigma_{\partial_A} \otimes \sigma_{A\cup B} \otimes \sigma_{\partial_B} (Qh) (gP)   }
\end{equation}

Since $Qh$ has support on the complement of $A$, we can write it in its Schmidt decomposition with respect to the Hilbert spaces $(\partial_A\partial_B, \avr{\cdot,\cdot}^{(0)}_{\sigma_{\partial_A} \otimes \sigma_{\partial_B}})$ and $(B\setminus A, \avr{\cdot,\cdot}^{(0)}_{\sigma_{B\setminus A}})$ as
\begin{equation}
Q h= \sum_{k} h^{k}_{\partial_A \partial_B} \otimes h^k_{B\setminus A},
\end{equation}
where $h^{k}_{\partial_A \partial_B}\in\cA_{\partial_A \partial_B}$ and $h^k_{B\setminus A}\in\cA_{B\setminus A}$ are the Schmidt coefficients satisfying $\tr{\sigma_{\partial_A} \otimes \sigma_{\partial_B} (h^{k}_{\partial_A \partial_B})^\dag h^{k'}_{\partial_A \partial_B}} = \delta_{kk'} \tr{\sigma_{\partial_A} \otimes \sigma_{\partial_B} (h^{k}_{\partial_A \partial_B})^\dag h^{k}_{\partial_A \partial_B}}$, and likewise for $\{ h^k_{B} \}$. Note that $k$ varies from 1 to $(d_{\partial_A}d_{\partial_B})^2$, where $d_{\partial_A}$ is the dimension of the space spanned by the sites in $\partial_A$. 

Similarly, since $Pg $ has support on the complement of $A$, we can write it in its Schmidt decomposition with respect to the Hilbert spaces $(\partial_A\partial_B, \avr{\cdot,\cdot}^{(0)}_{\sigma_{\partial_A} \otimes \sigma_{\partial_B}})$ and $(A\setminus B, \avr{\cdot,\cdot}^{(0)}_{\sigma_{A\setminus B}})$ as
\begin{equation}
Pg= \sum_{k} g^{k}_{\partial_A \partial_B} \otimes g^k_{A\setminus B},
\end{equation}

From Eq. (\ref{inner}) we get
\begin{equation} 
\avr{h,g}^{(0)}_{\rho} = \sum_{k, l}  \tr{ (\sigma_{\partial_A} \otimes \sigma_{\partial_B}  h^{k}_{\partial_A \partial_B} (g^{l}_{\partial_A \partial_B})^\dag}  \tr{\sigma_{A\cup B} h^k_{B\setminus A} ( g^l_{A\setminus B})^\dag},
\end{equation}
and thus
\begin{eqnarray} 
|\avr{h,g}^{(0)}_{\rho}| &\leq& \sum_{k, l}  \left|\tr{ (\sigma_{\partial_A} \otimes \sigma_{\partial_B}  h^{k}_{\partial_A \partial_B} (g^{l}_{\partial_A \partial_B})^\dag}\right|  \left|\tr{\sigma_{A\cup B} h^k_{B\setminus A} ( g^l_{A\setminus B})^\dag}\right| \nonumber \\
&\leq&  \sum_{k, l} (\tr{ (\sigma_{\partial_A} \otimes \sigma_{\partial_B}  h^{k}_{\partial_A \partial_B} (h^{k}_{\partial_A \partial_B})^\dag})^{1/2} (\tr{ (\sigma_{\partial_A} \otimes \sigma_{\partial_B}  g^{l}_{\partial_A \partial_B} (g^{l}_{\partial_A \partial_B})^\dag})^{1/2} \nonumber\\
&& \left|\tr{\sigma_{A\cup B} h^k_{B\setminus A} ( g^l_{A\setminus B})^\dag}\right| \nonumber 
\end{eqnarray}

Noting that 
\be |\tr{\sigma_{A\cup B} g f^\dag}|\leq c |\tr{\rho_{A\cup B} g f^\dag}|,\ee
for some constant $c$ since $\rho_{A\cup B}$ and $\sigma_{A\cup B}$ only differ on the boundary of $A\cup B$, which is zero-dimensional. 

Then, by weak clustering, there exist constants $c,\xi>0$ such that
\begin{equation}
 \left|\tr{\sigma_{A\cup B} h^k_{B\setminus A} ( g^l_{A\setminus B})^\dag}\right| \leq c \Vert h^k_{B\setminus A} \Vert_{2,(0), \sigma'_{A\cup B} } \Vert g^l_{A\setminus B} \Vert_{2,(0), \sigma_{A\cup B}} e^{-l/\xi},
\end{equation}
and so
\begin{eqnarray} 
|\langle f, g \rangle^{(0)}_{\rho}| &\leq& c e^{-l/\xi} \sum_{k, l} \Vert h^{k}_{\partial_A \partial_B}\Vert_{2, (0),\sigma_{\partial_A} \otimes \sigma_{\partial_B} } \Vert g^{l}_{\partial_A \partial_B} \Vert_{2,(0), \sigma_{\partial_A} \otimes \sigma_{\partial_B}}  \nonumber\\
 &&\Vert h^k_{B\setminus A} \Vert_{2,(0), \sigma_{A\cup B} } \Vert g^l_{A\setminus B} \Vert_{2,(0), \sigma_{A\cup B}} \nonumber
\end{eqnarray}
By concavity of $x \mapsto x^{1/2}$,
\begin{eqnarray} 
|\langle f, g \rangle^{(0)}_{\rho}| &\leq& c e^{-l/\xi} d^2\left( \sum_{k, l} \Vert h^{k}_{\partial_A \partial_B}\Vert^2_{2, (0),\sigma_{\partial_A} \otimes \sigma_{\partial_B} } \Vert g^{l}_{\partial_A \partial_B} \Vert^2_{2,(0), \sigma_{\partial_A} \otimes \sigma_{\partial_B}} \right. \nonumber\\
 &&\left.\Vert h^k_{B\setminus A} \Vert^2_{2,(0), \rho_{A\cup B} } \Vert g^l_{A\setminus B} \Vert^2_{2,(0), \sigma_{A\cup B}} \right)^{1/2}\nonumber\\
 &=& c e^{-l/\xi} d^2 (\tr{\sigma' Qh (Qh)^\dag}\tr{\sigma' Pg (Pg)^\dag})^{1/2}
\end{eqnarray}
with $d := d_{\partial_A} d_{\partial_B}$ and $\sigma' := \sigma_{\partial_A} \otimes \sigma_{\partial_B} \otimes \sigma_{A\cup B}$. In the last line, we used that
\begin{eqnarray}
\tr{\sigma' Qh (Qh)^\dag} &=& \sum_{k, k'} \tr{ (\sigma_{\partial_A} \otimes \sigma_{\partial_B} \otimes \sigma_{A\cup B})  ( h^{k}_{\partial_A \partial_B} \otimes h^k_{B} ) (h^{k'}_{\partial_A \partial_B} \otimes h^{k'}_{B} )^{\cal y}  }\nonumber \\ 
&=&  \sum_{k} \tr{ (\sigma_{\partial_A} \otimes \sigma_{\partial_B} \otimes \sigma_{A\cup B})  ( h^{k}_{\partial_A \partial_B} \otimes h^k_{B} ) (h^{k}_{\partial_A \partial_B} \otimes h^k_{B} )^{\cal y}  }, \nonumber
\end{eqnarray}
which follows from the orthogonality of the $h^k$s. 

The result follows from the bound
\begin{equation}
\tr{\sigma' Qh (Qh)^\dag} \leq C(r) \Vert  f_{\partial_A (B\setminus A)\partial_B}  \Vert_{2,(0), \sigma}^2
\end{equation}
for a function $C(r)$ of the interaction range of the Hamiltonian.
\qed}

Using Theorem \ref{thm:1Dequiv} we can now show that commuting Gibbs samplers of one dimensional lattice systems are always gapped. At first sight the clustering proof of Araki \cite{Araki}, together with Thm. \ref{thm:1Dequiv}, Thm.  \ref{thm:main}, and Prop. \ref{equivgloClust}, should suffice to prove that the Davies generator of a 1D commuting Hamiltonian is gapped. However, Araki's result has the error term expressed in infinity norm, whereas we need an $\bL_2$ norm bound (see Eqn. (\ref{weakclust})). Here we use methods from the theory of matrix product states to show that the clustering results can in fact be recast in terms of $\bL_2$ norms.

\begin{proposition}
Let $\Lambda = \bZ \backslash l$, for an integer $l$. Let $\Phi :\Lambda\mapsto \cA_\Lambda$ be an $r$-local bounded and commuting potential, and let $\rho$ be the Gibbs state of $H_\Lambda=\sum_{k \in \Lambda} \Phi(k)$. Then, the Heat-Bath and Davies generators are gapped. 
\end{proposition}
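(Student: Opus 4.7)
The plan is to combine Araki's classical clustering theorem for one-dimensional Gibbs states \cite{Araki} with Theorems \ref{thm:1Dequiv} and \ref{thm:main} from the present work. The only missing ingredient is a conversion of Araki's operator-norm clustering bound into the $\bL_2$-weighted form required by Definition \ref{def:weakclustering}.

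First I would invoke Araki's theorem to obtain constants $c_0, \xi_0 > 0$ such that for all observables $f, g \in \cA_\Lambda$ with disjoint supports,
\be
\Cov_\rho^{(0)}(f,g) = |\tr{\rho f^\dag g} - \tr{\rho f^\dag}\tr{\rho g}| \leq c_0 \|f\|_\infty \|g\|_\infty e^{-d(\Sigma_f, \Sigma_g)/\xi_0}.
\ee
The key step is to replace both operator norms by the $\|\cdot\|_{2,(0),\rho}$ norms, which are in general strictly smaller. For this I would exploit the matrix product operator structure of 1D commuting Gibbs states: since $\rho \propto \prod_k e^{-\beta \Phi(k)}$ with mutually commuting local factors, $\rho$ admits an MPO representation whose bond dimension $D=D(\beta,r,\max_k \dim \cH_k)$ is independent of the system size. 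Given a cut separating $\Sigma_f$ from $\Sigma_g$, one can Schmidt-decompose $\rho^{1/2} f$ and $g\rho^{1/2}$ across the cut in the Hilbert-Schmidt inner product; for Hermitian $f,g$ the sums of squared Schmidt coefficients equal $\|f\|_{2,(0),\rho}^2$ and $\|g\|_{2,(0),\rho}^2$ respectively, while the constant-dimensional bond space bounds the number of relevant Schmidt components. Pulling the Gibbs weight out on each side, applying Araki's $\bL_\infty$ bound to each pair of Schmidt components, and resumming yields
\be
\Cov_\rho^{(0)}(f,g) \leq c \|f\|_{2,(0),\rho} \|g\|_{2,(0),\rho} e^{-d(\Sigma_f, \Sigma_g)/\xi},
\ee
where the $D$-dependent prefactor coming from the zero-dimensional boundary between the two halves of the chain is absorbed into $c$.

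Once this is in hand, Proposition \ref{equivgloClust} (whose commuting hypothesis is met) immediately delivers weak clustering in the sense of Definition \ref{def:weakclustering}. Theorem \ref{thm:1Dequiv} then promotes weak clustering to strong clustering with respect to either $\bE^\rho$ or $\bE^\cL$, and Theorem \ref{thm:main} finally produces a spectral gap for both the Heat-Bath and Davies Gibbs samplers that is independent of $|\Lambda|$. The main obstacle is the $\bL_\infty \to \bL_2$ conversion: one must verify carefully that the overhead from the MPO bond dimension and from the boundary normalization factors depends only on $\beta$, $r$, and the local dimension, not on $|\Lambda|$, so that the exponential decay survives and the final constants are genuinely uniform in system size.
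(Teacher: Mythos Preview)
Your overall outline is correct and matches the paper exactly: invoke Araki, upgrade to $\bL_2$-weighted weak clustering, then apply Proposition~\ref{equivgloClust}, Theorem~\ref{thm:1Dequiv}, and Theorem~\ref{thm:main}. You also correctly identify the $\bL_\infty \to \bL_2$ conversion as the only nontrivial step.

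However, your proposed conversion argument has a gap. Schmidt-decomposing $\rho^{1/2} f$ and $g\rho^{1/2}$ across the cut does give at most $D$ terms each and does recover $\|f\|_{2,(0),\rho}^2$ as the sum of squared Schmidt weights, but ``applying Araki's $\bL_\infty$ bound to each pair of Schmidt components'' does not make sense as stated: Araki's inequality controls $\tr{\rho AB}-\tr{\rho A}\tr{\rho B}$ with the \emph{same} Gibbs state $\rho$ throughout, whereas once you absorb factors of $\rho^{1/2}$ into the observables the resulting bilinear form no longer has that structure. Concretely, writing $\rho^{1/2}=\sum_k P_k\otimes Q_k$ and expanding gives $\tr{\rho fg}=\sum_{k,l}\tr{P_kP_l f}\tr{Q_kQ_l g}$, and the subtracted product $\tr{\rho f}\tr{\rho g}$ does not cancel these cross terms pairwise into covariances to which Araki applies. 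It is not clear how to resum this into something controlled by $\|f\|_{2,(0),\rho}\|g\|_{2,(0),\rho}$ times an exponentially small factor.

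The paper closes this gap by a different and rather indirect route. It passes to the purification $\ket{\rho^{1/2}}=(\rho^{1/2}\otimes\1)\ket{\omega}$, observes that for a commuting potential this is a matrix product state of bond dimension at most $2^r$, and uses Araki only for \emph{constant-support} observables to deduce that the MPS has exponentially decaying correlations. By standard MPS theory this is equivalent to the transfer operator being gapped, hence the MPS is injective and its parent Hamiltonian is gapped. The detectability lemma (Lemma~\ref{lem:Detect}) applied to this parent Hamiltonian then yields clustering for $\ket{\rho^{1/2}}$ with prefactors $\|\,f'\ket{\rho^{1/2}}\,\|\cdot\|\,g'\ket{\rho^{1/2}}\,\|$, which back in the original picture are exactly $\|f\|_{2,(0),\rho}\|g\|_{2,(0),\rho}$. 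So the $\bL_2$ weights arise automatically from the Hilbert-space norm in the doubled system, not from any direct manipulation of Araki's $\bL_\infty$ estimate. This detour through the parent Hamiltonian and the detectability lemma is the missing idea in your proposal.
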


\begin{proof}
Let $H$ be a commuting Hamiltonian in one dimension with finite range $r$, and let $\rho$ be its Gibbs state. Group the sites into blocks of $r/2$ sites, so that the Hamiltonian is 2-local.  Since the terms are pairwise commuting we can write the Gibbs state as 
\begin{equation}
\rho = \frac{1}{Z} \bigotimes_{i : \text{even}} e^{- \beta H_{i, i+1}}  \bigotimes_{i : \text{odd}} e^{- \beta H_{i, i+1}}. 
\end{equation}

Then it follows that 
\be \ket{\rho^{1/2}}=\rho^{1/2}\otimes\1 \ket{\omega}\ee 
is a matrix product state with bond dimension bounded by $2^r$, where $\ket{\omega}=\sum_j \ket{j,j}$ is proportional to the maximally entangled state. Indeed, for any bipartition $(1, \ldots i) (i+1, \ldots, n)$, only the term $e^{- \beta H_{i, i+1}}$ can increase the Schmidt rank. Thus $\ket{\rho^{1/2}}$ has Schmidt rank bounded by $2^r$ in every bipartite cut and by Ref. \cite{Vidal} it is a MPS of bond dimension $2^r$.

Now, Araki \cite{Araki} has shown that Gibbs states of one-dimensional bounded local Hamiltonians satisfy
\be 
\Cov^{(0)}(f,g)\leq c ||f|| ||g|| e^{- \text{d}(\Sigma_f, \Sigma_g)/\xi},
\ee
for some constants $c,\xi$, and $f,g$ with constant support. Then, invoking Prop. \ref{equivgloClust} and the transormation from Table I, we get that for every $f'$ and $g'$ that have the same supports on $f_{\partial}$ and $g_{\partial}$:

\begin{eqnarray}\label{MPSclust}
&& |\bra{\rho^{1/2}} f' \otimes g' \ket{\rho^{1/2}} - \bra{\rho^{1/2}} f' \ket{\rho^{1/2}} \bra{\rho^{1/2}} g' \ket{\rho^{1/2}} | \nonumber \\ &\leq&  c d^{|\Sigma_f|}  d^{|\Sigma_g|}  ||f'|| ||g'|| e^{- \text{d}(\Sigma_f, \Sigma_g)/\xi},
\end{eqnarray}
Therefore $\ket{\rho^{1/2}}$ has decay of correlations for every $f, g$ with constant supports.  

The parent Hamiltonian of a MPS has a unique ground state if, and only if, its transfer operator is gapped (with a unique top eigenvector).  Furthermore, the MPS satisfies Eq. (\ref{MPSclust}) (i.e. has decay of correlations for observables of constant size) if, and only if, the transfer operator is gapped \cite{FNW,PVWC07}. It follows that the transfer operator of $\ket{\rho^{1/2}}$ is gapped, which is equivalent to injectivity of the MPS. This in turn implies that the parent Hamiltonian of the MPS $\ket{\rho^{1/2}}$ is gapped \cite{Nachtergaele}.  Hence we can use the detectability lemma on the parent Hamiltonian of $\ket{\rho^{1/2}}$, and Corollary \ref{cor:weakcluster}, to get 
\be
\Cov^{(0)}(f,g) \leq C ||f||_{2,(0),\rho} ||g||_{2,(0),\rho}e^{-\kappa d(\Sigma_f,\Sigma_g)}\ee

From Prop. \ref{equivgloClust}  the same statement holds in the symmetric covariance $\Cov$ and $\bL_2$ norm $||f||_{2,\rho}$. Then, invoking the equivalence between strong and weak clustering for one-dimensional systems  of Thm. \ref{thm:1Dequiv} we find that strong clustering holds, which in turn implies that all one-dimensional Gibbs samplers of commuting Hamiltonians are gapped via Thm. \ref{thm:main}. \qed
\end{proof}

\section{The High temperature phase}\label{sec:hightemp}

In this section we show that for $r$-local commuting Hamiltonians on a $d$-dimensional lattice there is a temperature $T_c(r, d)$, independent of the lattice size, such that for every $T \geq T_c$ both the Heat Bath and the Davies generators have a constant spectral gap. Thus the Gibbs state of every commuting Hamiltonian can be created efficiently on a quantum computer at high enough temperatures. The result follows from the mapping of Section \ref{sec:converse} between Liouvillians satisfying detailed balance and frustration-free Hamiltonians together with a technique due to Knabe \cite{Knabe} for lower bounding the spectral gap of local Hamiltonians. We show the result independently for Heat-Bath and for Davies generators. 

\begin{theorem}
Let $\Lambda$ be a subset of $\bZ^d$. Let $\Phi :\Lambda\mapsto \cA_\Lambda$ be a $r$-local bounded and commuting potential, and let $\rho$ be the Gibbs state of $H_\Lambda=\sum_{k \in \Lambda} \Phi(k)$. Then there exists a constant $T_c(r, d)$ such that for every $T \geq T_c$ the Heat Bath generator $\cL^H_\Lambda$ has a gap independent of $|\Lambda|$. 
\end{theorem}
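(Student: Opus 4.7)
The plan is to combine the Hamiltonian mapping of Section \ref{sec:converse} with Knabe's local-to-global spectral gap bound, using a high-temperature expansion to control the local gap. First I would invoke Remark iv to replace $\cL^H_\Lambda$ by the semigroup $\cL^{H,\infty}_\Lambda = \sum_k (\bE^{\rho,\infty}_k - \id)$, where $\bE^{\rho,\infty}_k := \lim_{n\to\infty}(\bE^\rho_k)^n$ is the projective conditional expectation onto the kernel of $(\id - \bE^\rho_k)$; its gap lower bounds the gap of $\cL^H_\Lambda$. Applying the detailed-balance symmetrization $\hat{\cL}^{H,\infty}$ and the doubling isomorphism of Section \ref{sec:converse} turns $-\hat{\cL}^{H,\infty}_\Lambda$ into a frustration-free local Hamiltonian $\hat{Q}_\Lambda = \sum_{k\in\Lambda}\hat{q}_k$ on the doubled Hilbert space, where each $\hat{q}_k = \1 - \hat{\bE}^{\rho,\infty}_k$ is an orthogonal projector supported on the constant-size neighborhood $k_\partial$. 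The spectral gap of $\cL^{H,\infty}_\Lambda$ coincides with the spectral gap of $\hat{Q}_\Lambda$ above its (unique) ground space.

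Next I would apply Knabe's theorem (or its modern refinements such as Gosset--Mozgunov): for any frustration-free local Hamiltonian built from projector terms on a $d$-dimensional lattice of interaction range $r$, there exists a threshold function $\gamma_*(L,d,r)$ with $\gamma_*(L,d,r)\to 0$ as $L\to\infty$ such that, if every restriction $\hat{Q}_R$ to a box $R$ of side length $L$ has spectral gap strictly above $\gamma_*(L,d,r)$, then $\hat{Q}_\Lambda$ has a spectral gap bounded below by an explicit $|\Lambda|$-independent constant. The strategy is therefore to fix $L=L(d,r)$ large enough that $\gamma_*(L,d,r) \leq 1/2$, and then certify a local gap of at least $1/2$ on every box of side $L$ at high temperature.

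To certify the local gap, I would perturb around $\beta=0$. At infinite temperature, $\rho = \1/d^{|\Lambda|}$, hence $\eta^\rho_k = d^{-1/2}\1$ and $\bE^\rho_k$ reduces to the single-site depolarizing map $f\mapsto \Tr_k[f]/d$. The associated projectors $\hat{q}_k^{(0)}$ act on disjoint site pairs in the doubled space and mutually commute, so $\hat{Q}_R^{(0)}$ has integer spectrum and gap exactly $1$. A high-temperature cluster (polymer) expansion applied to the local factor $\eta^\rho_k = (\Tr_k[\rho])^{-1/2}\rho^{1/2}$, together with a spectral-projection estimate to pass from $\bE^\rho_k$ to its projective limit $\bE^{\rho,\infty}_k$, should yield a bound $\|\hat{q}_k - \hat{q}_k^{(0)}\|_\infty \leq \delta(\beta,r,d)$ with $\delta(\beta,r,d) \to 0$ as $\beta\to 0$, uniformly in $|\Lambda|$ and in the site $k$. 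Weyl's inequality then gives $\lambda(\hat{Q}_R) \geq 1 - |R|\,\delta(\beta,r,d)$, which exceeds $\gamma_*(L,d,r)$ once $\beta < \beta_c(L,d,r) = \beta_c(d,r)$.

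The main obstacle is the third step: producing a uniform, $|\Lambda|$-independent estimate $\|\hat{q}_k - \hat{q}_k^{(0)}\|_\infty \leq \delta(\beta,r,d)$. The delicate point is that $\bE^{\rho,\infty}_k$ is defined through an infinite iteration of a non-projective channel, so controlling it requires both an analytic expansion of $\eta^\rho_k$ in $\beta$ (straightforward by locality of the commuting potential and a standard polymer argument) and a stability estimate showing that the fixed-point projector of $\bE^\rho_k$ depends continuously, with explicit high-temperature rate, on the local factor $\eta^\rho_k$. Once this uniform perturbation bound is in hand, the rest of the argument is mechanical: choose $L$ from Knabe, then choose $T_c = 1/\beta_c$ so that the local gap exceeds the Knabe threshold, and conclude an $|\Lambda|$-independent gap for $\cL^H_\Lambda$ for all $T \geq T_c$.
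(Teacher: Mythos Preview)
Your approach is essentially the paper's: map to a frustration-free local projector Hamiltonian on the doubled space, apply Knabe's local-to-global bound, and verify the local gap criterion by perturbing around $\beta=0$. Your detour through $\cL^{H,\infty}$ via Remark~iv and the paper's passage from $-\hat{\cL}^H_k$ to its range projector $P_k^{H,T}$ produce the \emph{same} projector (the projection onto the complement of the $1$-eigenspace of $\hat{\bE}^\rho_k$), so the two routes are equivalent.

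Where you diverge is in your assessment of the ``main obstacle''. You propose a polymer expansion for $\eta^\rho_k$ and a separate stability estimate for the projective limit, but neither is needed. Because the potential is commuting, Lemma~\ref{lem:HBGen}(2) already gives $\eta^\rho_k = (\Tr_k[e^{-\beta H_{k_\partial}}])^{-1/2} e^{-\beta H_{k_\partial}/2}$, which depends only on the finitely many terms $\Phi(j)$ touching $k_\partial$. Thus $\hat{\bE}^\rho_k$, and hence its spectral projector $\hat{q}_k$, is an operator on the fixed finite-dimensional space over $k_\partial \cup \overline{k_\partial}$, determined entirely by data in that neighbourhood. Its convergence $\hat{q}_k \to \hat{q}_k^{(0)} = \1 - w_{k\bar{k}}$ as $\beta\to 0$ is then just continuity of spectral projectors in finite dimension (the $1$-eigenspace is isolated at $\beta=0$), and the rate is automatically uniform in $|\Lambda|$ and in $k$ because the local data are uniformly bounded. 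The paper simply invokes this continuity directly: for each fixed box $S$ of the Knabe size, $-\tilde{\cL}^H_S \to \sum_{k\in S}(\1-w_{k\bar{k}})$ with gap $1$, so the local gap exceeds the Knabe threshold for all $T$ above some $T_c(r,d)$. No cluster expansion or infinite-iteration analysis enters.
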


\begin{proof}

By Eqs. (\ref{Cond_Exp}) and (\ref{eqn:HeatBath}), for any $f\in\cA_\Lambda$, 
\begin{equation}
- \hat{\cL}^H_\Lambda(f) = \sum_{k \in \Lambda} \rho^{1/4} (\id - \mathbb{E}^{\rho}_{k})(\rho^{-1/4} f \rho^{-1/4})\rho^{1/4}.
\end{equation}
where
\bea
\bE^{\rho_{T}}_{k}(f) &=& \Tr_{k}[\eta_k^\rho f \eta_k^{\rho \dag}] \\
&=&D_k \sum_{j} p_j \rho_{\not k}^{-1/4} U_{j} \rho_{\not k}^{-1/4}\rho^{1/2} g \rho^{1/2}  \rho_{\not k}^{-1/4}U_{j}^{\cal y} \rho_{\not k}^{-1/4},
\eea
with $\rho_{ \not k } := \text{tr}_{k} (\rho) $, $D_k$ the local Hilbert space dimension dimension, and $\{ p_j,  U_{j, k} \}$ an ensemble of depolarizing unitaries spanning $\cB_k$ such that for every $f\in\cB_k$, $\sum_{j} p_j U_j f U_{j}^{\cal y} = \text{tr}(f) \1_k / D_k$. 

Using the mapping of Section \ref{sec:converse}, the corresponding Hamiltonian $(-\hat{\cL}_\Lambda^H) $ acting on $\cB_\Lambda \otimes \overline{\cB_\Lambda}$ is given by 
\begin{eqnarray}
\hat{\cL}^{H}_\Lambda &:=& \sum_{k \in \Lambda} \hat{\cL}^{H}_k   \nonumber \\
&=& \sum_{k \in \Lambda} D_k(\rho^{1/4} \otimes \bar{\rho}^{1/4})  (\rho_{\not k}^{-1/4} \otimes \bar{\rho}_{\not k}^{-1/4})  \left(\1  -  \sum_{j} p_j U_{j, k} \otimes \bar{U}_{j, k} \right)  (\rho_{\not k}^{-1/4} \otimes \bar{\rho}_{\not k}^{-1/4}) ( \rho^{1/4} \otimes \bar{\rho}^{1/4}) \nonumber \\
&=& \sum_{k \in \Lambda} D_k(\rho^{1/4} \otimes \bar{\rho}^{1/4})  (\rho_{\not k}^{-1/4} \otimes \bar{\rho}_{\not k}^{-1/4})  \left(\1  -  w_{k\bar{k}} \right)  (\rho_{\not k}^{-1/4} \otimes \bar{\rho}_{\not k}^{-1/4}) ( \rho^{1/4} \otimes \bar{\rho}^{1/4}),
\end{eqnarray}
with $w_{k\bar{k}} = \sum_{i, j} \ket{j j} \bra{i i}$ the maximally entangled state  on $\cB_k\otimes\bar{\cB}_k$ tensored with the identity outside site $k$. 
Note that each $ \hat{\cL}^{H}_k$ is local, with its locality given by the interaction range $r$. Moreover as explained in Section \ref{sec:converse}, $\hat{\cL}^{H}_k$ is frustration free. 

Given that when $T \rightarrow \infty$, $\rho \rightarrow \1_\Lambda$ and so also $\rho_{\not k}^{-1/4} \rightarrow \1_\Lambda$, the Hamiltonian $(- \hat{\cL}^{H}_\Lambda)$ converges in the limit $T \rightarrow\infty$ to a non-interacting Hamiltonian given by $(- \hat{\cL}^{H}_\Lambda) = \sum_{k \in \Lambda}  (\id - w_{k \overline{k}})$, whose spectral gap is one. The statement of the theorem will follow by showing that there is a constant $T_c(k, d)$ such that the Hamiltonian $(- \hat{\cL}^{H}_\Lambda)$ has almost-commuting terms for all $T\geq T_c(k,d)$, with commutators sufficiently small that the spectral gap is also a constant. 

We proceed by employing a well-know technique due do Knabe \cite{Knabe} for lower bounding the spectral gap of local Hamiltonians. First we consider the Hamiltonian $(- \tilde{\cL}^{H}_\Lambda) := \sum_{k \in \Lambda} P_k^{H, T}$, where $P_k^{H, T}$ is the projector onto the non-zero eigenspace of $(- \hat{\cL}^{H}_k)$. 

We have that in the limit $T \rightarrow \infty$, $P_k^{H, T} \rightarrow (\id - w_{k \overline{k}})$. Moreover, we also have that 
\begin{equation} \label{connectingtosumofprojectors}
\Delta(- \hat{\cL}^{H}_\Lambda) \geq \Omega(\Delta( - \tilde{\cL}^{H}_\Lambda)).
\end{equation} 
(see e.g. Section 2 of \cite{DetectLemma2}). 

We now apply the Knabe bound \cite{Knabe}. It says that given any $k$-local frustration-free Hamiltonian on a $d$-dimensional lattice formed by local projector terms, then there is an integer $N(k, d)$ and a real number $\lambda(k, d) < 1$ (that can be computed explicitly given the lattice and that are independent of the volume) such that
\begin{equation}
\Delta(H) \geq  \Omega \left( \min_{S : |S| = N}   \Delta(H_{S}) - \lambda \right),\label{eqn:Knabe}
\end{equation}
where the minimum is taken over all connected sublattices of size $N$. 

For every fixed region $S$, $- \hat{\cL}^{H}_S$ converges to $\sum_{k \in S}  (\id - w_{k \overline{k}})$ in the limit $T \rightarrow \infty$. Since $\sum_{k \in S}  (\id - w_{k \overline{k}})$ has gap one, we find that given $N(k, d)$ and $\gamma(k, d)$, there always exists a $T_c$ such that for all $T \geq T_c$,
\begin{equation}
\min_{S : |S| = N}   \Delta(- \tilde{\cL}^{H}_S) > \lambda,
\end{equation}
so indeed
\begin{equation}
\Delta(- \tilde{\cL}^{H}_\Lambda) \geq \Omega(1),
\end{equation}
and the statement follows from Eq. (\ref{connectingtosumofprojectors}).\qed

\end{proof}

Similarly, we can show that the Davies generator is gapped at high temperature:

\begin{theorem}
Let $\Lambda$ be a finite subset of $\bZ^d$. Let $\Phi :\Lambda\mapsto \cA_\Lambda$ be an $r$-local bounded and commuting potential, and let $\rho$ be the Gibbs state of $H_\Lambda=\sum_{k \in \Lambda} \Phi(k)$. Then there exists a constant $T_c(r, d)$ such that for every $T \geq T_c$ the Davies generator $\cL^D_\Lambda$ has a gap independent of $|\Lambda|$. 
\end{theorem}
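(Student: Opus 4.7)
The plan is to mirror the Heat-Bath proof, substituting the Liouvillian projective conditional expectation $\bE^\cL$ for the minimal expectation $\bE^\rho$, and relying on continuity of the Davies generator in the inverse temperature $\beta = 1/T$. By the symmetrization of Section~\ref{sec:converse}, since $\cL^D_\Lambda$ satisfies detailed balance, $-\hat{\cL}^D_\Lambda = \sum_{k \in \Lambda} (-\hat{\cL}^D_k)$ is a frustration-free Hermitian local operator on the doubled Hilbert space with unique ground state $\ket{\sqrt{\rho}}$. Each $-\hat{\cL}^D_k$ depends continuously on $\beta$, because the Bohr frequencies, the Fourier components $S_{\alpha(k)}(\omega)$ of the couplings, the bath weights $\chi_{\alpha(k)}(\omega)$, and the conjugating factor $\rho^{1/4}$ all do.

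First, I would define the projector Hamiltonian $-\tilde{\cL}^D_\Lambda := \sum_{k \in \Lambda} P^{D,T}_k$, where $P^{D,T}_k$ is the spectral projector onto the non-zero eigenspace of $-\hat{\cL}^D_k$. As in the Heat-Bath case, the local norms $\Vert\hat{\cL}^D_k\Vert$ are bounded uniformly for $T$ above any fixed threshold, and the elementary comparison of Section~2 of \cite{DetectLemma2} yields $\Delta(-\hat{\cL}^D_\Lambda) \geq \Omega(\Delta(-\tilde{\cL}^D_\Lambda))$. Next, I would verify that $P^{D,T}_k \to \id - w_{k\bar{k}}$ as $T \to \infty$. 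By the local primitivity of $\cL^D$ established in Lemma~\ref{lem:DaviesGen}, the kernel of $\cL^D_k$ consists of operators acting as the identity on site $k$ for every $\beta$, so $\dim\ker(\hat{\cL}^D_k)$ is constant in $\beta$ and standard perturbation theory for spectral projectors yields continuity of $P^{D,T}_k$ in $\beta$. At $\beta = 0$ the conjugating factors reduce to identities, and in the doubled-space picture the image of $\bE^\cL_k$ becomes precisely the span of $w_{k\bar{k}}$ tensored with arbitrary operators outside of $k\bar{k}$. Hence on any fixed finite region $S$, $-\tilde{\cL}^D_S$ tends to $\sum_{k \in S}(\id - w_{k\bar{k}})$, a sum of commuting projectors with spectral gap one.

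Finally, I would apply Knabe's bound (Eq.~\eqref{eqn:Knabe}) to the frustration-free projector Hamiltonian $-\tilde{\cL}^D_\Lambda$: there exist an integer $N(r,d)$ and a real $\lambda(r,d) < 1$, depending only on the lattice geometry and interaction range, such that the global gap is controlled by $\min_{|S|=N}\Delta(-\tilde{\cL}^D_S) - \lambda$. Using the continuity established above and the fact that this minimum equals one at $\beta = 0$, there exists $T_c(r,d)$ such that $\min_{|S|=N}\Delta(-\tilde{\cL}^D_S) > (1+\lambda)/2$ for every $T \geq T_c$, producing a $|\Lambda|$-independent lower bound on $\Delta(\cL^D_\Lambda)$.

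The main obstacle is the high-temperature analysis of $\hat{\cL}^D_k$, which is more subtle than in the Heat-Bath setting because the Davies generator carries a non-trivial Bohr-frequency structure inherited from $H$ rather than being a simple depolarizing map. The rescue is that Knabe's argument only requires information about the kernel of the local terms and the spectral projector onto the non-zero eigenspace: the former has constant dimension in $\beta$ by local primitivity, and the latter is identified explicitly at $\beta = 0$ with $\id - w_{k\bar{k}}$, independent of the detailed spectrum of $H$ or the choice of bath coefficients.
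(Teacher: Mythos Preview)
Your proposal is correct and follows the same overall template as the paper (pass to the symmetrized frustration-free Hamiltonian on the doubled space, identify the $\beta = 0$ limit, then apply Knabe's criterion by continuity). The genuine difference lies in how the $\beta \to 0$ limit is handled. The paper writes out $\hat{\cL}^D_\Lambda$ explicitly at $\beta = 0$ and, because the Lindblad operators $S_{\alpha(k)}(\omega)$ still carry the full Bohr-frequency structure of $H$, uses Parseval's theorem to convert the $\omega$-sum into a finite time-sum, obtaining a unitary conjugate of $\sum_{k,\alpha(k)}(S_{\alpha(k)} \otimes \bar{S}_{\alpha(k)} - \1)$ whose gap is then read off directly. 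You bypass this computation entirely by working at the level of the spectral projectors $P^{D,T}_k$: since local primitivity (Lemma~\ref{lem:DaviesGen}) pins down the kernel of $\cL^D_k$ as the operators acting trivially on site $k$ for every $\beta$, and since at $\beta = 0$ the conjugating factor $\rho^{1/4}$ becomes trivial, the projector $P^{D,T=\infty}_k$ is immediately $\id - w_{k\bar{k}}$ without ever touching the Fourier structure of the Davies map. Your route is shorter and more conceptual; the paper's route has the virtue of exhibiting the $\beta = 0$ generator concretely and making the gap constant explicit (at the cost of the extra assumption that $\chi_{\alpha(k)}(\omega)$ is independent of $\alpha,k,\omega$).
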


\begin{proof}

The proof is similar in spirit to the case of the of the Heat-Bath generator, so we will only expand on the aspects where the proofs differ. Direct evaluation shows that 
\bea \hat{\cL}^D_\Lambda&=& \sum_{k\in\Lambda,\alpha(k)}\sum_{\omega\geq0}\chi_{\alpha(k)}(\omega)e^{-\beta \omega/2}\left(S_{\alpha(k)}(\omega)\otimes \bar{S}_{\alpha(k)}(\omega)+S^\dag_{\alpha(k)}(\omega)\otimes \bar{S}^\dag_{\alpha(k)}(\omega)\right.\nonumber\\
&&-\half(S_{\alpha(k)}(\omega)S^\dag_{\alpha(k)}(\omega)\otimes \1+\1\otimes \bar{S}_{\alpha(k)}(\omega)\bar{S}^\dag_{\alpha(k)}(\omega))e^{\beta \omega/2} \nonumber \\
&&-\left.\half(S^\dag_{\alpha(k)}(\omega)S_{\alpha(k)}(\omega)\otimes \1+\1\otimes \bar{S}^\dag_{\alpha(k)}(\omega)\bar{S}_{\alpha(k)}(\omega))e^{-\beta \omega/2}\right)\eea

As $T\rightarrow \infty$, $-\hat{\cL}_\Lambda^D$ converges to 

\bea \hat{\cL}^D_\Lambda&=& -\sum_{k\in\Lambda,\alpha(k)}\sum_{\omega\geq0}\half\chi_{\alpha(k)}(\omega)\left(|S_{\alpha(k)}(\omega)\otimes\1-\1\otimes \bar{S}_{\alpha(k)}(\omega)|^2\right.\nonumber\\
&&+\left.|S^\dag_{\alpha(k)}(\omega)\otimes\1-\1\otimes \bar{S^\dag}_{\alpha(k)}(\omega)|^2\right)\eea

Now, recall that 

\be e^{-i t H}S_{\alpha(k)} e^{i t H}=\sum_\omega e^{i t \omega} S_{\alpha(k)}(\omega),\ee
hence the sum in $\omega$ is finite and independent of $|\Lambda|$, since $e^{-i t H}S_{\alpha(k)} e^{i t H}$ is local. 
We now want to use Parseval's theorem to show that $\hat{\cL}^D_\Lambda$ is gapped (for $\beta=0$). We will assume for simplicity that $\chi_{\alpha(k)}(\omega)$ is a constant independent of $\alpha,k,\omega$. 

\bea \hat{\cL}^D_\Lambda&=& -\half\chi\sum_{k\in\Lambda,\alpha(k)}\sum_{t\geq0}\left(|S_{\alpha(k)}(t)\otimes\1-\1\otimes \bar{S}_{\alpha(k)}(t)|^2\right.\nonumber\\
&&+\left.|S^\dag_{\alpha(k)}(t)\otimes\1-\1\otimes \bar{S^\dag}_{\alpha(k)}(t)|^2\right)\\
&=& 2\chi\sum_{k\in\Lambda,\alpha(k)}\sum_{t\geq0}e^{-i t H}\otimes e^{i t \bar{H}}(S_{\alpha(k)}\otimes \bar{S}_{\alpha(k)}-\1)e^{i t H}\otimes e^{-i t \bar{H}}\\
&=& 2\chi\sum_{t\geq0}e^{-i t H}\otimes e^{i t \bar{H}}\left(\sum_{k\in\Lambda,\alpha(k)}(S_{\alpha(k)}\otimes \bar{S}_{\alpha(k)}-\1)\right)e^{i t H}\otimes e^{-i t \bar{H}}\eea
where now the sum on $t$ is discrete and has a finite number of terms, by Parseval's theorem. Since the kernel of the central term is left unchanged by the time evolution, the gap of $\hat{\cL}_\Lambda^D$ is always larger than
\be \min_t {\rm gap}\left(e^{-i t H}\otimes e^{i t \bar{H}}\left(\sum_{k\in\Lambda,\alpha(k)}(S_{\alpha(k)}\otimes S_{\alpha(k)}-\1)\right)e^{i t H}\otimes e^{-i t \bar{H}}\right),\ee
but this is just a constant. Hence, the infinite temperature Davies map is gapped. 

Now invoking the same arguments leading up to Eqn. (\ref{eqn:Knabe}), we get that there exists a critical temperature, independent of the volume $|\Lambda|$, above which the Davies generators are gapped. 

\end{proof}

\section{Outlook}

We have introduced a unified framework for analyzing quantum Gibbs samplers of Hamiltonians with commuting local terms. This includes two independent prescriptions for constructing local quantum dynamical semigroups (i.e. Gibbs samplers) that uniquely drive the system to the Gibbs state of a given commuting Hamiltonian $H$. Associated to each Gibbs sampler, we construct local projectors onto the Gibbs state. The main result of the paper is a theorem which shows the equivalence between the rapid time convergence of the Gibbs sampler, and a new form of strong exponential clustering in the Gibbs state. We also explore how this new strong form of clustering is connected to more convensional notions of correlation decay. Finally, building upon the main theorem, we show that all Gibbs samplers of commuting Hamiltonians on a one dimensional lattice have a gap which is independent of the system size. Above a \mjk{system size independent} critical temperature, this holds true also for higher dimensional lattice models. 

These results are important and useful for a number of reasons. The two Gibbs samplers that we analyze serve complementary purposes in the literature. The Davies generators are meant to model the thermal dynamics that naturally emerge for a system weakly interacting with a thermal reservoir. This situation is very generic, especially for quantum optics based experiments, hence our analysis potentially provides crucial information on time scales for optical lattice simulators, and related setups. Secondly, the heat bath generators are a simple constructive semigroup which could be useful for quantum simulations. For certain tasks, they are  easier to work with than the Davies maps (ex: Ref.  \cite{ZiggyMaj1}). Finally, as outlined below, in the form of open questions, our main theorem provides a structural backbone relating several important notions, including: criticality, stability, topological order, classicality, etc.  

One major drawback of our framework is that it is not very well suited for Hamiltonians with non-commuting local terms. Indeed, it is easy to see that in general $\cL^D$, $\cL^H$, $\bE^\rho$, and $\bE^\cL$ all become non-local when $H$ is non-commuting, and very little of the framework can be recovered. It would be very interesting to explore extensions of our results to non-commuting Hamiltonians, as it would incorporate many of the more interesting models in quantum statistical mechanics. Sill in the setting of commuting Gibbs samplers, whether the spectral gap of the Liouvillian is equivalent to the Log-Sobolev inequality is an important open question. This equivalence holds for classical Gibbs samplers, and would allows a significant strengthening of Theorems \ref{thm:main} and \ref{thm:converse}. If one were able to extend the theory to Log-Sobolev inequalities, then it would be possible to show that Gibbs samplers have a relaxation time which is either exponential in the number of sites or logarithmic, i.e. there is no intermediate mixing regime \cite{Yoshida2}. 

Another very interesting direction to be explored in more detail is the connection between Gibbs samplers and frustration-free Hamiltonians outlined in Sec. \ref{sec:converse}. Many relevant problems in Quantum Hamiltonian complexity \cite{QHC1,QHC2} involve frustration-free Hamiltonians, and it is conceivable that by exploiting this new connection, the fields of quantum Gibbs samplers and Quantum Hamiltonian complexity can mutually benefit from their respective methods. In particular, it would be very interesting to understand to what extent the theory of Hypercontractive semigroups \cite{Hyper2,ZiggyHyper,LogSobolev} can be applied to problems of Hamiltonian complexity. \\

We conclude with a list of questions and conjectures, together with some compelling potential implications:

\bigskip
\textbf{1. The equivalence of weak and strong clustering in higher dimensions}

Theorem \ref{thm:1Dequiv} shows that for 1D systems, the strong and weak clustering conditions are equivalent, up to a multiplicative constant. Is this also true in higher dimensions? Although the proof of theorem \ref{thm:1Dequiv} clearly does not carry through to higher dimension because it relies heavily on a Schmidt decomposition of the boundary terms, there are reasons to believe that the equivalence could extend to two-dimensional lattice systems. 
Indeed, the conditional covariance, and the strong clustering condition, are an attempt to recover the situation when a state is clustering on a subset of the full lattice independently of the ``boundary conditions" that are chosen around the lattice restriction. In classical lattice systems, phase transitions can be driven along the boundary of a material whose bulk is in a thermally non-critical phase \cite{MartinelliReview}. Such a phenomenon has been coined a \textit{boundary phase transition}, and is believed to be a appear in quite natural models in three and higher spacial dimensions \cite{CzechModel}. However, in two-dimensional classical spin systems, this phenomenon cannot occur \cite{WeakMixingStrongMixing}. The heuristic reason for this is that the boundary of a two dimensional lattice model is effectively a one-dimensional lattice spin system, for which we know that no critical behaviour can be found. 

Hence it is tempting to conjecture the following: \textit{The Gibbs state of a commuting Hamiltonian on a two dimensional lattice satisfies weak clustering if, and only if, it satisfies strong clustering.}

\bigskip

\textbf{2. The behaviour of correlations as the temperature goes to zero}

Physicists study the ground states of Hamiltonians because in many situations it is believed that the actual state of the experiment is a Gibbs state at very low temperature, and the essential physics is governed by the properties of the ground state. The framework of Gibbs samplers provides a good setting for testing or confirming this intuition. In particular, if the ground state of a commuting Hamiltonian satisfies certain constraints on spacial correlations then one might expect that this still holds true at small non-zero temperature. 

We therefore raise the following questions: (i) if the ground state of a local commuting Hamiltonian satisfies clustering of correlations, does the same hold true for the Gibbs state at sufficiently low non-zero temperature? (ii) if a local commuting Hamiltonian satisfies LTQO, does the Gibbs sampler satisfy local indistinguishability at low temperatures? 

Partial answers can be given to these questions (in the operator norm), by using the approximation results in Ref. \cite{Has05a}, however a full answer is still elusive. One of the bottlenecks is that the in the Gibbs sampler setting, we have access to the machinery of $\bL_p$ norms whereas for (ground)-states the only natural norms are the operator norm for observables, and the inner product for states. Our theorems all depend on $\bL_2$ bounds, so a proper interpolation between zero temperature  and  finite temperature results is not obvious.  A candidate for an $\bL_p$ \mjk{norm-like object} on states is the following: for $f\in\cA_\Lambda$ and a pure state $\varphi\in\cS_\Lambda$, define $||f||^p_{p,\varphi}:= \bra{\varphi}|f|^p\ket{\varphi}$ with $1\leq p\leq \infty$. However, most interesting commuting Hamiltonian models, such as those exhibiting topological order, have a degenerate ground subspace, and with no preferential state it is hard to work with the \mjk{norm-like} $||f||_{p,\varphi}$. Hence, important obstacles still remain. 

\bigskip

\textbf{3. Absence of self-correction for 2D commuting Hamiltonians}

If the above two questions turn out to be true, then it would likely lead to a very strong result in the theory of self-correcting quantum memories: topological order of $2D$ commuting Hamiltonians on a lattice is unstable under thermal noise.

By self-correcting memory, we mean a Hamiltonian with a topologically stable ground subspace that remains a metastable subspace under thermal noise for a time that grows exponentially with the system size $L$. The thermal noise is usually modelled by Davies generators \cite{QMemory2,QMemory3}, or by a diagonal variant of it \cite{Loss}. The prototypical example of a genuinely self-correcting memory is the 4D toric code \cite{Preskill}. There have been a number of no-go theorems for self-correcting memories for 2 and 3 dimensional stabilizer codes \cite{Beni,BravyiTerhal,HastingsT0,PoulinNoGo}. All of the existing no-go theorems prove that under certain assumptions on the Hamiltonian, it only takes a constant amount of energy to flip from one logical eigenstate to another. According to the heuristic Arrhenius Law, which states that the survival time scales as an exponential of the free-energy barrier, this would prevent the system from being self-correcting. 

Arrhenius' law is known to be neither necessary nor sufficient in general for the existence of metastable states, so it would be desirable to have direct proofs that certain classes of Hamiltonians are not good quantum memories. Showing that the Davies generators of all 2D commuting Hamiltonians on a lattice are gapped would provide a definitive blow to self-correction in 2D, and would nicely complement the results in \cite{HastingsT0,PoulinNoGo,Kristan}.  

Assuming questions 1) and 2) are shown to be true, the argument for a no-go theorem would go as follows: 

\textit{If $H_\Lambda$  satisfies a specific form of Local Topological Quantum Order (LTQO), similar to the one defined in Ref. \cite{Angelo}, then the Davies generators $\cL^D$ are gapped, and hence, no state (quantum or classical) can survive for a time longer than polynomial in the system size. } 

We first define $LTQO_p$. Let $H_A$ be a Hamiltonian restricted to subset $A\in\Lambda$, and let $B\subseteq A$. If for any two ground states $\phi,\varphi$ of $H_A$, we have 
 \be |\avr{\phi| f|\phi}-\avr{\varphi| f|\varphi}|\leq c\avr{\phi| |f|^p|\phi}^{1/p} e^{-d(B,\partial A)/\xi},\label{eqn:LTQO}\ee
for any local observable $f\in\cA_B$,  and some constants $c,\xi>0$, then we say that $H_A$ satisfies $LTQO_p$. In particular all stabilizer hamiltonians satisfy $LTQO_p$ for all $p\geq 1$ since the RHS of Eqn. (\ref{eqn:LTQO}) is strictly zero beyond some constant distance. 

It is not difficult to see  that if a Hamiltonian satisfies $LTQO_p$, then its ground state is clustering in the following sense: there exist constants $c,\xi>0$ such that 
 \be |\bra{\phi} f g\ket{\phi}-\bra{\phi} f \ket{\phi}\bra{\phi} g\ket{\phi}| \leq c \avr{\phi| |f||\phi}\avr{\phi| |g|^p|\phi}^{1/p} e^{-d(\Sigma_f,\Sigma_g)/\xi}\label{eqn:ccp2}\ee
 
If one is then able to show that clustering in the form of Eqn. (\ref{eqn:ccp2}) also holds for non-zero temperature (i.e. question 2. above), then one would recover weak clustering in the Gibbs state for $p=2$. If in turn, weak and strong clustering are equivalent for 2D Gibbs samplers (question 1.), then one gets that topological order ($LTQO_2$) implies that the Gibbs sampler is gapped for all finite temperatures. 

This type of reasoning, sketchy at this point, shows the power of our main theorems (Thms. \ref{thm:main} and \ref{thm:converse}) in terms of relating static and dynamical properties of spin systems in thermal equilibrium.\\

\textbf{Acknowledgements}: We gratefully acknowledge fruitful discussions with T.~Cubitt, J.~Eisert, M.~Friesdorf, A.~Lucia, F.~Pastawski, K.~Temme, and R.~F.~Werner. MJK was supported by the Alexander von Humboldt foundation and by the EU (SIQS, RAQUEL). FGSLB was supported by an EPSRC Early Career fellowship. 

%%%%%%%%%%%%%%%%%%%%%%%%%%%%%%%%%%%%%%%%%%%%%%

%%%%%%%%%%%%%%%%%%%%%%%%%%%%%%%%%%%%%%%%%%%%%%%%%%%%%%

\bibliographystyle{apsrev}

\section*{Appendix}

\textbf{Proof of Lemma \ref{lemma:gap}:}
We consider the expression for $\lambda_\Lambda(A)$, and note that

\bea \lambda_\Lambda(A)&=& \inf_{f\in\cA_\Lambda}\frac{-\tr{\Gamma_{\rho}(f)\cL^D_A(f)}}{\tr{\cQ_A(f)\Gamma_{\rho}(\cQ_A(f))}}\\
&=& \inf_{g\in\cA_\Lambda}\frac{-\tr{g\tilde{\cL}^D_A(g)}}{\tr{\tilde{\cQ}_A(g)\tilde{\cQ}_A(g)}},\label{geneigen}\eea
where we made the replacement $g\equiv\Gamma^{1/2}_{\rho}(f)$, and  $\Gamma_\rho (f)=\rho^{1/2}f\rho^{1/2}$. We defined the operators $\cQ_A(f):=f-\bE_A(f)$, $\tilde{\cL}^D_A=\Gamma^{1/2}_{\rho}\cL^D_A \Gamma^{-1/2}_{\rho}$, and likewise  $\tilde{\cQ}_A=\Gamma^{1/2}_{\rho}\cQ_A \Gamma^{-1/2}_{\rho}$. We note that $\tilde{\cL}^D_A$ and $\tilde{\cQ}_A$ are both hermitian operators, so Eqn. (\ref{geneigen}) is a generalized eigenvalue equation which can be recasted as
\be \lambda_\Lambda(A) = \inf\{\lambda | \det(\tilde{\cL}^D_A+\lambda \tilde{\cQ}_A^2)=0\}\label{detformula}\ee

Now, we will show that $\tilde{\cQ}_A$ acts non-trivially only on $A_\partial$ (the same is true for the pair  $(\bE^\rho,\cL^H_\Lambda)$). We will write the subscript of $\rho$ explicitly so as to avoid confusion. 
Indeed, for any $g\in\cA_\Lambda$,
\bea \tilde{\cQ}_A(g)&=& g - \Gamma_{\rho_\Lambda}^{1/2}(\bE^\cL_A(\Gamma_{\rho_\Lambda}^{-1/2}(g)))\\
&=& g - \rho_\Lambda^{1/4}\bE_A^\cL(\rho_\Lambda^{-1/4}g\rho_\Lambda^{-1/4})\rho_\Lambda^{1/4}\\
&=& g- \rho_{A_\partial}^{1/4}\rho_{(A_\partial)^c}^{1/4}\bE_A^\cL(\rho_{(A_\partial)^c}^{-1/4} \rho_{A_\partial}^{-1/4}g\rho_{A_\partial}^{-1/4}\rho_{(A_\partial)^c}^{-1/4})\rho_{(A_\partial)^c}^{1/4} \rho_{A_\partial}^{1/4}\\
&=& g - \Gamma_{\rho_{A_\partial}}^{1/2}(\bE_A^\cL(\Gamma_{\rho_{A_\partial}}^{-1/2}(g)))\\
&=& \tilde{\Psi}_{A_\partial}\otimes {\rm id}_{(A_\partial)^c}(g)\eea
for some hermitian operator $ \Psi_{A_\partial}$ acting only on $A_\partial$. Similarly, we get that $\tilde{\cL}^D_A$ can be written as $\tilde{\cK}_{A_\partial}\otimes\1_{\bar{A_\partial}}$ for some hermitian operator $\tilde{\cK}_{A_\partial}$. 
Then, Eqn. (\ref{detformula}) can be rewritten as 
\be \lambda_\Lambda(A) = \inf\{\lambda | \det((\tilde{\cK}_{A_\partial}+\lambda \tilde{\Psi}_{A_\partial}^2)\otimes {\rm id}_{(A_\partial)^c})=0\}\ee
Recalling now that $\det(A\otimes \1)=\det(A)^n$, where $n$ is the dimension of matrix $A$, we get that 
\bea \lambda_\Lambda(A) &=& \inf\{\lambda | \det((\tilde{\cK}_{A_\partial}+\lambda \tilde{\Psi}_{A_\partial}^2)\otimes {\rm id}_{(A_\partial)^c})=0\}\\
&=& \inf\{\lambda | \det(\tilde{\cK}_{A_\partial}+\lambda \tilde{\Psi}_{A_\partial}^2)=0\}=\lambda_{A_\partial}(A),\eea
completing the proof.
\qed

\bigskip

\textbf{Proof of Lemma \ref{lem:iteration}:}
Let $C:=[a_1,b_1]\times...\times[a_d,b_d]\in\cR_k^d/\cR_{k-1}^d$. We can assume that $a_n=0$ and $b_n\leq l_{k+n}$ for $n=1,...,d$. Then necessarily $b_d>l_k$, since otherwise $C\in \cR_{k-1}^d$. Define
\bea
A_i&:=& [0,b_1]\times...\times[0,b_{d-1}\times[0,\frac{b_d}{2}+\frac{2i}{8}\sqrt{l_k}],\\
B_i&:=&[0,b_1]\times...\times[0,b_{d-1}]\times[\frac{b_d}{2}+\frac{21-1}{8}\sqrt{l_k},b_d]\eea
We have $d(C/A-i,C/B-1)=\frac{1}{8}\sqrt{l_k}$. Furthermore,
\be \frac{b_d}{2}+\frac{2 s_k}{8}\sqrt{l_k}\leq \frac{l_{d+k}}{2}+\frac{1}{4}l_k^{5/6}\leq\frac{3l_k}{4}+\frac{1}{4}l_k^{5/6}\leq l_{k-1+d}\ee
which together with the fact that $l_k<b_d$, implies that $A_i$ and $B_i$ are both subsets of $C$. Moreover, since for all $i=1,...,s_k$ 
\be \frac{b_d}{2}+\frac{2i}{8}\sqrt{l_k}\leq l_k,~~~~b\leq l_{k+1},...,b_{d-1}\leq l_{k-1+d}\ee
we find that $A_i$ belongs to $\cR_{k-1}^d$. The sets $B_i$'s also belong to $\cR_{k-1}^d$, since they are smaller than the $A_i$'s. \qed

\end{document}